\theoremstyle{plain}
\newtheorem{thm}{Theorem}
\newtheorem{pro}[thm]{Proposition}
\theoremstyle{definition}
\newcommand{\eq}[1]{(\hyperref[eq:#1]{\ref*{eq:#1}})}
\renewcommand{\sec}[1]{\hyperref[sec:#1]{Section~\ref*{sec:#1}}}
\newcommand{\thrm}[1]{\hyperref[thrm:#1]{Theorem~\ref*{thrm:#1}}}
\newcommand{\lemm}[1]{\hyperref[lemm:#1]{Lemma~\ref*{lemm:#1}}}
\newcommand{\prop}[1]{\hyperref[prop:#1]{Proposition~\ref*{prop:#1}}}
\newcommand{\corr}[1]{\hyperref[corr:#1]{Corollary~\ref*{corr:#1}}}
\newcommand{\fig}[1]{\hyperref[fig:#1]{~\ref*{fig:#1}}}
\newcommand{\deff}[1]{\hyperref[deff:#1]{~\ref*{deff:#1}}}
\newcommand{\mA}{\mathcal{A}}
\newcommand{\mE}{\mathcal{E}}
\newcommand{\mN}{\mathcal{N}}
\newcommand{\mU}{\mathcal{U}}
\newcommand{\mT}{\mathcal{T}}
\newcommand{\mD}{\mathcal{D}}
\newcommand{\mF}{\mathcal{F}}
\newcommand{\mL}{\mathcal{L}}
\newcommand{\mO}{\mathcal{O}}
\newcommand{\mB}{\mathcal{B}}
\newcommand{\mP}{\mathcal{P}}
\newcommand{\mX}{\mathcal{X}}
\newcommand{\mY}{\mathcal{Y}}
\newcommand{\mZ}{\mathcal{Z}}
\newcommand{\mbD}{\mathbb{D}}
\newcommand{\mbE}{\mathbb{E}}
\newcommand{\mbI}{\mathbb{I}}
\newcommand{\mbO}{\mathbb{O}}
\newcommand{\mbS}{\mathbb{S}}
\DeclareMathOperator{\tr}{tr}
\DeclareMathOperator{\Tr}{Tr}
\DeclareMathOperator{\id}{id}
\newcommand{\ketbra}[2]{|{#1}\rangle\!\langle{#2}|}
\newcommand{\ba}{\begin{eqnarray}}
\newcommand{\ea}{\end{eqnarray}}
\newcommand{\bann}{\begin{eqnarray*}}
\newcommand{\eann}{\end{eqnarray*}}
\newcommand{\bal}{\begin{equation}\begin{aligned}}
\newcommand{\eal}{\end{aligned}\end{equation}}
\newcommand{\dm}[1]{\ketbra{#1}{#1}}
\newcolumntype{L}[1]{>{\raggedright}p{#1}}
\newcolumntype{C}[1]{>{\centering}p{#1}}
\newcolumntype{R}[1]{>{\raggedleft}p{#1}}
\newcolumntype{D}{>{\centering\arraybackslash}X}
\newcommand{\<}{\left\langle}
\renewcommand{\>}{\right\rangle}
\newcommand{\sbar}{\;\rule{0pt}{9.5pt}\right|\;}
\newcommand{\lset}{\left\{\left.}
\newcommand{\rset}{\right\}}
\DeclareMathOperator{\QEM}{QEM}
\DeclareMathOperator{\GHZ}{GHZ}
\begin{document}

\title{Universal Sampling Lower Bounds for Quantum Error Mitigation}

\author{Ryuji Takagi}
\email{ryuji.takagi@phys.c.u-tokyo.ac.jp}
\affiliation{Department of Basic Science, The University of Tokyo, Tokyo 153-8902, Japan}
\affiliation{Nanyang Quantum Hub, School of Physical and Mathematical Sciences, Nanyang Technological University, 637371, Singapore}

\author{Hiroyasu Tajima}
\email{hiroyasu.tajima@uec.ac.jp}
\affiliation{Department of Communication Engineering and Informatics, University of Electro-Communications, 1-5-1 Chofugaoka, Chofu, Tokyo, 182-8585, Japan}
\affiliation{JST, PRESTO, 4-1-8 Honcho, Kawaguchi, Saitama, 332-0012, Japan}

\author{Mile Gu}
\email{mgu@quantumcomplexity.org}
\affiliation{Nanyang Quantum Hub, School of Physical and Mathematical Sciences, Nanyang Technological University, 637371, Singapore}
\affiliation{Centre for Quantum Technologies, National University of Singapore, 3 Science Drive 2, 117543, Singapore}
\affiliation{MajuLab, CNRS-UNS-NUS-NTU International Joint Research Unit UMI 3654, Singapore}

\begin{abstract}
Numerous quantum error-mitigation protocols have been proposed, motivated by the critical need to suppress noise effects on intermediate-scale quantum devices. Yet, their general potential and limitations remain elusive. In particular, to understand the ultimate feasibility of quantum error mitigation, it is crucial to characterize the fundamental sampling cost --- how many times an arbitrary mitigation protocol must run a noisy quantum device. Here, we establish universal lower bounds on the sampling cost for quantum error mitigation to achieve the desired accuracy with high probability. Our bounds apply to general mitigation protocols, including the ones involving nonlinear postprocessing and those yet-to-be-discovered. The results imply that the sampling cost required for a wide class of protocols to mitigate errors must grow exponentially with the circuit depth for various noise models, revealing the fundamental obstacles in the scalability of useful noisy near-term quantum devices.
\end{abstract}


\maketitle


\textit{\textbf{Introduction}} ---
As recent technological developments have started to realize controllable small-scale quantum devices, a central problem in quantum information science has been to pin down what can and cannot be accomplished with noisy intermediate-scale quantum (NISQ) devices~\cite{Preskill2018quantum}.
One of the most relevant issues in understanding the ultimate capability of quantum hardware is to characterize how well noise effects could be circumvented. 
This is especially so for NISQ devices, as today's quantum devices generally cannot accommodate full quantum error correction that requires scalable quantum architecture.
As an alternative to quantum error correction, quantum error mitigation has recently attracted much attention as a potential tool to help NISQ devices realize useful applications~\cite{McArdle2020quantum,Endo2021hybrid}. 
It is thus of primary interest from practical and foundational viewpoints to understand the ultimate feasibility of quantum error mitigation.

Quantum error mitigation protocols generally involve running available noisy quantum devices many times. The collected data is then post-processed to infer classical information of interest.
While this avoids the engineering challenge in error correction, it comes at the price of \emph{sampling cost} --- computational overhead in having to sample a noisy device many times. 
This sampling cost represents the crucial quantity determining the feasibility of quantum error mitigation.
If the required sampling cost becomes too large, then such quantum error mitigation protocol becomes infeasible under a realistic time constraint. 
Various prominent quantum error mitigation methods face this problem, where sampling cost grows exponentially with circuit size~\cite{Yuan2016simulating,Temme2017error,Li2017efficient,Endo2018practical,Bravyi2021mitigating}. 
The crucial question then is whether there is hope to come up with a new error mitigation strategy that avoids this hurdle or if this is a universal feature shared by all quantum error mitigation protocols. 
To answer this question, we need a characterization of the sampling cost that is universally required for the general error-mitigation protocols, which has hitherto been unknown. 

Here, we provide a solution to this problem. We derive lower bounds for the number of samples fundamentally required for general quantum error mitigations to realize the target performance.
We then show that the required samples for a wide class of mitigation protocols to error-mitigate layered circuits under various noise models --- including the depolarizing and stochastic Pauli noise --- must grow exponentially with the circuit depth to achieve the target performance.  
This turns the conjecture that quantum error mitigation would generally suffer from the exponential sampling overhead into formal relations, extending the previous results on the exponential resource overhead required for noisy circuits without postprocessing~\cite{Unruh1995maintaining,Aharonov1996limitations,Palma1996quantum}.
We accomplish these by employing an information-theoretic approach, which establishes the novel connection between the state distinguishability and operationally motivated error-mitigation performance measures. 
Our results place the fundamental limitations imposed on the capability of general error-mitigation strategies that include existing protocols~\cite{Temme2017error,Li2017efficient,McClean2017hybrid,Bonet-Monroig2018low-cost,Endo2018practical,Mcclean2020decoding,Koczor2021exponential,Huggins2021virtual,Yoshioka2021generalized,Czarnik2021errormitigation,Strikis2021learning-based,Lowe2021unified,Czarnik2021qubit-efficient,O'Brien2021error,Huo2022dual-state,Bultrini2021unifying,Bravyi2021mitigating,Cai2021quantum,Wang2021mitigating,Cai2021multi-exponential,Mari2021extending,Mezher2022mitigating,Fontana2022spectral,Xiong2022quantum} and the ones yet to be discovered, being analogous to the performance converse bounds established in several other disciplines --- such as thermodynamics~\cite{Carnot1824reflections,Landauer1961irreversibility,Brandao2015second}, quantum communication~\cite{Bennett1999entanglement,Pirandola2017fundamental}, and quantum resource theories~\cite{Regula2021fundamental,Fang2022nogo} --- that contributed to characterizing the ultimate operational capability allowed in each physical setting.

Our work complements and extends several recent advancements in the field.
Ref.~\cite{Takagi2021fundamental} introduced a general framework of quantum error mitigation and established lower bounds for the maximum estimator spread, i.e., the range of the outcomes of the estimator, imposed on all error mitigation in the class, which provides a \emph{sufficient} number of samples to ensure the target accuracy. 
Those bounds were then employed to show that the maximum spread grows exponentially with the circuit depth to mitigate local depolarizing noise. 
Ref.~\cite{Wang2021can} showed a related result where for the class of error-mitigation strategies that only involve linear postprocessing, in which the target expectation value can be represented by a linear combination of the actually observed quantities, either the maximum estimator spread or the sample number needs to grow exponentially with the circuit depth to mitigate local depolarizing noise. 
The severe obstacle induced by noise in showing  a quantum advantage for variational quantum algorithms has also recently been studied~\cite{Wang2021noise-induced,Franca2021limitations,DePalma2023limitations}.
Our results lift the observations made in these works to rigorous bounds for the \emph{necessary} sampling cost required for general error mitigation, including the ones involving nonlinear postprocessing that constitute a large class of protocols~\cite{McClean2017hybrid,Bonet-Monroig2018low-cost,Endo2018practical,Mcclean2020decoding,Koczor2021exponential,Huggins2021virtual,Yoshioka2021generalized,Cai2021quantum,Cai2021multi-exponential,Mezher2022mitigating,Fontana2022spectral,Czarnik2021qubit-efficient,O'Brien2021error,Bultrini2021unifying,Nation2021scalable,Strikis2021learning-based,Xiong2022quantum}.


\textit{\textbf{Framework}} --- 
Suppose we wish to obtain the expectation value of an observable $A\in\mbO$ for an ideal state $\rho\in\mbS$ where $\mbO$ and $\mbS$ are some sets of observables and states. 
We assume that the ideal quantum state $\rho$ is produced by a unitary quantum circuit $\mU$ applied to the initial state $\rho_{\rm ini}\in\mbS_{\rm in}$ as $\rho=\mU(\rho_{\rm ini})$ where $\mbS_{\rm in}$ is the set of possible input states.
The noise in the circuit, however, prevents us from preparing the state $\rho$ exactly. 
We consider quantum error mitigation protocols that aim to estimate the true expectation value under the presence of noise in the following manner~\cite{Takagi2021fundamental} (see also Fig.~\ref{fig:framework}).

In the mitigation procedure, one can first modify the circuit, e.g., use a different choice of unitary gates with potential circuit simplification, apply nonadaptive operations (enabling, e.g., dynamical decoupling~\cite{Viola1998dynamical,Viola1999dynamical} and Pauli twirling~\cite{Li2017efficient}), and supply ancillary qubits --- the allowed modifications are determined by the capability of the available device.
Together with the noise present in the modified circuit, this turns the original unitary $\mU$ into some quantum channel $\mF$, which produces a \emph{distorted state} $\rho'$. 
The distorted state can be represented in terms of the ideal state $\rho$ by $\rho'=\mE(\rho)$ where we call $\mE\coloneqq \mF\circ\mU^\dagger$ an \emph{effective noise channel}.

The second step consists of collecting $N$ samples $\{\mE_n(\rho)\}_{n=1}^N$ of distorted states represented by a set of effective noise channels $\mbE\coloneqq\{\mE_n\}_{n=1}^N$ and
applying a \emph{trailing quantum process} $\mP_A$ over them.
The effective noise channels in $\mbE$ can be different from each other in general, as noisy hardware could have different noise profiles each time, or could purposely change the noise strength~\cite{Buscemi2013twopoint,Temme2017error}.
The trailing process $\mP_A$ then outputs an estimate represented by a random variable $\hat E_A(\rho)$ for the true expectation value $\Tr(A\rho)$.  
The main focus of our study is the \emph{sampling number} $N$, the total number $N$ of distorted states used in the error mitigation process.

We quantify the performance of an error-mitigation protocol by how well the protocol can estimate the expectation values for a given set $\mbO$ of observables and a set $\mbS$ of ideal states, which we call \emph{target observables} and \emph{target states} respectively.
We keep the choices of these sets general, and they can be flexibly chosen depending on one's interest.
For instance, if one is interested in error mitigation protocols designed to estimate the Pauli observables (e.g., virtual distillation~\cite{Koczor2021exponential,Huggins2021virtual,Czarnik2021qubit-efficient}), $\mbO$ can be chosen as the set of Pauli operators. 
As the trailing process includes a measurement depending on the observable, an error-mitigation strategy with target observables $\mbO$ is equipped with a family of trailing processes $\{\mP_A\}_{A\in\mbO}$.
Similarly, our results hold for an arbitrary choice of $\mbS$, where one can, for instance, choose this as the set of all quantum states, which better describes the protocols such as probabilistic error cancellation~\cite{Buscemi2013twopoint,Temme2017error,Takagi2021optimal,Regula2021operationalapplication,Jiang2021physical,Sun2021mitigating,Piveteau2022quasiprobability}, or as the set of states in a certain subspace, which captures the essence of subspace expansion~\cite{McClean2017hybrid,Mcclean2020decoding,Yoshioka2021generalized}.

This framework includes many error-mitigation protocols proposed so far~\cite{Temme2017error,Li2017efficient,McClean2017hybrid,Bonet-Monroig2018low-cost,Endo2018practical,Mcclean2020decoding,Koczor2021exponential,Huggins2021virtual,Yoshioka2021generalized,Czarnik2021errormitigation,Strikis2021learning-based,Lowe2021unified,Czarnik2021qubit-efficient,O'Brien2021error,Huo2022dual-state,Bultrini2021unifying,Bravyi2021mitigating,Cai2021quantum,Wang2021mitigating,Cai2021multi-exponential,Mari2021extending,Mezher2022mitigating,Fontana2022spectral,Xiong2022quantum}.
It is worth noting that our framework includes protocols that involve nonlinear postprocessing of the measurement outcomes.  
Error-mitigation protocols typically work by (1) making some set of (usually Pauli) measurements for observables $\{O_i\}_i$, (2) estimating their expectation values $\{\<O_i\>\}_i$ for distorted states, and (3) applying a classical postprocessing function $f$ over them.
The protocols with linear postprocessing functions, i.e., the ones with the form $f(\<O_i\>_i)=\sum_i c_i \<O_i\>$, are known to admit simpler analysis~\cite{Wang2021can,Takagi2021fundamental}, but numerous protocols --- including virtual distillation~\cite{Koczor2021exponential,Huggins2021virtual,Czarnik2021qubit-efficient}, symmetry verification~\cite{Bonet-Monroig2018low-cost}, and subspace expansion~\cite{McClean2017hybrid,Mcclean2020decoding,Yoshioka2021generalized} --- come with nonlinear postprocessing functions. 
In our framework, the sampling number $N$ is the \emph{total} number of samples used, where we consider the output represented by $\hat E_A(\rho)$ as our final guess and thus do not generally assume repeating some procedure many times and take a statistical average. This enables us to have any postprocessing absorbed in the trailing process $\mP_A$, making our results valid for the protocols with nonlinear postprocessing functions.

We also remark that our framework includes protocols with much more operational power than existing protocols, as we allow the trailing process to apply any coherent interaction over all distorted states.
Our results thus provide fundamental limits on the sampling overhead applicable to an arbitrary protocol in this extended class of error-mitigation protocols.

\begin{figure}
    \centering
    \includegraphics[width=\columnwidth]{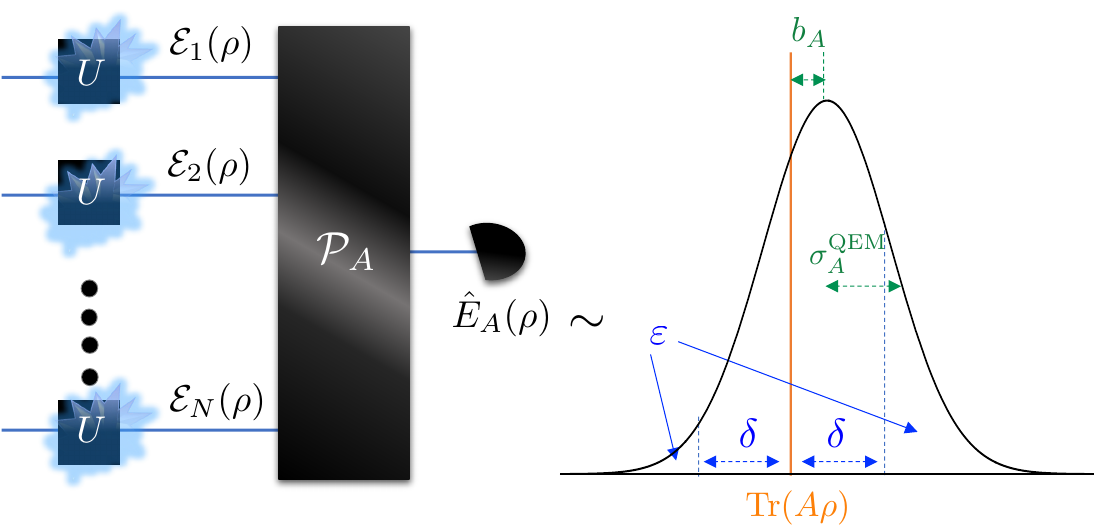}
    \caption{Framework of quantum error mitigation. For an ideal state $\rho\in\mbS$ and an observable $A\in\mbO$ of interest, we first prepare $N$ copies of distorted states $\{\mE_n(\rho)\}_{n=1}^N$, where $\mbE=\{\mE_n\}_{n=1}^N$ is  the set of effective noise channels. A trailing quantum process $\mP_A$ is then applied to $N$ distorted states, producing the final estimation of $\Tr(A\rho)$ represented by a random variable $\hat E_A(\rho)$. We quantify the error-mitigation performance in two ways by studying the property of the distribution of $\hat E_A(\rho)$; the first is the combination of the accuracy $\delta$ and the success probability $1-\varepsilon$, and the second is the combination of the bias $b_A(\rho)\coloneqq \<\hat E_A(\rho)\>-\Tr(A\rho)$ and the standard deviation $\sigma_A^{\QEM}$ of $\hat E_A(\rho)$.}
    \label{fig:framework}
\end{figure}


\textit{\textbf{Sampling lower bounds}} ---
We now consider the required samples to ensure the target performance. 
The performance of quantum error mitigation can be defined in multiple ways. 
Here, we consider two possible performance quantifiers that are operationally relevant. 

Our first performance measure is the combination of the accuracy of the estimate and the success probability.
This closely aligns with the operational motivation, where one would like an error mitigation strategy to be able to provide a good estimate for each observable in $\mbO$ and an ideal state in $\mbS$ at a high probability.  
This can be formalized as a condition 
\bal
 {\rm Prob}(|\Tr(A\rho)-\hat E_A(\rho)|\leq \delta)\geq 1-\varepsilon,\ \forall \rho\in\mbS,\,\forall A\in\mbO
 \label{eq:accuracy probability requirement}
\eal
where $\delta$ is the target accuracy and $1-\varepsilon$ is the success probability (see also Fig.~\ref{fig:framework}). 

The problem then is to identify lower bounds on the number $N$ of distorted states needed to achieve this condition as a function of $\delta$ and $\varepsilon$. 
We address this by observing that the trailing process of quantum error mitigation is represented as an application of a quantum channel and thus can never increase the state distinguishability. 
To formulate our result, let us define the observable-dependent distinguishability with respect to a set $\mbO$ of observables as 
\bal
D_\mbO(\rho,\sigma)\coloneqq \max_{A\in\mbO} |\Tr[A(\rho-\sigma)]|.
\label{eq:distinguishability observable dependent}
\eal
This quantity can be understood as the resolution in distinguishing two quantum states by using the measurements of the observables in $\mbO$.  
We note that when $\mbO=\{ A \,| 0\leq A \leq \mbI\}$ \footnote{Here, we consider the standard matrix inequality with respect to the positive semidefinite cone, i.e., $M_1\leq M_2$ for two matrices $M_1$ and $M_2$ means that the matrix $M_2-M_1$ is positive semidefinite.}, the quantity in \eqref{eq:distinguishability observable dependent} becomes the trace distance $D_{\tr}(\rho,\sigma)=\frac{1}{2}\|\rho-\sigma\|_1$~\cite{Nielsen2011quantum}. 

We then obtain the following sampling lower bounds applicable to an arbitrary given set $\mbE$ of effective noise channels. (Proof in Appendix~A~\footnote{See the Supplemental Material for detailed proofs and discussions of our main results, which includes Refs.~\cite{Fuchs1999cryptographic,Hiai1981sufficiency,Reeb2011Hibert,Kastoryano2013quantum,DePalma2021quantum,Gilchrist2005distance,Tajima2018uncertainty,Katsube2020fundamental,Hayashi2016quantum,Audenaert2005continuity,Carbone2015logarithmic,Kastoryano2016quantum,MullerHermes2016relative,MullerHermes2016entropy,Bardet2021onthemodified,Beigi2020quantum,Capel2020themodified,Hirche2020oncontraction,MullerLennert2013onquantum,Gorini1976completely,Temme2014hypercontractivity,Temme2015howfast,Capel2018quantum,Temme2010chi,qiskit_noise}}.)
\nocite{Fuchs1999cryptographic,Hiai1981sufficiency,Reeb2011Hibert,Kastoryano2013quantum,DePalma2021quantum,Gilchrist2005distance,Tajima2018uncertainty,Katsube2020fundamental,Hayashi2016quantum,Audenaert2005continuity,Carbone2015logarithmic,Kastoryano2016quantum,MullerHermes2016relative,MullerHermes2016entropy,Bardet2021onthemodified,Beigi2020quantum,Capel2020themodified,Hirche2020oncontraction,MullerLennert2013onquantum,Gorini1976completely,Temme2014hypercontractivity,Temme2015howfast,Capel2018quantum,Temme2010chi,qiskit_noise}

\begin{thm}\label{thm:sampling bound}
Suppose that an error-mitigation strategy achieves \eqref{eq:accuracy probability requirement} with some $\delta\geq 0$ and $0\leq \varepsilon\leq 1/2$ with $N$ distorted states characterized by the effective noise channels $\mbE=\{\mE_n\}_{n=1}^N$.
Then, the sample number $N$ is lower bounded as
\bal
 N &\geq \max_{\substack{\rho,\sigma\in\mbS\\D_\mbO(\rho,\sigma)\geq 2\delta}}\min_{\mE\in\mbE}\frac{\log\left[\frac{1}{4\varepsilon(1-\varepsilon)}\right]}{\log\left[ 1/F(\mE(\rho),\mE(\sigma))\right]},\\
N &\geq \max_{\substack{\rho,\sigma\in\mbS\\D_\mbO(\rho,\sigma)\geq 2\delta}}\min_{\mE\in\mbE}\frac{2(1-2\varepsilon)^2}{\ln 2 \cdot S(\mE(\rho)\|\mE(\sigma))},
 \label{eq:sampling lower bound general}
\eal
where $F(\rho,\sigma)\coloneqq \|\sqrt{\rho}\sqrt{\sigma}\|_1^2$ is the (square) fidelity and $S(\rho\|\sigma)\coloneqq \Tr(\rho\log \rho)-\Tr(\rho\log\sigma)$ is the relative entropy. 
\end{thm}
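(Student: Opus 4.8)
The plan is to first establish the distinguishability inequality \eqref{eq:distinguishability trace distance} through a state-discrimination argument, and then convert it into the two sample bounds using standard trace-distance inequalities together with the tensor-product behavior of the fidelity and the relative entropy.

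For the distinguishability step, fix any $\rho,\sigma\in\mbS$ with $D_\mbO(\rho,\sigma)\geq 2\delta$. By definition of $D_\mbO$ there is an observable $A\in\mbO$ with $|\Tr[A(\rho-\sigma)]|\geq 2\delta$, and I assume without loss of generality $\Tr(A\rho)\geq\Tr(A\sigma)$. Write $\tau_\rho\coloneqq\otimes_{n=1}^N\mE_n(\rho)$ and $\tau_\sigma\coloneqq\otimes_{n=1}^N\mE_n(\sigma)$ for the two collections of distorted states entering the trailing process. I would construct a binary hypothesis test by running $\mP_A$ to produce the estimate $\hat E_A$ and thresholding it at the midpoint $m\coloneqq\tfrac12[\Tr(A\rho)+\Tr(A\sigma)]$, declaring ``$\rho$'' when $\hat E_A\geq m$ and ``$\sigma$'' otherwise. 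Since $\Tr(A\rho)-m\geq\delta$ and $m-\Tr(A\sigma)\geq\delta$, the accuracy guarantee \eqref{eq:accuracy probability requirement} forces this test to be correct with probability at least $1-\varepsilon$ under either hypothesis. As the composition of the channel $\mP_A$ with the threshold is a legitimate two-outcome measurement acting on $\tau_\rho$ and $\tau_\sigma$, the Holevo--Helstrom bound yields $1-\varepsilon\leq\tfrac12[1+D_{\rm tr}(\tau_\rho,\tau_\sigma)]$, which is exactly \eqref{eq:distinguishability trace distance}. I expect this step to be the main obstacle: it requires converting the probabilistic accuracy condition into a provably reliable binary test and invoking data processing in the right direction.

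For the fidelity bound, I would feed \eqref{eq:distinguishability trace distance} into the Fuchs--van de Graaf inequality $D_{\rm tr}\leq\sqrt{1-F}$ (valid for the squared fidelity), giving $F(\tau_\rho,\tau_\sigma)\leq 1-(1-2\varepsilon)^2=4\varepsilon(1-\varepsilon)$. Multiplicativity of the fidelity, $F(\tau_\rho,\tau_\sigma)=\prod_{n=1}^N F(\mE_n(\rho),\mE_n(\sigma))$, and taking logarithms give $\sum_{n=1}^N\log[1/F(\mE_n(\rho),\mE_n(\sigma))]\geq\log[1/(4\varepsilon(1-\varepsilon))]$. Bounding each summand by its maximum over $n$ and rearranging yields $N\geq\log[1/(4\varepsilon(1-\varepsilon))]/\log[1/\min_{\mE\in\mbE}F(\mE(\rho),\mE(\sigma))]=\min_{\mE\in\mbE}\log[1/(4\varepsilon(1-\varepsilon))]/\log[1/F(\mE(\rho),\mE(\sigma))]$; maximizing over admissible pairs $\rho,\sigma$ produces the first stated bound.

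For the relative-entropy bound, I would instead apply the quantum Pinsker inequality $D_{\rm tr}(\cdot,\cdot)^2\leq\tfrac{\ln 2}{2}S(\cdot\|\cdot)$, so that $(1-2\varepsilon)^2\leq\tfrac{\ln 2}{2}S(\tau_\rho\|\tau_\sigma)$. Additivity of the relative entropy, $S(\tau_\rho\|\tau_\sigma)=\sum_{n=1}^N S(\mE_n(\rho)\|\mE_n(\sigma))$, together with the same maximum-over-$n$ estimate and rearrangement, gives $N\geq\min_{\mE\in\mbE}2(1-2\varepsilon)^2/[\ln 2\cdot S(\mE(\rho)\|\mE(\sigma))]$, and maximizing over $\rho,\sigma$ completes the proof of \eqref{eq:sampling lower bound general}. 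These final two steps are routine once \eqref{eq:distinguishability trace distance} is secured.
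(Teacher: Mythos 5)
Your proposal is correct and follows essentially the same route as the paper: both reduce the accuracy guarantee \eqref{eq:accuracy probability requirement} to the distinguishability bound $D_{\rm tr}(\otimes_n\mE_n(\rho),\otimes_n\mE_n(\sigma))\geq 1-2\varepsilon$ and then apply Fuchs--van de Graaf with multiplicativity of fidelity, and Pinsker with additivity of relative entropy. The only (immaterial) difference is that you obtain the $1-2\varepsilon$ bound by thresholding the estimator into a binary test and invoking Holevo--Helstrom, whereas the paper bounds the total-variation distance of the output distributions directly via a three-region decomposition and then applies the data-processing inequality --- two packagings of the same fact.
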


This result tells that if the noise effect brings states close to each other, it incurs an unavoidable sampling cost to error mitigation. 
The minimization over $\mbE$ chooses the effective noise channel that least reduces the infidelity and the relative entropy respectively.
On the other hand, the maximum over the ideal states represents the fact that to mitigate two states $\rho$ and $\sigma$ that are separated further than $2\delta$ in terms of observables in $\mbO$, the sample number $N$ that achieves the accuracy $\delta$ and the success probability $1-\varepsilon$ must satisfy the lower bounds with respect to $\rho$ and $\sigma$. 
The maximization over such $\rho$ and $\sigma$ then provides the tightest lower bound. 
This also reflects the observation that error mitigation accommodating a larger set $\mbO$ of target observables would require a larger number of samples.

We remark that although the set $\mbE$ --- which depends on how one modifies the noisy circuit --- ultimately depends on a specific error-mitigation strategy in mind, fixing $\mbE$ to a certain form already provides useful insights as we see later in the context of noisy layered circuits. We also stress that the above bounds hold for an arbitrary choice of $\mbE$, providing the general relation between the error mitigation performance and the information-theoretic quantity. 

The bounds in Theorem~\ref{thm:sampling bound} depend on the accuracy $\delta$ implicitly through the constraints on $\rho$ and $\sigma$ in the maximization.
For instance, if one sets $\delta=0$, one can find that both bounds diverge, as the choice of $\sigma=\rho$ would be allowed in the maximization.
In Appendix~B, we report an alternative bound that has an explicit dependence on the accuracy $\delta$.


Let us now consider our second performance measure based on the standard deviation and the bias of the estimate. 
Let $\sigma_A^{\QEM}(\rho)$ be the standard deviation of $\hat E_A(\rho)$ for an observable $A\in\mbO$, which represents the uncertainty of the final estimate of an error mitigation protocol. 
Since a good error mitigation protocol should come with a small fluctuation in its outcome, the standard deviation of the underlying distribution for the estimate can serve as a performance quantifier. 
However, the standard deviation itself is not sufficient to characterize the error mitigation performance, as one can easily come up with a useless strategy that always outputs a fixed outcome, which has zero standard deviation.  
This issue can be addressed by considering the deviation of the expected value of the estimate from the true expectation value called bias, defined as $b_A(\rho)\coloneqq \<\hat E_A(\rho)\>-\Tr(A\rho)$ for a state $\rho\in\mbS$ and an observable $A\in\mbO$ (see also Fig.~\ref{fig:framework}).

To assess the performance of error-mitigation protocols, we consider the worst-case error among possible ideal states and measurements. 
This motivates us to consider the maximum standard deviation $\sigma_{\max}^{\QEM}\coloneqq\max_{A\in\mbO}\max_{\rho\in\mbS}\sigma_A^{\QEM}(\rho)$ and the maximum bias $b_{\max}\coloneqq \max_{A\in\mbO}\max_{\rho\in\mbS} b_A(\rho)$.
Then, we obtain the following sampling lower bound in terms of these performance quantifiers. (Proof in Appendix~C.)

\begin{thm}\label{thm:sample bound standard deviation}
The sampling cost for an error-mitigation strategy with the maximum standard deviation $\sigma_{\max}^{\QEM}$ and the maximum bias $b_{\max}$ is lower bounded as 
\bal
 N\geq \max_{\substack{\rho,\sigma\in\mbS\\D_\mbO(\rho,\sigma)-2b_{\max}\geq 0}}\min_{\mE\in\mbE}\frac{\log\left[1-\frac{1}{\left(1+\frac{2\sigma_{\max}^{\QEM}}{D_\mbO(\rho,\sigma)-2b_{\max}}\right)^2}\right]^{-1}}{\log\left[ 1/F(\mE(\rho),\mE(\sigma))\right]}.
 \label{eq:bias standard deviation bound}
\eal
\end{thm}
This result represents the trade-off between the standard deviation, bias, and the required sampling cost. 
To realize the small standard deviation and bias, error mitigation needs to use many samples; in fact, the lower bound diverges at the limit of $\sigma_{\max}^{\QEM}\to 0$ whenever there exist states $\rho, \sigma\in\mbS$ such that $D_\mbO(\rho,\sigma)\geq 2b_{\max}$.  
On the other hand, a larger bias results in a smaller sampling lower bound, indicating a potential to reduce the sampling cost by giving up some bias.

The bounds in Theorems~\ref{thm:sampling bound}, \ref{thm:sample bound standard deviation} are universally applicable to arbitrary error mitigation protocols in our framework.
Therefore, our bounds are not expected to give good estimates for a given specific error-mitigation protocol in general, just as there is a huge gap between the Carnot efficiency and the efficiency of most of the practical heat engines.
Nevertheless, it is still insightful to investigate how our bounds are compared to existing mitigation protocols. 
In Appendix~D, we compare the bound in Theorem~\ref{thm:sampling bound} to the sampling cost for several error-mitigation methods, showing that our bound can provide nontrivial lower bounds with the gap being the factor of 3 to 6.
Although this does not guarantee that our bound behaves similarly for other scenarios in general, this ensures that there is a setting in which the bound in Theorem~\ref{thm:sampling bound} can provide a nearly tight estimate.  
We further show in Appendix~E that the scaling of the lower bound in Theorem~\ref{thm:sample bound standard deviation} with noise strength can be achieved by the probabilistic error cancellation method in a certain scenario.
This shows that probabilistic error cancellation serves as an optimal protocol in this specific sense, complementing the recent observation on the optimality of probabilistic error cancellation established for the maximum estimator spread measure~\cite{Takagi2021fundamental}.


\textit{\textbf{Noisy layered circuits}} ---
The above results clarify the close relation between the sampling cost and state distinguishability. 
As an application of our general bounds, we study the inevitable sample overhead to mitigate noise in the circuits consisting of multiple layers of unitaries. 
Although we here focus on the local depolarizing noise, our results can be extended to a number of other noise models as we discuss later.

Suppose that an $M$-qubit quantum circuit consists of layers of unitaries, each of which is followed by a local depolarizing noise, i.e., a depolarizing noise of the form $\mD_p=(1-p)\id + p\mbI/2$ where $p$ is a noise strength, applies to each qubit.
We aim to estimate ideal expectation values for the target states $\mbS$ and observables $\mbO$ by using $N$ such noisy layered circuits.
Although the noise strength can vary for different locations, we suppose that $L$ layers are followed by the local depolarizing noise with noise strength of at least $\gamma$.
We call these layers $U_1,U_2,\dots, U_L$ and let $\gamma_{n,l,m}$ denote the noise strength of the local depolarizing noise on the $m^{\rm th}$ qubit after the $l^{\rm th}$ unitary layer $U_l$ in the $n^{\rm th}$ noisy circuit, where $m\leq M,\,l\leq L,\, n\leq N$. 
This gives the expression of the local depolarizing noise after $l^{\rm th}$ layer in the $n^{\rm th}$ noisy circuit as $\otimes_{m=1}^M \mD_{\gamma_{n,l,m}}$, where $\gamma_{n,l,m}\geq \gamma\,\forall n,l,m$.

Here, we focus on the error-mitigation protocols that apply an arbitrary trailing process over $N$ distorted states and any unital operations (i.e., operations that preserve the maximally mixed state) before and after $U_l$ (Fig.~\ref{fig:noisy_layered}). 
This structure ensures that error correction does not come into play here, as the size of input and output spaces of the intermediate unital channels is restricted to $M$ qubits, as well as that unital channels do not serve as good decoders for error correction.

We show that the necessary number of samples required to achieve the target performance grows exponentially with the number of layers in both performance quantifiers introduced above.

\begin{thm}\label{thm:layer}
Suppose that an error-mitigation strategy described above is applied to an $M$-qubit circuit to mitigate local depolarizing channels with strength at least $\gamma$ that follow $L$ layers of unitaries, and achieves \eqref{eq:accuracy probability requirement} with some $\delta\geq 0$ and $0\leq \varepsilon\leq 1/2$.
Then, if there exist at least two states $\rho,\sigma\in\mbS$ such that $D_\mbO(\rho,\sigma)\geq 2\delta$, the required sample number $N$ is lower bounded as
\bal
 N\geq \frac{(1-2\varepsilon)^2}{2\ln (2)\,M(1-\gamma)^{2L}}.
\eal
\end{thm}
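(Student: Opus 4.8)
The plan is to route everything through the relative-entropy form of Theorem~\ref{thm:sampling bound}. Since the stated bound carries the numerator $(1-2\varepsilon)^2$, it should come from the second inequality there rather than the fidelity one, so I would fix any pair $\rho,\sigma\in\mbS$ with $D_\mbO(\rho,\sigma)\ge 2\delta$ (one exists by hypothesis) and lower-bound the outer maximization by its value at this pair. It then suffices to establish a uniform estimate $S(\mE(\rho)\|\mE(\sigma))\le 4M(1-\gamma)^{2L}$ for every effective noise channel $\mE\in\mbE$: substituting it into $N\ge\min_{\mE\in\mbE} 2(1-2\varepsilon)^2/[\ln 2\cdot S(\mE(\rho)\|\mE(\sigma))]$ reproduces exactly $N\ge (1-2\varepsilon)^2/[2\ln 2\, M(1-\gamma)^{2L}]$. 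Thus the whole problem collapses to a contraction estimate for the relative entropy under the noisy layered circuit.

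The key mechanism I would exploit is that the relative entropy contracts \emph{quadratically} in the depolarizing strength, which is precisely what supplies the squared exponent $(1-\gamma)^{2L}$; an argument routed through the fidelity or the trace distance would only yield $(1-\gamma)^{L}$. Each $\mE=\mF\circ\mU^\dagger$ decomposes into the $L$ local depolarizing layers of strength at least $\gamma$, interleaved with the unitaries $U_l$, the unital maps $\Lambda_l,\Xi_l$, and $\mU^\dagger$. By the data-processing inequality every unitary and unital channel has relative-entropy contraction coefficient at most one, so those only help, and the work sits in the depolarizing layers. For a single qubit the strong data-processing constant of $\mD_p$ for the relative entropy is at most $(1-p)^2$ (it is dominated by the $\chi^2$-contraction coefficient, whose value is $(1-p)^2$), and this constant tensorizes, so a full local depolarizing layer contracts the relative entropy by a factor at most $(1-\gamma)^2$; composing the $L$ layers gives $(1-\gamma)^{2L}$.

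The remaining task is to furnish the prefactor $M$ and to keep the quantity finite even when $\sigma$ is rank-deficient. Here I would use that a single depolarizing layer already pushes both states into the interior: $\mD_\gamma^{\otimes M}(\omega)\ge(\gamma/2)^M\,\mbI$ for any $\omega$, and all later unital and unitary maps (including $\mU^\dagger$) preserve such a lower bound, so $\lambda_{\min}(\mE(\sigma))\ge(\gamma/2)^M$ and hence $S(\mE(\rho)\|\mE(\sigma))\le \log_2[1/\lambda_{\min}(\mE(\sigma))]=O(M)$. Combining this linear-in-$M$ cap (applied once, to render the post-first-layer relative entropy finite and of order $M$) with the $(1-\gamma)^2$ contraction on the remaining layers yields the announced scaling $M(1-\gamma)^{2L}$, after which I would track the numerical constants to land on $4M(1-\gamma)^{2L}$.

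The main obstacle is reconciling the quadratic contraction with a merely \emph{linear} dependence on $M$. The naive surrogates fail: bounding $S$ by a Hilbert--Schmidt or $\chi^2$ distance relative to $\mE(\sigma)$ reintroduces $1/\lambda_{\min}\sim(2/\gamma)^M$ or a dimension factor $2^M$, i.e.\ an exponential-in-$M$ constant, whereas the crude $\log_2[1/\lambda_{\min}]$ cap alone discards the all-important $(1-\gamma)^{2L}$ decay. One must therefore argue entirely within the relative entropy, using the tensorization of its strong data-processing constant for the depolarizing layer together with monotonicity under the interleaved unital operations (which also removes any need to track Pauli weights through $\Lambda_l,\Xi_l$, a step that their mixing of Pauli strings would otherwise obstruct). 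Proving that this contraction coefficient tensorizes for the multi-qubit layer, and that the interleaved unital maps do not spoil it, is the technical heart of the argument, as is pinning down the precise constant.
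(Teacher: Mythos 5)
There is a genuine gap, and it sits exactly where you located ``the technical heart.'' Your plan requires the two\emph{-state} strong data-processing estimate $S(\mD_\gamma^{\otimes M}(\rho)\|\mD_\gamma^{\otimes M}(\sigma))\leq(1-\gamma)^2S(\rho\|\sigma)$ for \emph{arbitrary} pairs, interleaved with unital maps. The justification you offer --- that the relative-entropy contraction coefficient is ``dominated by the $\chi^2$-contraction coefficient'' --- has the general inequality backwards: one has $\eta_{\chi^2}\leq\eta_{\rm RE}$ (Lesniewski--Ruskai), so $\eta_{\chi^2}=(1-\gamma)^2$ is a \emph{lower} bound on the relative-entropy contraction constant, not an upper bound. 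Moreover, tensorization of the two-state relative-entropy contraction coefficient over the $M$ qubits (and its stability under the interleaved unital maps $\Lambda_l,\Xi_l$, which entangle the qubits) is not available; the $(1-\gamma)^2$ figure cited from Ref.~\cite{Kastoryano2013quantum} is proved only for the \emph{fixed-point} version $S(\mD_\gamma^{\otimes M}(\tau)\|\mbI/2^M)\leq(1-\gamma)^2S(\tau\|\mbI/2^M)$, where the second argument is the common invariant state. The paper's own Appendix~\ref{app:layered alterative} is evidence that your route, executed with the tools actually available (reverse Pinsker via $\lambda_{\min}(\mE(\sigma))\geq(\gamma/2)^M$), necessarily pays an exponential-in-$M$ factor $(2/\gamma)^M$ when one insists on bounding $S(\mE(\rho)\|\mE(\sigma))$ directly and feeding it into the second inequality of Theorem~\ref{thm:sampling bound}. (Even granting every claim, your ``apply the $\log[1/\lambda_{\min}]$ cap once after the first layer'' yields $(1-\gamma)^{2(L-1)}M\log(2/\gamma)$, which misses the target constant $4M(1-\gamma)^{2L}$ for small $\gamma$.)

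The missing idea is to avoid the two-state quantity altogether. The paper does not use the relative-entropy form of Theorem~\ref{thm:sampling bound}; it returns to the intermediate inequality $D_{\tr}(\otimes_{n}\mE_n(\rho),\otimes_{n}\mE_n(\sigma))\geq1-2\varepsilon$ from that theorem's proof, strips off the final unitaries, and applies the triangle inequality through the product maximally mixed state $\mbI/2^{NM}$. Each of the two resulting terms is a distance to the \emph{fixed point} of every depolarizing layer and every unital map, so Pinsker's inequality plus additivity of $S(\cdot\|\mbI/2^{NM})$ over the $N$ copies reduces everything to the Kastoryano--Temme fixed-point contraction, applied $L$ times per copy, seeded by $S(\rho_{\rm in}\|\mbI/2^M)\leq M$. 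This gives $D_{\tr}\leq2\sqrt{(\ln2/2)(1-\gamma)^{2L}MN}$, and squaring against $1-2\varepsilon$ produces the stated bound with the linear-in-$M$ prefactor. Without this detour through the common fixed point, your argument as written cannot be completed.
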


The proof can be found in Appendix~F.
This result particularly shows that the required number of samples must grow exponentially with the circuit depth $L$.
We remark that the bound always holds under the mild condition, i.e., $D_\mbO(\rho,\sigma)\geq 2\delta$ for some $\rho,\sigma\in\mbS$. 
This reflects that, to achieve the desired accuracy $\delta$ satisfying this condition, error mitigation really needs to extract the expectation values about the observables in $\mbO$ and the states in $\mbS$, prohibiting it from merely making a random guess.

In Appendix~G, we obtain a similar exponential growth of the required sample overhead for a fixed target bias and standard deviation. 
We also obtain in Appendix~H alternative bounds that are tighter in the range of small $\varepsilon$.

With a suitable modification of allowed unitaries and intermediate operations, we extend these results to a wide class of noise models, including stochastic Pauli, global depolarizing, and thermal noise.
The case of thermal noise particularly provides an intriguing physical interpretation: the sampling cost $N$ required to mitigate thermal noise after time $t$ is characterized by the loss of free energy $N=\Omega(1/[F(\rho_t)-F_{\rm eq}])$ where $\rho_t$ is the state at time $t$ and $F_{\rm eq}$ is the equilibrium free energy. 
This in turn shows that the necessary sampling cost grows as $N=\Omega(e^{\alpha_{\rm ent} t})$ where $\alpha_{\rm ent}$ is a constant characterized by the minimum entropy production rate. 
We provide details on these extensions in Appendix~I.

We remark that Theorem~\ref{thm:layer} (and related results discussed in the Appendices) extends the previous results showing the exponential resource overhead required for noisy circuits without postprocessing~\cite{Unruh1995maintaining,Aharonov1996limitations,Palma1996quantum}. In Appendix~J, we provide further clarifications about the differences between the settings considered in the previous works and ours.

\begin{figure}
    \centering
    \includegraphics[width=\columnwidth]{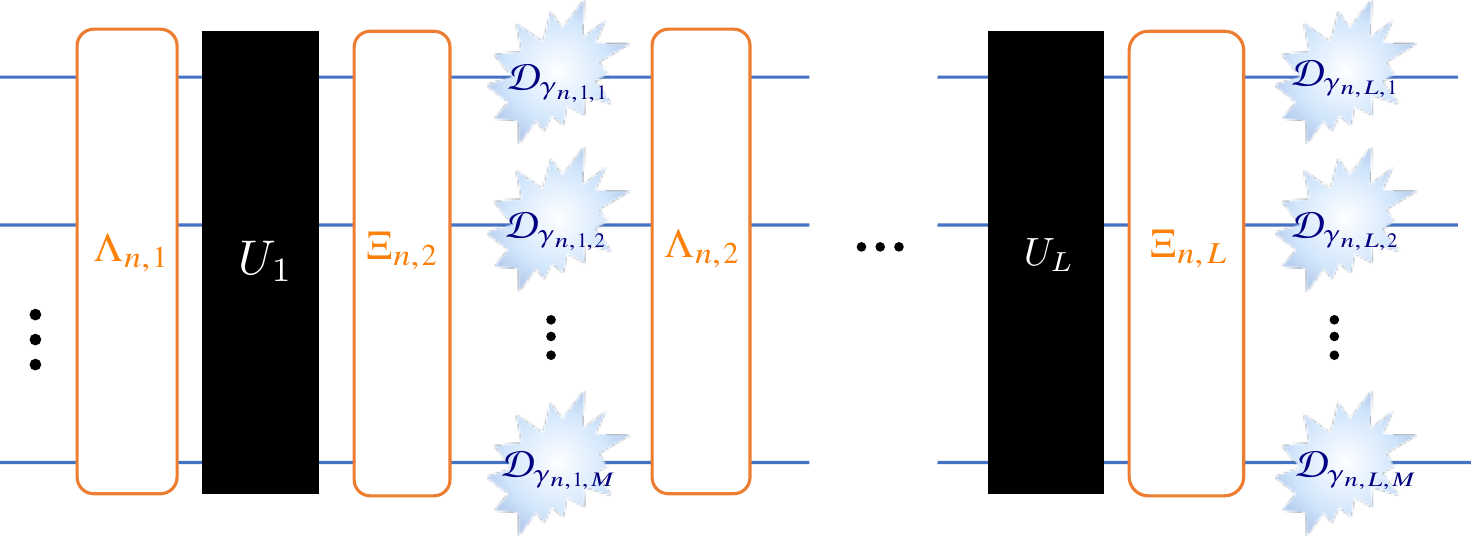}
    \caption{Each distorted state ($n^{\rm th}$ copy depicted in the figure) is produced by a circuit with $L$ layers $U_1,\dots,U_L$ followed by a local depolarizing noise with noise strength at least $\gamma$, i.e., $\gamma_{n,l,m}\geq \gamma, \forall n,l,m$. Each layer $U_l$ can be sandwiched by additional unital operations $\Lambda_{n,l}$ and $\Xi_{n,l}$. Other layers and depolarizing channels with noise strength smaller than $\gamma$ are absorbed in these operations as they are also unital.}
    \label{fig:noisy_layered}
\end{figure}


\textit{\textbf{Conclusions}} --- 
We established sampling lower bounds imposed on the general quantum error-mitigation protocols. Our results formalize the idea that the reduction in the state distinguishability caused by noise and error-mitigation processes leads to the unavoidable computational overhead in quantum error mitigation.
We then showed that error-mitigation protocols with certain intermediate operations and an arbitrary trailing process require the number of samples that grows exponentially with the circuit depth to mitigate various types of noise. 
We presented these bounds with respect to multiple performance quantifiers --- accuracy and success probability, as well as the standard deviation and bias --- each of which has its own operational relevance.  

Our bounds provide fundamental limitations that universally apply to general mitigation protocols, clarifying the underlying principle that regulates error-mitigation performance.
As a trade-off, they may not give tight estimates for a given specific error-mitigation strategy, analogously to many other converse bounds established in other fields that typically give loose bounds for most specific protocols.  
A thorough study to identify in what setting our bounds can give good estimates will make an interesting future research direction.


\begin{acknowledgments}

\textit{Acknowledgments} --- We thank Suguru Endo, Shintaro Minagawa, Masato Koashi, and Daniel Stilck Fran\c{c}a for helpful discussions, and Kento Tsubouchi, Takahiro Sagawa, and Nobuyuki Yoshioka for sharing a preliminary version of their manuscript. We acknowledge the support of
the Singapore Ministry of Education Tier 1 Grants
RG146/20 and RG77/22 (S), the NRF2021-QEP2-
02-P06 from the Singapore Research Foundation and
the Singapore Ministry of Education Tier 2 Grant
T2EP50221-0014 and the FQXi R-710-000-146-720
Grant “Are quantum agents more energetically efficient
at making predictions?” from the Foundational Questions Institute and Fetzer Franklin Fund (a donor-advised
fund of Silicon Valley Community Foundation). R.T. was also supported by the Lee Kuan Yew Postdoctoral Fellowship at Nanyang Technological University Singapore. H.T. is supported by JSPS Grants-in-Aid for Scientific Research No. JP19K14610 and No. JP22H05250, JST PRESTO No. JPMJPR2014, and JST MOONSHOT No. JPMJMS2061.
\end{acknowledgments}


\textit{Note added.} --- During the completion of our manuscript, we became aware of an independent work by Tsubouchi \emph{et al.}~\cite{Tsubouchi2022} that obtained a result related to our Theorem~S.2 in Appendix~G, in which they showed an alternative exponential sample lower bound applicable to error-mitigation protocols that achieve zero bias using quantum estimation theory.


\bibliographystyle{apsrmp4-2}
\bibliography{myref}

\begin{thebibliography}{81}%
\makeatletter
\providecommand \@ifxundefined [1]{%
 \@ifx{#1\undefined}
}%
\providecommand \@ifnum [1]{%
 \ifnum #1\expandafter \@firstoftwo
 \else \expandafter \@secondoftwo
 \fi
}%
\providecommand \@ifx [1]{%
 \ifx #1\expandafter \@firstoftwo
 \else \expandafter \@secondoftwo
 \fi
}%
\providecommand \natexlab [1]{#1}%
\providecommand \emph  [1]{``#1''}%
\providecommand \bibnamefont  [1]{#1}%
\providecommand \bibfnamefont [1]{#1}%
\providecommand \citenamefont [1]{#1}%
\providecommand \href@noop [0]{\@secondoftwo}%
\providecommand \href [0]{\begingroup \@sanitize@url \@href}%
\providecommand \@href[1]{\@@startlink{#1}\@@href}%
\providecommand \@@href[1]{\endgroup#1\@@endlink}%
\providecommand \@sanitize@url [0]{\catcode `\\12\catcode `\$12\catcode
  `\&12\catcode `\#12\catcode `\^12\catcode `\_12\catcode `\%12\relax}%
\providecommand \@@startlink[1]{}%
\providecommand \@@endlink[0]{}%
\providecommand \url  [0]{\begingroup\@sanitize@url \@url }%
\providecommand \@url [1]{\endgroup\@href {#1}{\urlprefix }}%
\providecommand \urlprefix  [0]{URL }%
\providecommand \Eprint [0]{\href }%
\providecommand \doibase [0]{http://dx.doi.org/}%
\providecommand \selectlanguage [0]{\@gobble}%
\providecommand \bibinfo  [0]{\@secondoftwo}%
\providecommand \bibfield  [0]{\@secondoftwo}%
\providecommand \translation [1]{[#1]}%
\providecommand \BibitemOpen [0]{}%
\providecommand \bibitemStop [0]{}%
\providecommand \bibitemNoStop [0]{.\EOS\space}%
\providecommand \EOS [0]{\spacefactor3000\relax}%
\providecommand \BibitemShut  [1]{\csname bibitem#1\endcsname}%
\let\auto@bib@innerbib\@empty
\bibitem [{\citenamefont {Preskill}(2018)}]{Preskill2018quantum}%
  \BibitemOpen
  \bibfield  {author} {\bibinfo {author} {\bibfnamefont {J.}~\bibnamefont
  {Preskill}},\ }\bibfield  {title} {\emph {\bibinfo {title} {Quantum
  {C}omputing in the {NISQ} era and beyond},}\ }\href
  {http://dx.doi.org/10.22331/q-2018-08-06-79} {\bibfield  {journal} {\bibinfo
  {journal} {{Quantum}}\ }\textbf {\bibinfo {volume} {2}},\ \bibinfo {pages}
  {79} (\bibinfo {year} {2018})}\BibitemShut {NoStop}%
\bibitem [{\citenamefont {McArdle}\ \emph {et~al.}(2020)\citenamefont
  {McArdle}, \citenamefont {Endo}, \citenamefont {Aspuru-Guzik}, \citenamefont
  {Benjamin},\ and\ \citenamefont {Yuan}}]{McArdle2020quantum}%
  \BibitemOpen
  \bibfield  {author} {\bibinfo {author} {\bibfnamefont {S.}~\bibnamefont
  {McArdle}}, \bibinfo {author} {\bibfnamefont {S.}~\bibnamefont {Endo}},
  \bibinfo {author} {\bibfnamefont {A.}~\bibnamefont {Aspuru-Guzik}}, \bibinfo
  {author} {\bibfnamefont {S.~C.}\ \bibnamefont {Benjamin}}, \ and\ \bibinfo
  {author} {\bibfnamefont {X.}~\bibnamefont {Yuan}},\ }\bibfield  {title}
  {\emph {\bibinfo {title} {Quantum computational chemistry},}\ }\href
  {http://dx.doi.org/10.1103/RevModPhys.92.015003} {\bibfield  {journal}
  {\bibinfo  {journal} {Rev. Mod. Phys.}\ }\textbf {\bibinfo {volume} {92}},\
  \bibinfo {pages} {015003} (\bibinfo {year} {2020})}\BibitemShut {NoStop}%
\bibitem [{\citenamefont {Endo}\ \emph {et~al.}(2021)\citenamefont {Endo},
  \citenamefont {Cai}, \citenamefont {Benjamin},\ and\ \citenamefont
  {Yuan}}]{Endo2021hybrid}%
  \BibitemOpen
  \bibfield  {author} {\bibinfo {author} {\bibfnamefont {S.}~\bibnamefont
  {Endo}}, \bibinfo {author} {\bibfnamefont {Z.}~\bibnamefont {Cai}}, \bibinfo
  {author} {\bibfnamefont {S.~C.}\ \bibnamefont {Benjamin}}, \ and\ \bibinfo
  {author} {\bibfnamefont {X.}~\bibnamefont {Yuan}},\ }\bibfield  {title}
  {\emph {\bibinfo {title} {Hybrid quantum-classical algorithms and quantum
  error mitigation},}\ }\href {http://dx.doi.org/10.7566/JPSJ.90.032001}
  {\bibfield  {journal} {\bibinfo  {journal} {J. Phys. Soc. Japan}\ }\textbf
  {\bibinfo {volume} {90}},\ \bibinfo {pages} {032001} (\bibinfo {year}
  {2021})}\BibitemShut {NoStop}%
\bibitem [{\citenamefont {Yuan}\ \emph {et~al.}(2016)\citenamefont {Yuan},
  \citenamefont {Zhang}, \citenamefont {L\"utkenhaus},\ and\ \citenamefont
  {Ma}}]{Yuan2016simulating}%
  \BibitemOpen
  \bibfield  {author} {\bibinfo {author} {\bibfnamefont {X.}~\bibnamefont
  {Yuan}}, \bibinfo {author} {\bibfnamefont {Z.}~\bibnamefont {Zhang}},
  \bibinfo {author} {\bibfnamefont {N.}~\bibnamefont {L\"utkenhaus}}, \ and\
  \bibinfo {author} {\bibfnamefont {X.}~\bibnamefont {Ma}},\ }\bibfield
  {title} {\emph {\bibinfo {title} {Simulating single photons with realistic
  photon sources},}\ }\href {http://dx.doi.org/10.1103/PhysRevA.94.062305}
  {\bibfield  {journal} {\bibinfo  {journal} {Phys. Rev. A}\ }\textbf {\bibinfo
  {volume} {94}},\ \bibinfo {pages} {062305} (\bibinfo {year}
  {2016})}\BibitemShut {NoStop}%
\bibitem [{\citenamefont {Temme}\ \emph {et~al.}(2017)\citenamefont {Temme},
  \citenamefont {Bravyi},\ and\ \citenamefont {Gambetta}}]{Temme2017error}%
  \BibitemOpen
  \bibfield  {author} {\bibinfo {author} {\bibfnamefont {K.}~\bibnamefont
  {Temme}}, \bibinfo {author} {\bibfnamefont {S.}~\bibnamefont {Bravyi}}, \
  and\ \bibinfo {author} {\bibfnamefont {J.~M.}\ \bibnamefont {Gambetta}},\
  }\bibfield  {title} {\emph {\bibinfo {title} {Error mitigation for
  short-depth quantum circuits},}\ }\href
  {http://dx.doi.org/10.1103/PhysRevLett.119.180509} {\bibfield  {journal}
  {\bibinfo  {journal} {Phys. Rev. Lett.}\ }\textbf {\bibinfo {volume} {119}},\
  \bibinfo {pages} {180509} (\bibinfo {year} {2017})}\BibitemShut {NoStop}%
\bibitem [{\citenamefont {Li}\ and\ \citenamefont
  {Benjamin}(2017)}]{Li2017efficient}%
  \BibitemOpen
  \bibfield  {author} {\bibinfo {author} {\bibfnamefont {Y.}~\bibnamefont
  {Li}}\ and\ \bibinfo {author} {\bibfnamefont {S.~C.}\ \bibnamefont
  {Benjamin}},\ }\bibfield  {title} {\emph {\bibinfo {title} {Efficient
  variational quantum simulator incorporating active error minimization},}\
  }\href {http://dx.doi.org/10.1103/PhysRevX.7.021050} {\bibfield  {journal}
  {\bibinfo  {journal} {Phys. Rev. X}\ }\textbf {\bibinfo {volume} {7}},\
  \bibinfo {pages} {021050} (\bibinfo {year} {2017})}\BibitemShut {NoStop}%
\bibitem [{\citenamefont {Endo}\ \emph {et~al.}(2018)\citenamefont {Endo},
  \citenamefont {Benjamin},\ and\ \citenamefont {Li}}]{Endo2018practical}%
  \BibitemOpen
  \bibfield  {author} {\bibinfo {author} {\bibfnamefont {S.}~\bibnamefont
  {Endo}}, \bibinfo {author} {\bibfnamefont {S.~C.}\ \bibnamefont {Benjamin}},
  \ and\ \bibinfo {author} {\bibfnamefont {Y.}~\bibnamefont {Li}},\ }\bibfield
  {title} {\emph {\bibinfo {title} {Practical quantum error mitigation for
  near-future applications},}\ }\href
  {http://dx.doi.org/10.1103/PhysRevX.8.031027} {\bibfield  {journal} {\bibinfo
   {journal} {Phys. Rev. X}\ }\textbf {\bibinfo {volume} {8}},\ \bibinfo
  {pages} {031027} (\bibinfo {year} {2018})}\BibitemShut {NoStop}%
\bibitem [{\citenamefont {Bravyi}\ \emph {et~al.}(2021)\citenamefont {Bravyi},
  \citenamefont {Sheldon}, \citenamefont {Kandala}, \citenamefont {Mckay},\
  and\ \citenamefont {Gambetta}}]{Bravyi2021mitigating}%
  \BibitemOpen
  \bibfield  {author} {\bibinfo {author} {\bibfnamefont {S.}~\bibnamefont
  {Bravyi}}, \bibinfo {author} {\bibfnamefont {S.}~\bibnamefont {Sheldon}},
  \bibinfo {author} {\bibfnamefont {A.}~\bibnamefont {Kandala}}, \bibinfo
  {author} {\bibfnamefont {D.~C.}\ \bibnamefont {Mckay}}, \ and\ \bibinfo
  {author} {\bibfnamefont {J.~M.}\ \bibnamefont {Gambetta}},\ }\bibfield
  {title} {\emph {\bibinfo {title} {Mitigating measurement errors in multiqubit
  experiments},}\ }\href {http://dx.doi.org/10.1103/PhysRevA.103.042605}
  {\bibfield  {journal} {\bibinfo  {journal} {Phys. Rev. A}\ }\textbf {\bibinfo
  {volume} {103}},\ \bibinfo {pages} {042605} (\bibinfo {year}
  {2021})}\BibitemShut {NoStop}%
\bibitem [{\citenamefont {Unruh}(1995)}]{Unruh1995maintaining}%
  \BibitemOpen
  \bibfield  {author} {\bibinfo {author} {\bibfnamefont {W.~G.}\ \bibnamefont
  {Unruh}},\ }\bibfield  {title} {\emph {\bibinfo {title} {Maintaining
  coherence in quantum computers},}\ }\href
  {http://dx.doi.org/10.1103/PhysRevA.51.992} {\bibfield  {journal} {\bibinfo
  {journal} {Phys. Rev. A}\ }\textbf {\bibinfo {volume} {51}},\ \bibinfo
  {pages} {992} (\bibinfo {year} {1995})}\BibitemShut {NoStop}%
\bibitem [{\citenamefont {{Aharonov}}\ \emph {et~al.}(1996)\citenamefont
  {{Aharonov}}, \citenamefont {{Ben-Or}}, \citenamefont {{Impagliazzo}},\ and\
  \citenamefont {{Nisan}}}]{Aharonov1996limitations}%
  \BibitemOpen
  \bibfield  {author} {\bibinfo {author} {\bibfnamefont {D.}~\bibnamefont
  {{Aharonov}}}, \bibinfo {author} {\bibfnamefont {M.}~\bibnamefont
  {{Ben-Or}}}, \bibinfo {author} {\bibfnamefont {R.}~\bibnamefont
  {{Impagliazzo}}}, \ and\ \bibinfo {author} {\bibfnamefont {N.}~\bibnamefont
  {{Nisan}}},\ }\bibfield  {title} {\emph {\bibinfo {title} {{Limitations of
  Noisy Reversible Computation}},}\ }\href@noop {} {\Eprint
  {http://arxiv.org/abs/quant-ph/9611028} {arXiv:quant-ph/9611028}  (\bibinfo
  {year} {1996})}\BibitemShut {NoStop}%
\bibitem [{\citenamefont {Palma}\ \emph {et~al.}(1996)\citenamefont {Palma},
  \citenamefont {Suominen},\ and\ \citenamefont {Ekert}}]{Palma1996quantum}%
  \BibitemOpen
  \bibfield  {author} {\bibinfo {author} {\bibfnamefont {G.~M.}\ \bibnamefont
  {Palma}}, \bibinfo {author} {\bibfnamefont {K.-a.}\ \bibnamefont {Suominen}},
  \ and\ \bibinfo {author} {\bibfnamefont {A.}~\bibnamefont {Ekert}},\
  }\bibfield  {title} {\emph {\bibinfo {title} {Quantum computers and
  dissipation},}\ }\href {http://dx.doi.org/10.1098/rspa.1996.0029} {\bibfield
  {journal} {\bibinfo  {journal} {Proc. R. Soc. A: Math. Phys. Eng. Sci.}\
  }\textbf {\bibinfo {volume} {452}},\ \bibinfo {pages} {567} (\bibinfo {year}
  {1996})}\BibitemShut {NoStop}%
\bibitem [{\citenamefont {McClean}\ \emph {et~al.}(2017)\citenamefont
  {McClean}, \citenamefont {Kimchi-Schwartz}, \citenamefont {Carter},\ and\
  \citenamefont {de~Jong}}]{McClean2017hybrid}%
  \BibitemOpen
  \bibfield  {author} {\bibinfo {author} {\bibfnamefont {J.~R.}\ \bibnamefont
  {McClean}}, \bibinfo {author} {\bibfnamefont {M.~E.}\ \bibnamefont
  {Kimchi-Schwartz}}, \bibinfo {author} {\bibfnamefont {J.}~\bibnamefont
  {Carter}}, \ and\ \bibinfo {author} {\bibfnamefont {W.~A.}\ \bibnamefont
  {de~Jong}},\ }\bibfield  {title} {\emph {\bibinfo {title} {Hybrid
  quantum-classical hierarchy for mitigation of decoherence and determination
  of excited states},}\ }\href {http://dx.doi.org/10.1103/PhysRevA.95.042308}
  {\bibfield  {journal} {\bibinfo  {journal} {Phys. Rev. A}\ }\textbf {\bibinfo
  {volume} {95}},\ \bibinfo {pages} {042308} (\bibinfo {year}
  {2017})}\BibitemShut {NoStop}%
\bibitem [{\citenamefont {Bonet-Monroig}\ \emph {et~al.}(2018)\citenamefont
  {Bonet-Monroig}, \citenamefont {Sagastizabal}, \citenamefont {Singh},\ and\
  \citenamefont {O'Brien}}]{Bonet-Monroig2018low-cost}%
  \BibitemOpen
  \bibfield  {author} {\bibinfo {author} {\bibfnamefont {X.}~\bibnamefont
  {Bonet-Monroig}}, \bibinfo {author} {\bibfnamefont {R.}~\bibnamefont
  {Sagastizabal}}, \bibinfo {author} {\bibfnamefont {M.}~\bibnamefont {Singh}},
  \ and\ \bibinfo {author} {\bibfnamefont {T.~E.}\ \bibnamefont {O'Brien}},\
  }\bibfield  {title} {\emph {\bibinfo {title} {Low-cost error mitigation by
  symmetry verification},}\ }\href
  {http://dx.doi.org/10.1103/PhysRevA.98.062339} {\bibfield  {journal}
  {\bibinfo  {journal} {Phys. Rev. A}\ }\textbf {\bibinfo {volume} {98}},\
  \bibinfo {pages} {062339} (\bibinfo {year} {2018})}\BibitemShut {NoStop}%
\bibitem [{\citenamefont {McClean}\ \emph {et~al.}(2020)\citenamefont
  {McClean}, \citenamefont {Jiang}, \citenamefont {Rubin}, \citenamefont
  {Babbush},\ and\ \citenamefont {Neven}}]{Mcclean2020decoding}%
  \BibitemOpen
  \bibfield  {author} {\bibinfo {author} {\bibfnamefont {J.~R.}\ \bibnamefont
  {McClean}}, \bibinfo {author} {\bibfnamefont {Z.}~\bibnamefont {Jiang}},
  \bibinfo {author} {\bibfnamefont {N.~C.}\ \bibnamefont {Rubin}}, \bibinfo
  {author} {\bibfnamefont {R.}~\bibnamefont {Babbush}}, \ and\ \bibinfo
  {author} {\bibfnamefont {H.}~\bibnamefont {Neven}},\ }\bibfield  {title}
  {\emph {\bibinfo {title} {Decoding quantum errors with subspace
  expansions},}\ }\href {http://dx.doi.org/10.1038/s41467-020-14341-w}
  {\bibfield  {journal} {\bibinfo  {journal} {Nat. Commun.}\ }\textbf {\bibinfo
  {volume} {11}},\ \bibinfo {pages} {636} (\bibinfo {year} {2020})}\BibitemShut
  {NoStop}%
\bibitem [{\citenamefont {Koczor}(2021)}]{Koczor2021exponential}%
  \BibitemOpen
  \bibfield  {author} {\bibinfo {author} {\bibfnamefont {B.}~\bibnamefont
  {Koczor}},\ }\bibfield  {title} {\emph {\bibinfo {title} {Exponential error
  suppression for near-term quantum devices},}\ }\href
  {http://dx.doi.org/10.1103/PhysRevX.11.031057} {\bibfield  {journal}
  {\bibinfo  {journal} {Phys. Rev. X}\ }\textbf {\bibinfo {volume} {11}},\
  \bibinfo {pages} {031057} (\bibinfo {year} {2021})}\BibitemShut {NoStop}%
\bibitem [{\citenamefont {Huggins}\ \emph {et~al.}(2021)\citenamefont
  {Huggins}, \citenamefont {McArdle}, \citenamefont {O'Brien}, \citenamefont
  {Lee}, \citenamefont {Rubin}, \citenamefont {Boixo}, \citenamefont {Whaley},
  \citenamefont {Babbush},\ and\ \citenamefont {McClean}}]{Huggins2021virtual}%
  \BibitemOpen
  \bibfield  {author} {\bibinfo {author} {\bibfnamefont {W.~J.}\ \bibnamefont
  {Huggins}}, \bibinfo {author} {\bibfnamefont {S.}~\bibnamefont {McArdle}},
  \bibinfo {author} {\bibfnamefont {T.~E.}\ \bibnamefont {O'Brien}}, \bibinfo
  {author} {\bibfnamefont {J.}~\bibnamefont {Lee}}, \bibinfo {author}
  {\bibfnamefont {N.~C.}\ \bibnamefont {Rubin}}, \bibinfo {author}
  {\bibfnamefont {S.}~\bibnamefont {Boixo}}, \bibinfo {author} {\bibfnamefont
  {K.~B.}\ \bibnamefont {Whaley}}, \bibinfo {author} {\bibfnamefont
  {R.}~\bibnamefont {Babbush}}, \ and\ \bibinfo {author} {\bibfnamefont
  {J.~R.}\ \bibnamefont {McClean}},\ }\bibfield  {title} {\emph {\bibinfo
  {title} {Virtual distillation for quantum error mitigation},}\ }\href
  {http://dx.doi.org/10.1103/PhysRevX.11.041036} {\bibfield  {journal}
  {\bibinfo  {journal} {Phys. Rev. X}\ }\textbf {\bibinfo {volume} {11}},\
  \bibinfo {pages} {041036} (\bibinfo {year} {2021})}\BibitemShut {NoStop}%
\bibitem [{\citenamefont {Yoshioka}\ \emph {et~al.}(2022)\citenamefont
  {Yoshioka}, \citenamefont {Hakoshima}, \citenamefont {Matsuzaki},
  \citenamefont {Tokunaga}, \citenamefont {Suzuki},\ and\ \citenamefont
  {Endo}}]{Yoshioka2021generalized}%
  \BibitemOpen
  \bibfield  {author} {\bibinfo {author} {\bibfnamefont {N.}~\bibnamefont
  {Yoshioka}}, \bibinfo {author} {\bibfnamefont {H.}~\bibnamefont {Hakoshima}},
  \bibinfo {author} {\bibfnamefont {Y.}~\bibnamefont {Matsuzaki}}, \bibinfo
  {author} {\bibfnamefont {Y.}~\bibnamefont {Tokunaga}}, \bibinfo {author}
  {\bibfnamefont {Y.}~\bibnamefont {Suzuki}}, \ and\ \bibinfo {author}
  {\bibfnamefont {S.}~\bibnamefont {Endo}},\ }\bibfield  {title} {\emph
  {\bibinfo {title} {Generalized quantum subspace expansion},}\ }\href
  {http://dx.doi.org/10.1103/PhysRevLett.129.020502} {\bibfield  {journal}
  {\bibinfo  {journal} {Phys. Rev. Lett.}\ }\textbf {\bibinfo {volume} {129}},\
  \bibinfo {pages} {020502} (\bibinfo {year} {2022})}\BibitemShut {NoStop}%
\bibitem [{\citenamefont {Czarnik}\ \emph {et~al.}(2021)\citenamefont
  {Czarnik}, \citenamefont {Arrasmith}, \citenamefont {Coles},\ and\
  \citenamefont {Cincio}}]{Czarnik2021errormitigation}%
  \BibitemOpen
  \bibfield  {author} {\bibinfo {author} {\bibfnamefont {P.}~\bibnamefont
  {Czarnik}}, \bibinfo {author} {\bibfnamefont {A.}~\bibnamefont {Arrasmith}},
  \bibinfo {author} {\bibfnamefont {P.~J.}\ \bibnamefont {Coles}}, \ and\
  \bibinfo {author} {\bibfnamefont {L.}~\bibnamefont {Cincio}},\ }\bibfield
  {title} {\emph {\bibinfo {title} {Error mitigation with {C}lifford
  quantum-circuit data},}\ }\href {http://dx.doi.org/10.22331/q-2021-11-26-592}
  {\bibfield  {journal} {\bibinfo  {journal} {{Quantum}}\ }\textbf {\bibinfo
  {volume} {5}},\ \bibinfo {pages} {592} (\bibinfo {year} {2021})}\BibitemShut
  {NoStop}%
\bibitem [{\citenamefont {Strikis}\ \emph {et~al.}(2021)\citenamefont
  {Strikis}, \citenamefont {Qin}, \citenamefont {Chen}, \citenamefont
  {Benjamin},\ and\ \citenamefont {Li}}]{Strikis2021learning-based}%
  \BibitemOpen
  \bibfield  {author} {\bibinfo {author} {\bibfnamefont {A.}~\bibnamefont
  {Strikis}}, \bibinfo {author} {\bibfnamefont {D.}~\bibnamefont {Qin}},
  \bibinfo {author} {\bibfnamefont {Y.}~\bibnamefont {Chen}}, \bibinfo {author}
  {\bibfnamefont {S.~C.}\ \bibnamefont {Benjamin}}, \ and\ \bibinfo {author}
  {\bibfnamefont {Y.}~\bibnamefont {Li}},\ }\bibfield  {title} {\emph {\bibinfo
  {title} {Learning-based quantum error mitigation},}\ }\href
  {http://dx.doi.org/10.1103/PRXQuantum.2.040330} {\bibfield  {journal}
  {\bibinfo  {journal} {PRX Quantum}\ }\textbf {\bibinfo {volume} {2}},\
  \bibinfo {pages} {040330} (\bibinfo {year} {2021})}\BibitemShut {NoStop}%
\bibitem [{\citenamefont {Lowe}\ \emph {et~al.}(2021)\citenamefont {Lowe},
  \citenamefont {Gordon}, \citenamefont {Czarnik}, \citenamefont {Arrasmith},
  \citenamefont {Coles},\ and\ \citenamefont {Cincio}}]{Lowe2021unified}%
  \BibitemOpen
  \bibfield  {author} {\bibinfo {author} {\bibfnamefont {A.}~\bibnamefont
  {Lowe}}, \bibinfo {author} {\bibfnamefont {M.~H.}\ \bibnamefont {Gordon}},
  \bibinfo {author} {\bibfnamefont {P.}~\bibnamefont {Czarnik}}, \bibinfo
  {author} {\bibfnamefont {A.}~\bibnamefont {Arrasmith}}, \bibinfo {author}
  {\bibfnamefont {P.~J.}\ \bibnamefont {Coles}}, \ and\ \bibinfo {author}
  {\bibfnamefont {L.}~\bibnamefont {Cincio}},\ }\bibfield  {title} {\emph
  {\bibinfo {title} {Unified approach to data-driven quantum error
  mitigation},}\ }\href {http://dx.doi.org/10.1103/PhysRevResearch.3.033098}
  {\bibfield  {journal} {\bibinfo  {journal} {Phys. Rev. Research}\ }\textbf
  {\bibinfo {volume} {3}},\ \bibinfo {pages} {033098} (\bibinfo {year}
  {2021})}\BibitemShut {NoStop}%
\bibitem [{\citenamefont {{Czarnik}}\ \emph {et~al.}(2021)\citenamefont
  {{Czarnik}}, \citenamefont {{Arrasmith}}, \citenamefont {{Cincio}},\ and\
  \citenamefont {{Coles}}}]{Czarnik2021qubit-efficient}%
  \BibitemOpen
  \bibfield  {author} {\bibinfo {author} {\bibfnamefont {P.}~\bibnamefont
  {{Czarnik}}}, \bibinfo {author} {\bibfnamefont {A.}~\bibnamefont
  {{Arrasmith}}}, \bibinfo {author} {\bibfnamefont {L.}~\bibnamefont
  {{Cincio}}}, \ and\ \bibinfo {author} {\bibfnamefont {P.~J.}\ \bibnamefont
  {{Coles}}},\ }\bibfield  {title} {\emph {\bibinfo {title} {{Qubit-efficient
  exponential suppression of errors}},}\ }\href@noop {} {\Eprint
  {http://arxiv.org/abs/2102.06056} {arXiv:2102.06056}  (\bibinfo {year}
  {2021})}\BibitemShut {NoStop}%
\bibitem [{\citenamefont {O'Brien}\ \emph {et~al.}(2021)\citenamefont
  {O'Brien}, \citenamefont {Polla}, \citenamefont {Rubin}, \citenamefont
  {Huggins}, \citenamefont {McArdle}, \citenamefont {Boixo}, \citenamefont
  {McClean},\ and\ \citenamefont {Babbush}}]{O'Brien2021error}%
  \BibitemOpen
  \bibfield  {author} {\bibinfo {author} {\bibfnamefont {T.~E.}\ \bibnamefont
  {O'Brien}}, \bibinfo {author} {\bibfnamefont {S.}~\bibnamefont {Polla}},
  \bibinfo {author} {\bibfnamefont {N.~C.}\ \bibnamefont {Rubin}}, \bibinfo
  {author} {\bibfnamefont {W.~J.}\ \bibnamefont {Huggins}}, \bibinfo {author}
  {\bibfnamefont {S.}~\bibnamefont {McArdle}}, \bibinfo {author} {\bibfnamefont
  {S.}~\bibnamefont {Boixo}}, \bibinfo {author} {\bibfnamefont {J.~R.}\
  \bibnamefont {McClean}}, \ and\ \bibinfo {author} {\bibfnamefont
  {R.}~\bibnamefont {Babbush}},\ }\bibfield  {title} {\emph {\bibinfo {title}
  {Error mitigation via verified phase estimation},}\ }\href
  {http://dx.doi.org/10.1103/PRXQuantum.2.020317} {\bibfield  {journal}
  {\bibinfo  {journal} {PRX Quantum}\ }\textbf {\bibinfo {volume} {2}},\
  \bibinfo {pages} {020317} (\bibinfo {year} {2021})}\BibitemShut {NoStop}%
\bibitem [{\citenamefont {Huo}\ and\ \citenamefont
  {Li}(2022)}]{Huo2022dual-state}%
  \BibitemOpen
  \bibfield  {author} {\bibinfo {author} {\bibfnamefont {M.}~\bibnamefont
  {Huo}}\ and\ \bibinfo {author} {\bibfnamefont {Y.}~\bibnamefont {Li}},\
  }\bibfield  {title} {\emph {\bibinfo {title} {Dual-state purification for
  practical quantum error mitigation},}\ }\href
  {http://dx.doi.org/10.1103/PhysRevA.105.022427} {\bibfield  {journal}
  {\bibinfo  {journal} {Phys. Rev. A}\ }\textbf {\bibinfo {volume} {105}},\
  \bibinfo {pages} {022427} (\bibinfo {year} {2022})}\BibitemShut {NoStop}%
\bibitem [{\citenamefont {Bultrini}\ \emph {et~al.}(2023)\citenamefont
  {Bultrini}, \citenamefont {Gordon}, \citenamefont {Czarnik}, \citenamefont
  {Arrasmith}, \citenamefont {Cerezo}, \citenamefont {Coles},\ and\
  \citenamefont {Cincio}}]{Bultrini2021unifying}%
  \BibitemOpen
  \bibfield  {author} {\bibinfo {author} {\bibfnamefont {D.}~\bibnamefont
  {Bultrini}}, \bibinfo {author} {\bibfnamefont {M.~H.}\ \bibnamefont
  {Gordon}}, \bibinfo {author} {\bibfnamefont {P.}~\bibnamefont {Czarnik}},
  \bibinfo {author} {\bibfnamefont {A.}~\bibnamefont {Arrasmith}}, \bibinfo
  {author} {\bibfnamefont {M.}~\bibnamefont {Cerezo}}, \bibinfo {author}
  {\bibfnamefont {P.~J.}\ \bibnamefont {Coles}}, \ and\ \bibinfo {author}
  {\bibfnamefont {L.}~\bibnamefont {Cincio}},\ }\bibfield  {title} {\emph
  {\bibinfo {title} {Unifying and benchmarking state-of-the-art quantum error
  mitigation techniques},}\ }\href
  {http://dx.doi.org/10.22331/q-2023-06-06-1034} {\bibfield  {journal}
  {\bibinfo  {journal} {{Quantum}}\ }\textbf {\bibinfo {volume} {7}},\ \bibinfo
  {pages} {1034} (\bibinfo {year} {2023})}\BibitemShut {NoStop}%
\bibitem [{\citenamefont {Cai}(2021{\natexlab{a}})}]{Cai2021quantum}%
  \BibitemOpen
  \bibfield  {author} {\bibinfo {author} {\bibfnamefont {Z.}~\bibnamefont
  {Cai}},\ }\bibfield  {title} {\emph {\bibinfo {title} {Quantum {E}rror
  {M}itigation using {S}ymmetry {E}xpansion},}\ }\href
  {http://dx.doi.org/10.22331/q-2021-09-21-548} {\bibfield  {journal} {\bibinfo
   {journal} {{Quantum}}\ }\textbf {\bibinfo {volume} {5}},\ \bibinfo {pages}
  {548} (\bibinfo {year} {2021}{\natexlab{a}})}\BibitemShut {NoStop}%
\bibitem [{\citenamefont {Wang}\ \emph {et~al.}(2023)\citenamefont {Wang},
  \citenamefont {Chen},\ and\ \citenamefont {Wang}}]{Wang2021mitigating}%
  \BibitemOpen
  \bibfield  {author} {\bibinfo {author} {\bibfnamefont {K.}~\bibnamefont
  {Wang}}, \bibinfo {author} {\bibfnamefont {Y.-A.}\ \bibnamefont {Chen}}, \
  and\ \bibinfo {author} {\bibfnamefont {X.}~\bibnamefont {Wang}},\ }\bibfield
  {title} {\emph {\bibinfo {title} {Mitigating quantum errors via truncated
  {Neumann} series},}\ }\href {http://dx.doi.org/10.1007/s11432-023-3786-1}
  {\bibfield  {journal} {\bibinfo  {journal} {Sci. China Inf. Sci.}\ }\textbf
  {\bibinfo {volume} {66}},\ \bibinfo {pages} {180508} (\bibinfo {year}
  {2023})}\BibitemShut {NoStop}%
\bibitem [{\citenamefont {Cai}(2021{\natexlab{b}})}]{Cai2021multi-exponential}%
  \BibitemOpen
  \bibfield  {author} {\bibinfo {author} {\bibfnamefont {Z.}~\bibnamefont
  {Cai}},\ }\bibfield  {title} {\emph {\bibinfo {title} {Multi-exponential
  error extrapolation and combining error mitigation techniques for {NISQ}
  applications},}\ }\href {http://dx.doi.org/10.1038/s41534-021-00404-3}
  {\bibfield  {journal} {\bibinfo  {journal} {npj Quantum Inf.}\ }\textbf
  {\bibinfo {volume} {7}},\ \bibinfo {pages} {80} (\bibinfo {year}
  {2021}{\natexlab{b}})}\BibitemShut {NoStop}%
\bibitem [{\citenamefont {Mari}\ \emph {et~al.}(2021)\citenamefont {Mari},
  \citenamefont {Shammah},\ and\ \citenamefont {Zeng}}]{Mari2021extending}%
  \BibitemOpen
  \bibfield  {author} {\bibinfo {author} {\bibfnamefont {A.}~\bibnamefont
  {Mari}}, \bibinfo {author} {\bibfnamefont {N.}~\bibnamefont {Shammah}}, \
  and\ \bibinfo {author} {\bibfnamefont {W.~J.}\ \bibnamefont {Zeng}},\
  }\bibfield  {title} {\emph {\bibinfo {title} {Extending quantum probabilistic
  error cancellation by noise scaling},}\ }\href
  {http://dx.doi.org/10.1103/PhysRevA.104.052607} {\bibfield  {journal}
  {\bibinfo  {journal} {Phys. Rev. A}\ }\textbf {\bibinfo {volume} {104}},\
  \bibinfo {pages} {052607} (\bibinfo {year} {2021})}\BibitemShut {NoStop}%
\bibitem [{\citenamefont {Mezher}\ \emph {et~al.}(2022)\citenamefont {Mezher},
  \citenamefont {Mills},\ and\ \citenamefont {Kashefi}}]{Mezher2022mitigating}%
  \BibitemOpen
  \bibfield  {author} {\bibinfo {author} {\bibfnamefont {R.}~\bibnamefont
  {Mezher}}, \bibinfo {author} {\bibfnamefont {J.}~\bibnamefont {Mills}}, \
  and\ \bibinfo {author} {\bibfnamefont {E.}~\bibnamefont {Kashefi}},\
  }\bibfield  {title} {\emph {\bibinfo {title} {Mitigating errors by quantum
  verification and postselection},}\ }\href
  {http://dx.doi.org/10.1103/PhysRevA.105.052608} {\bibfield  {journal}
  {\bibinfo  {journal} {Phys. Rev. A}\ }\textbf {\bibinfo {volume} {105}},\
  \bibinfo {pages} {052608} (\bibinfo {year} {2022})}\BibitemShut {NoStop}%
\bibitem [{\citenamefont {{Fontana}}\ \emph {et~al.}(2022)\citenamefont
  {{Fontana}}, \citenamefont {{Rungger}}, \citenamefont {{Duncan}},\ and\
  \citenamefont {{C{\^\i}rstoiu}}}]{Fontana2022spectral}%
  \BibitemOpen
  \bibfield  {author} {\bibinfo {author} {\bibfnamefont {E.}~\bibnamefont
  {{Fontana}}}, \bibinfo {author} {\bibfnamefont {I.}~\bibnamefont
  {{Rungger}}}, \bibinfo {author} {\bibfnamefont {R.}~\bibnamefont {{Duncan}}},
  \ and\ \bibinfo {author} {\bibfnamefont {C.}~\bibnamefont
  {{C{\^\i}rstoiu}}},\ }\bibfield  {title} {\emph {\bibinfo {title} {{Spectral
  analysis for noise diagnostics and filter-based digital error mitigation}},}\
  }\href@noop {} {\Eprint {http://arxiv.org/abs/2206.08811} {arXiv:2206.08811}
  (\bibinfo {year} {2022})}\BibitemShut {NoStop}%
\bibitem [{\citenamefont {Xiong}\ \emph {et~al.}(2022)\citenamefont {Xiong},
  \citenamefont {Ng},\ and\ \citenamefont {Hanzo}}]{Xiong2022quantum}%
  \BibitemOpen
  \bibfield  {author} {\bibinfo {author} {\bibfnamefont {Y.}~\bibnamefont
  {Xiong}}, \bibinfo {author} {\bibfnamefont {S.~X.}\ \bibnamefont {Ng}}, \
  and\ \bibinfo {author} {\bibfnamefont {L.}~\bibnamefont {Hanzo}},\ }\bibfield
   {title} {\emph {\bibinfo {title} {Quantum error mitigation relying on
  permutation filtering},}\ }\href
  {http://dx.doi.org/10.1109/TCOMM.2021.3132914} {\bibfield  {journal}
  {\bibinfo  {journal} {IEEE Trans. Commun.}\ }\textbf {\bibinfo {volume}
  {70}},\ \bibinfo {pages} {1927} (\bibinfo {year} {2022})}\BibitemShut
  {NoStop}%
\bibitem [{\citenamefont {Carnot}(1824)}]{Carnot1824reflections}%
  \BibitemOpen
  \bibfield  {author} {\bibinfo {author} {\bibfnamefont {S.}~\bibnamefont
  {Carnot}},\ }\bibfield  {title} {\emph {\bibinfo {title} {Reflections on the
  motive power of fire, and on machines fitted to develop that power},}\
  }\href@noop {} {\bibfield  {journal} {\bibinfo  {journal} {Paris: Bachelier}\
  }\textbf {\bibinfo {volume} {108}},\ \bibinfo {pages} {1824} (\bibinfo {year}
  {1824})}\BibitemShut {NoStop}%
\bibitem [{\citenamefont {Landauer}(1961)}]{Landauer1961irreversibility}%
  \BibitemOpen
  \bibfield  {author} {\bibinfo {author} {\bibfnamefont {R.}~\bibnamefont
  {Landauer}},\ }\bibfield  {title} {\emph {\bibinfo {title} {Irreversibility
  and heat generation in the computing process},}\ }\href
  {http://dx.doi.org/10.1147/rd.53.0183} {\bibfield  {journal} {\bibinfo
  {journal} {IBM J. Res. Dev.}\ }\textbf {\bibinfo {volume} {5}},\ \bibinfo
  {pages} {183} (\bibinfo {year} {1961})}\BibitemShut {NoStop}%
\bibitem [{\citenamefont {Brandão}\ \emph {et~al.}(2015)\citenamefont
  {Brandão}, \citenamefont {Horodecki}, \citenamefont {Ng}, \citenamefont
  {Oppenheim},\ and\ \citenamefont {Wehner}}]{Brandao2015second}%
  \BibitemOpen
  \bibfield  {author} {\bibinfo {author} {\bibfnamefont {F.}~\bibnamefont
  {Brandão}}, \bibinfo {author} {\bibfnamefont {M.}~\bibnamefont {Horodecki}},
  \bibinfo {author} {\bibfnamefont {N.}~\bibnamefont {Ng}}, \bibinfo {author}
  {\bibfnamefont {J.}~\bibnamefont {Oppenheim}}, \ and\ \bibinfo {author}
  {\bibfnamefont {S.}~\bibnamefont {Wehner}},\ }\bibfield  {title} {\emph
  {\bibinfo {title} {The second laws of quantum thermodynamics},}\ }\href
  {http://dx.doi.org/10.1073/pnas.1411728112} {\bibfield  {journal} {\bibinfo
  {journal} {Proc. Natl. Acad. Sci. U.S.A.}\ }\textbf {\bibinfo {volume}
  {112}},\ \bibinfo {pages} {3275} (\bibinfo {year} {2015})}\BibitemShut
  {NoStop}%
\bibitem [{\citenamefont {Bennett}\ \emph {et~al.}(1999)\citenamefont
  {Bennett}, \citenamefont {Shor}, \citenamefont {Smolin},\ and\ \citenamefont
  {Thapliyal}}]{Bennett1999entanglement}%
  \BibitemOpen
  \bibfield  {author} {\bibinfo {author} {\bibfnamefont {C.~H.}\ \bibnamefont
  {Bennett}}, \bibinfo {author} {\bibfnamefont {P.~W.}\ \bibnamefont {Shor}},
  \bibinfo {author} {\bibfnamefont {J.~A.}\ \bibnamefont {Smolin}}, \ and\
  \bibinfo {author} {\bibfnamefont {A.~V.}\ \bibnamefont {Thapliyal}},\
  }\bibfield  {title} {\emph {\bibinfo {title} {Entanglement-assisted classical
  capacity of noisy quantum channels},}\ }\href
  {http://dx.doi.org/10.1103/PhysRevLett.83.3081} {\bibfield  {journal}
  {\bibinfo  {journal} {Phys. Rev. Lett.}\ }\textbf {\bibinfo {volume} {83}},\
  \bibinfo {pages} {3081} (\bibinfo {year} {1999})}\BibitemShut {NoStop}%
\bibitem [{\citenamefont {Pirandola}\ \emph {et~al.}(2017)\citenamefont
  {Pirandola}, \citenamefont {Laurenza}, \citenamefont {Ottaviani},\ and\
  \citenamefont {Banchi}}]{Pirandola2017fundamental}%
  \BibitemOpen
  \bibfield  {author} {\bibinfo {author} {\bibfnamefont {S.}~\bibnamefont
  {Pirandola}}, \bibinfo {author} {\bibfnamefont {R.}~\bibnamefont {Laurenza}},
  \bibinfo {author} {\bibfnamefont {C.}~\bibnamefont {Ottaviani}}, \ and\
  \bibinfo {author} {\bibfnamefont {L.}~\bibnamefont {Banchi}},\ }\bibfield
  {title} {\emph {\bibinfo {title} {Fundamental limits of repeaterless quantum
  communications},}\ }\href {http://dx.doi.org/10.1038/ncomms15043} {\bibfield
  {journal} {\bibinfo  {journal} {Nat. Commun.}\ }\textbf {\bibinfo {volume}
  {8}},\ \bibinfo {pages} {15043} (\bibinfo {year} {2017})}\BibitemShut
  {NoStop}%
\bibitem [{\citenamefont {Regula}\ and\ \citenamefont
  {Takagi}(2021)}]{Regula2021fundamental}%
  \BibitemOpen
  \bibfield  {author} {\bibinfo {author} {\bibfnamefont {B.}~\bibnamefont
  {Regula}}\ and\ \bibinfo {author} {\bibfnamefont {R.}~\bibnamefont
  {Takagi}},\ }\bibfield  {title} {\emph {\bibinfo {title} {Fundamental
  limitations on distillation of quantum channel resources},}\ }\href
  {http://dx.doi.org/https://doi.org/10.1038/s41467-021-24699-0} {\bibfield
  {journal} {\bibinfo  {journal} {Nat. Commun.}\ }\textbf {\bibinfo {volume}
  {12}},\ \bibinfo {pages} {4411} (\bibinfo {year} {2021})}\BibitemShut
  {NoStop}%
\bibitem [{\citenamefont {Fang}\ and\ \citenamefont
  {Liu}(2022)}]{Fang2022nogo}%
  \BibitemOpen
  \bibfield  {author} {\bibinfo {author} {\bibfnamefont {K.}~\bibnamefont
  {Fang}}\ and\ \bibinfo {author} {\bibfnamefont {Z.-W.}\ \bibnamefont {Liu}},\
  }\bibfield  {title} {\emph {\bibinfo {title} {No-go theorems for quantum
  resource purification: New approach and channel theory},}\ }\href
  {http://dx.doi.org/10.1103/PRXQuantum.3.010337} {\bibfield  {journal}
  {\bibinfo  {journal} {PRX Quantum}\ }\textbf {\bibinfo {volume} {3}},\
  \bibinfo {pages} {010337} (\bibinfo {year} {2022})}\BibitemShut {NoStop}%
\bibitem [{\citenamefont {Takagi}\ \emph {et~al.}(2022)\citenamefont {Takagi},
  \citenamefont {Endo}, \citenamefont {Minagawa},\ and\ \citenamefont
  {Gu}}]{Takagi2021fundamental}%
  \BibitemOpen
  \bibfield  {author} {\bibinfo {author} {\bibfnamefont {R.}~\bibnamefont
  {Takagi}}, \bibinfo {author} {\bibfnamefont {S.}~\bibnamefont {Endo}},
  \bibinfo {author} {\bibfnamefont {S.}~\bibnamefont {Minagawa}}, \ and\
  \bibinfo {author} {\bibfnamefont {M.}~\bibnamefont {Gu}},\ }\bibfield
  {title} {\emph {\bibinfo {title} {Fundamental limits of quantum error
  mitigation},}\ }\href {http://dx.doi.org/10.1038/s41534-022-00618-z}
  {\bibfield  {journal} {\bibinfo  {journal} {npj Quantum Inf.}\ }\textbf
  {\bibinfo {volume} {8}},\ \bibinfo {pages} {114} (\bibinfo {year}
  {2022})}\BibitemShut {NoStop}%
\bibitem [{\citenamefont {Wang}\ \emph
  {et~al.}(2021{\natexlab{a}})\citenamefont {Wang}, \citenamefont {Czarnik},
  \citenamefont {Arrasmith}, \citenamefont {Cerezo}, \citenamefont {Cincio},\
  and\ \citenamefont {Coles}}]{Wang2021can}%
  \BibitemOpen
  \bibfield  {author} {\bibinfo {author} {\bibfnamefont {S.}~\bibnamefont
  {Wang}}, \bibinfo {author} {\bibfnamefont {P.}~\bibnamefont {Czarnik}},
  \bibinfo {author} {\bibfnamefont {A.}~\bibnamefont {Arrasmith}}, \bibinfo
  {author} {\bibfnamefont {M.}~\bibnamefont {Cerezo}}, \bibinfo {author}
  {\bibfnamefont {L.}~\bibnamefont {Cincio}}, \ and\ \bibinfo {author}
  {\bibfnamefont {P.~J.}\ \bibnamefont {Coles}},\ }\bibfield  {title} {\emph
  {\bibinfo {title} {Can error mitigation improve trainability of noisy
  variational quantum algorithms?}}\ }\href@noop {} {\Eprint
  {http://arxiv.org/abs/2109.01051} {arXiv:2109.01051}  (\bibinfo {year}
  {2021}{\natexlab{a}})}\BibitemShut {NoStop}%
\bibitem [{\citenamefont {Wang}\ \emph
  {et~al.}(2021{\natexlab{b}})\citenamefont {Wang}, \citenamefont {Fontana},
  \citenamefont {Cerezo}, \citenamefont {Sharma}, \citenamefont {Sone},
  \citenamefont {Cincio},\ and\ \citenamefont {Coles}}]{Wang2021noise-induced}%
  \BibitemOpen
  \bibfield  {author} {\bibinfo {author} {\bibfnamefont {S.}~\bibnamefont
  {Wang}}, \bibinfo {author} {\bibfnamefont {E.}~\bibnamefont {Fontana}},
  \bibinfo {author} {\bibfnamefont {M.}~\bibnamefont {Cerezo}}, \bibinfo
  {author} {\bibfnamefont {K.}~\bibnamefont {Sharma}}, \bibinfo {author}
  {\bibfnamefont {A.}~\bibnamefont {Sone}}, \bibinfo {author} {\bibfnamefont
  {L.}~\bibnamefont {Cincio}}, \ and\ \bibinfo {author} {\bibfnamefont {P.~J.}\
  \bibnamefont {Coles}},\ }\bibfield  {title} {\emph {\bibinfo {title}
  {Noise-induced barren plateaus in variational quantum algorithms},}\ }\href
  {http://dx.doi.org/10.1038/s41467-021-27045-6} {\bibfield  {journal}
  {\bibinfo  {journal} {Nat. Commun.}\ }\textbf {\bibinfo {volume} {12}},\
  \bibinfo {pages} {6961} (\bibinfo {year} {2021}{\natexlab{b}})}\BibitemShut
  {NoStop}%
\bibitem [{\citenamefont {Stilck~França}\ and\ \citenamefont
  {Garc{\'i}a-Patrón}(2021)}]{Franca2021limitations}%
  \BibitemOpen
  \bibfield  {author} {\bibinfo {author} {\bibfnamefont {D.}~\bibnamefont
  {Stilck~França}}\ and\ \bibinfo {author} {\bibfnamefont {R.}~\bibnamefont
  {Garc{\'i}a-Patrón}},\ }\bibfield  {title} {\emph {\bibinfo {title}
  {Limitations of optimization algorithms on noisy quantum devices},}\ }\href
  {http://dx.doi.org/10.1038/s41567-021-01356-3} {\bibfield  {journal}
  {\bibinfo  {journal} {Nat. Phys.}\ }\textbf {\bibinfo {volume} {17}},\
  \bibinfo {pages} {1221} (\bibinfo {year} {2021})}\BibitemShut {NoStop}%
\bibitem [{\citenamefont {De~Palma}\ \emph {et~al.}(2023)\citenamefont
  {De~Palma}, \citenamefont {Marvian}, \citenamefont {Rouz\'e},\ and\
  \citenamefont {Fran\ifmmode~\mbox{\c{c}}\else
  \c{c}\fi{}a}}]{DePalma2023limitations}%
  \BibitemOpen
  \bibfield  {author} {\bibinfo {author} {\bibfnamefont {G.}~\bibnamefont
  {De~Palma}}, \bibinfo {author} {\bibfnamefont {M.}~\bibnamefont {Marvian}},
  \bibinfo {author} {\bibfnamefont {C.}~\bibnamefont {Rouz\'e}}, \ and\
  \bibinfo {author} {\bibfnamefont {D.~S.}\ \bibnamefont
  {Fran\ifmmode~\mbox{\c{c}}\else \c{c}\fi{}a}},\ }\bibfield  {title} {\emph
  {\bibinfo {title} {Limitations of variational quantum algorithms: A quantum
  optimal transport approach},}\ }\href
  {http://dx.doi.org/10.1103/PRXQuantum.4.010309} {\bibfield  {journal}
  {\bibinfo  {journal} {PRX Quantum}\ }\textbf {\bibinfo {volume} {4}},\
  \bibinfo {pages} {010309} (\bibinfo {year} {2023})}\BibitemShut {NoStop}%
\bibitem [{\citenamefont {Nation}\ \emph {et~al.}(2021)\citenamefont {Nation},
  \citenamefont {Kang}, \citenamefont {Sundaresan},\ and\ \citenamefont
  {Gambetta}}]{Nation2021scalable}%
  \BibitemOpen
  \bibfield  {author} {\bibinfo {author} {\bibfnamefont {P.~D.}\ \bibnamefont
  {Nation}}, \bibinfo {author} {\bibfnamefont {H.}~\bibnamefont {Kang}},
  \bibinfo {author} {\bibfnamefont {N.}~\bibnamefont {Sundaresan}}, \ and\
  \bibinfo {author} {\bibfnamefont {J.~M.}\ \bibnamefont {Gambetta}},\
  }\bibfield  {title} {\emph {\bibinfo {title} {Scalable mitigation of
  measurement errors on quantum computers},}\ }\href
  {http://dx.doi.org/10.1103/PRXQuantum.2.040326} {\bibfield  {journal}
  {\bibinfo  {journal} {PRX Quantum}\ }\textbf {\bibinfo {volume} {2}},\
  \bibinfo {pages} {040326} (\bibinfo {year} {2021})}\BibitemShut {NoStop}%
\bibitem [{\citenamefont {Viola}\ and\ \citenamefont
  {Lloyd}(1998)}]{Viola1998dynamical}%
  \BibitemOpen
  \bibfield  {author} {\bibinfo {author} {\bibfnamefont {L.}~\bibnamefont
  {Viola}}\ and\ \bibinfo {author} {\bibfnamefont {S.}~\bibnamefont {Lloyd}},\
  }\bibfield  {title} {\emph {\bibinfo {title} {Dynamical suppression of
  decoherence in two-state quantum systems},}\ }\href
  {http://dx.doi.org/10.1103/PhysRevA.58.2733} {\bibfield  {journal} {\bibinfo
  {journal} {Phys. Rev. A}\ }\textbf {\bibinfo {volume} {58}},\ \bibinfo
  {pages} {2733} (\bibinfo {year} {1998})}\BibitemShut {NoStop}%
\bibitem [{\citenamefont {Viola}\ \emph {et~al.}(1999)\citenamefont {Viola},
  \citenamefont {Knill},\ and\ \citenamefont {Lloyd}}]{Viola1999dynamical}%
  \BibitemOpen
  \bibfield  {author} {\bibinfo {author} {\bibfnamefont {L.}~\bibnamefont
  {Viola}}, \bibinfo {author} {\bibfnamefont {E.}~\bibnamefont {Knill}}, \ and\
  \bibinfo {author} {\bibfnamefont {S.}~\bibnamefont {Lloyd}},\ }\bibfield
  {title} {\emph {\bibinfo {title} {Dynamical decoupling of open quantum
  systems},}\ }\href {http://dx.doi.org/10.1103/PhysRevLett.82.2417} {\bibfield
   {journal} {\bibinfo  {journal} {Phys. Rev. Lett.}\ }\textbf {\bibinfo
  {volume} {82}},\ \bibinfo {pages} {2417} (\bibinfo {year}
  {1999})}\BibitemShut {NoStop}%
\bibitem [{\citenamefont {{Buscemi}}\ \emph {et~al.}(2013)\citenamefont
  {{Buscemi}}, \citenamefont {{Dall'Arno}}, \citenamefont {{Ozawa}},\ and\
  \citenamefont {{Vedral}}}]{Buscemi2013twopoint}%
  \BibitemOpen
  \bibfield  {author} {\bibinfo {author} {\bibfnamefont {F.}~\bibnamefont
  {{Buscemi}}}, \bibinfo {author} {\bibfnamefont {M.}~\bibnamefont
  {{Dall'Arno}}}, \bibinfo {author} {\bibfnamefont {M.}~\bibnamefont
  {{Ozawa}}}, \ and\ \bibinfo {author} {\bibfnamefont {V.}~\bibnamefont
  {{Vedral}}},\ }\bibfield  {title} {\emph {\bibinfo {title} {{Direct
  observation of any two-point quantum correlation function}},}\ }\href@noop {}
  {\Eprint {http://arxiv.org/abs/1312.4240} {arXiv:1312.4240}  (\bibinfo {year}
  {2013})}\BibitemShut {NoStop}%
\bibitem [{\citenamefont {Takagi}(2021)}]{Takagi2021optimal}%
  \BibitemOpen
  \bibfield  {author} {\bibinfo {author} {\bibfnamefont {R.}~\bibnamefont
  {Takagi}},\ }\bibfield  {title} {\emph {\bibinfo {title} {Optimal resource
  cost for error mitigation},}\ }\href
  {http://dx.doi.org/10.1103/PhysRevResearch.3.033178} {\bibfield  {journal}
  {\bibinfo  {journal} {Phys. Rev. Research}\ }\textbf {\bibinfo {volume}
  {3}},\ \bibinfo {pages} {033178} (\bibinfo {year} {2021})}\BibitemShut
  {NoStop}%
\bibitem [{\citenamefont {Regula}\ \emph {et~al.}(2021)\citenamefont {Regula},
  \citenamefont {Takagi},\ and\ \citenamefont
  {Gu}}]{Regula2021operationalapplication}%
  \BibitemOpen
  \bibfield  {author} {\bibinfo {author} {\bibfnamefont {B.}~\bibnamefont
  {Regula}}, \bibinfo {author} {\bibfnamefont {R.}~\bibnamefont {Takagi}}, \
  and\ \bibinfo {author} {\bibfnamefont {M.}~\bibnamefont {Gu}},\ }\bibfield
  {title} {\emph {\bibinfo {title} {Operational applications of the diamond
  norm and related measures in quantifying the non-physicality of quantum
  maps},}\ }\href {http://dx.doi.org/10.22331/q-2021-08-09-522} {\bibfield
  {journal} {\bibinfo  {journal} {{Quantum}}\ }\textbf {\bibinfo {volume}
  {5}},\ \bibinfo {pages} {522} (\bibinfo {year} {2021})}\BibitemShut {NoStop}%
\bibitem [{\citenamefont {Jiang}\ \emph {et~al.}(2021)\citenamefont {Jiang},
  \citenamefont {Wang},\ and\ \citenamefont {Wang}}]{Jiang2021physical}%
  \BibitemOpen
  \bibfield  {author} {\bibinfo {author} {\bibfnamefont {J.}~\bibnamefont
  {Jiang}}, \bibinfo {author} {\bibfnamefont {K.}~\bibnamefont {Wang}}, \ and\
  \bibinfo {author} {\bibfnamefont {X.}~\bibnamefont {Wang}},\ }\bibfield
  {title} {\emph {\bibinfo {title} {Physical {I}mplementability of {L}inear
  {M}aps and {I}ts {A}pplication in {E}rror {M}itigation},}\ }\href
  {http://dx.doi.org/10.22331/q-2021-12-07-600} {\bibfield  {journal} {\bibinfo
   {journal} {{Quantum}}\ }\textbf {\bibinfo {volume} {5}},\ \bibinfo {pages}
  {600} (\bibinfo {year} {2021})}\BibitemShut {NoStop}%
\bibitem [{\citenamefont {Sun}\ \emph {et~al.}(2021)\citenamefont {Sun},
  \citenamefont {Yuan}, \citenamefont {Tsunoda}, \citenamefont {Vedral},
  \citenamefont {Benjamin},\ and\ \citenamefont {Endo}}]{Sun2021mitigating}%
  \BibitemOpen
  \bibfield  {author} {\bibinfo {author} {\bibfnamefont {J.}~\bibnamefont
  {Sun}}, \bibinfo {author} {\bibfnamefont {X.}~\bibnamefont {Yuan}}, \bibinfo
  {author} {\bibfnamefont {T.}~\bibnamefont {Tsunoda}}, \bibinfo {author}
  {\bibfnamefont {V.}~\bibnamefont {Vedral}}, \bibinfo {author} {\bibfnamefont
  {S.~C.}\ \bibnamefont {Benjamin}}, \ and\ \bibinfo {author} {\bibfnamefont
  {S.}~\bibnamefont {Endo}},\ }\bibfield  {title} {\emph {\bibinfo {title}
  {Mitigating realistic noise in practical noisy intermediate-scale quantum
  devices},}\ }\href {http://dx.doi.org/10.1103/PhysRevApplied.15.034026}
  {\bibfield  {journal} {\bibinfo  {journal} {Phys. Rev. Applied}\ }\textbf
  {\bibinfo {volume} {15}},\ \bibinfo {pages} {034026} (\bibinfo {year}
  {2021})}\BibitemShut {NoStop}%
\bibitem [{\citenamefont {Piveteau}\ \emph {et~al.}(2022)\citenamefont
  {Piveteau}, \citenamefont {Sutter},\ and\ \citenamefont
  {Woerner}}]{Piveteau2022quasiprobability}%
  \BibitemOpen
  \bibfield  {author} {\bibinfo {author} {\bibfnamefont {C.}~\bibnamefont
  {Piveteau}}, \bibinfo {author} {\bibfnamefont {D.}~\bibnamefont {Sutter}}, \
  and\ \bibinfo {author} {\bibfnamefont {S.}~\bibnamefont {Woerner}},\
  }\bibfield  {title} {\emph {\bibinfo {title} {Quasiprobability decompositions
  with reduced sampling overhead},}\ }\href
  {http://dx.doi.org/10.1038/s41534-022-00517-3} {\bibfield  {journal}
  {\bibinfo  {journal} {npj Quantum Inf.}\ }\textbf {\bibinfo {volume} {8}},\
  \bibinfo {pages} {12} (\bibinfo {year} {2022})}\BibitemShut {NoStop}%
\bibitem [{Note1()}]{Note1}%
  \BibitemOpen
  \bibinfo {note} {Here, we consider the standard matrix inequality with
  respect to the positive semidefinite cone, i.e., $M_1\leq M_2$ for two
  matrices $M_1$ and $M_2$ means that the matrix $M_2-M_1$ is positive
  semidefinite.}\BibitemShut {Stop}%
\bibitem [{\citenamefont {Nielsen}\ and\ \citenamefont
  {Chuang}(2011)}]{Nielsen2011quantum}%
  \BibitemOpen
  \bibfield  {author} {\bibinfo {author} {\bibfnamefont {M.~A.}\ \bibnamefont
  {Nielsen}}\ and\ \bibinfo {author} {\bibfnamefont {I.~L.}\ \bibnamefont
  {Chuang}},\ }\href@noop {} {\emph {\bibinfo {title} {Quantum {{Computation}}
  and {{Quantum Information}}}}},\ \bibinfo {edition} {10th}\ ed.\ (\bibinfo
  {publisher} {{Cambridge University Press}},\ \bibinfo {address} {{New York,
  NY, USA}},\ \bibinfo {year} {2011})\BibitemShut {NoStop}%
\bibitem [{Note2()}]{Note2}%
  \BibitemOpen
  \bibinfo {note} {See the Supplemental Material for detailed proofs and
  discussions of our main results, which includes Refs.~\cite
  {Fuchs1999cryptographic,Hiai1981sufficiency,Reeb2011Hibert,Kastoryano2013quantum,DePalma2021quantum,Gilchrist2005distance,Tajima2018uncertainty,Katsube2020fundamental,Hayashi2016quantum,Audenaert2005continuity,Carbone2015logarithmic,Kastoryano2016quantum,MullerHermes2016relative,MullerHermes2016entropy,Bardet2021onthemodified,Beigi2020quantum,Capel2020themodified,Hirche2020oncontraction,MullerLennert2013onquantum,Gorini1976completely,Temme2014hypercontractivity,Temme2015howfast,Capel2018quantum,Temme2010chi,qiskit_noise}}\BibitemShut
  {NoStop}%
\bibitem [{\citenamefont {Fuchs}\ and\ \citenamefont {{van de
  Graaf}}(1999)}]{Fuchs1999cryptographic}%
  \BibitemOpen
  \bibfield  {author} {\bibinfo {author} {\bibfnamefont {C.}~\bibnamefont
  {Fuchs}}\ and\ \bibinfo {author} {\bibfnamefont {J.}~\bibnamefont {{van de
  Graaf}}},\ }\bibfield  {title} {\emph {\bibinfo {title} {Cryptographic
  distinguishability measures for quantum-mechanical states},}\ }\href
  {http://dx.doi.org/10.1109/18.761271} {\bibfield  {journal} {\bibinfo
  {journal} {IEEE Trans. Inf. Theory}\ }\textbf {\bibinfo {volume} {45}},\
  \bibinfo {pages} {1216} (\bibinfo {year} {1999})}\BibitemShut {NoStop}%
\bibitem [{\citenamefont {Hiai}\ \emph {et~al.}(1981)\citenamefont {Hiai},
  \citenamefont {Ohya},\ and\ \citenamefont {Tsukada}}]{Hiai1981sufficiency}%
  \BibitemOpen
  \bibfield  {author} {\bibinfo {author} {\bibfnamefont {F.}~\bibnamefont
  {Hiai}}, \bibinfo {author} {\bibfnamefont {M.}~\bibnamefont {Ohya}}, \ and\
  \bibinfo {author} {\bibfnamefont {M.}~\bibnamefont {Tsukada}},\ }\bibfield
  {title} {\emph {\bibinfo {title} {{Sufficiency, KMS condition and relative
  entropy in von Neumann algebras.}}}\ }\href
  {http://dx.doi.org/https://doi.org/10.1142/9789812794208_0030} {\bibfield
  {journal} {\bibinfo  {journal} {Pac. J. Math.}\ }\textbf {\bibinfo {volume}
  {96}},\ \bibinfo {pages} {99} (\bibinfo {year} {1981})}\BibitemShut {NoStop}%
\bibitem [{\citenamefont {Reeb}\ \emph {et~al.}(2011)\citenamefont {Reeb},
  \citenamefont {Kastoryano},\ and\ \citenamefont {Wolf}}]{Reeb2011Hibert}%
  \BibitemOpen
  \bibfield  {author} {\bibinfo {author} {\bibfnamefont {D.}~\bibnamefont
  {Reeb}}, \bibinfo {author} {\bibfnamefont {M.~J.}\ \bibnamefont
  {Kastoryano}}, \ and\ \bibinfo {author} {\bibfnamefont {M.~M.}\ \bibnamefont
  {Wolf}},\ }\bibfield  {title} {\emph {\bibinfo {title} {Hilbert's projective
  metric in quantum information theory},}\ }\href
  {http://dx.doi.org/10.1063/1.3615729} {\bibfield  {journal} {\bibinfo
  {journal} {J. Math. Phys.}\ }\textbf {\bibinfo {volume} {52}},\ \bibinfo
  {pages} {082201} (\bibinfo {year} {2011})}\BibitemShut {NoStop}%
\bibitem [{\citenamefont {Kastoryano}\ and\ \citenamefont
  {Temme}(2013)}]{Kastoryano2013quantum}%
  \BibitemOpen
  \bibfield  {author} {\bibinfo {author} {\bibfnamefont {M.~J.}\ \bibnamefont
  {Kastoryano}}\ and\ \bibinfo {author} {\bibfnamefont {K.}~\bibnamefont
  {Temme}},\ }\bibfield  {title} {\emph {\bibinfo {title} {Quantum logarithmic
  sobolev inequalities and rapid mixing},}\ }\href
  {http://dx.doi.org/10.1063/1.4804995} {\bibfield  {journal} {\bibinfo
  {journal} {J. Math. Phys.}\ }\textbf {\bibinfo {volume} {54}},\ \bibinfo
  {pages} {052202} (\bibinfo {year} {2013})}\BibitemShut {NoStop}%
\bibitem [{\citenamefont {De~Palma}\ \emph {et~al.}(2021)\citenamefont
  {De~Palma}, \citenamefont {Marvian}, \citenamefont {Trevisan},\ and\
  \citenamefont {Lloyd}}]{DePalma2021quantum}%
  \BibitemOpen
  \bibfield  {author} {\bibinfo {author} {\bibfnamefont {G.}~\bibnamefont
  {De~Palma}}, \bibinfo {author} {\bibfnamefont {M.}~\bibnamefont {Marvian}},
  \bibinfo {author} {\bibfnamefont {D.}~\bibnamefont {Trevisan}}, \ and\
  \bibinfo {author} {\bibfnamefont {S.}~\bibnamefont {Lloyd}},\ }\bibfield
  {title} {\emph {\bibinfo {title} {The quantum wasserstein distance of order
  1},}\ }\href {http://dx.doi.org/10.1109/TIT.2021.3076442} {\bibfield
  {journal} {\bibinfo  {journal} {IEEE Trans. Inf. Theory}\ }\textbf {\bibinfo
  {volume} {67}},\ \bibinfo {pages} {6627} (\bibinfo {year}
  {2021})}\BibitemShut {NoStop}%
\bibitem [{\citenamefont {Gilchrist}\ \emph {et~al.}(2005)\citenamefont
  {Gilchrist}, \citenamefont {Langford},\ and\ \citenamefont
  {Nielsen}}]{Gilchrist2005distance}%
  \BibitemOpen
  \bibfield  {author} {\bibinfo {author} {\bibfnamefont {A.}~\bibnamefont
  {Gilchrist}}, \bibinfo {author} {\bibfnamefont {N.~K.}\ \bibnamefont
  {Langford}}, \ and\ \bibinfo {author} {\bibfnamefont {M.~A.}\ \bibnamefont
  {Nielsen}},\ }\bibfield  {title} {\emph {\bibinfo {title} {Distance measures
  to compare real and ideal quantum processes},}\ }\href
  {http://dx.doi.org/10.1103/PhysRevA.71.062310} {\bibfield  {journal}
  {\bibinfo  {journal} {Phys. Rev. A}\ }\textbf {\bibinfo {volume} {71}},\
  \bibinfo {pages} {062310} (\bibinfo {year} {2005})}\BibitemShut {NoStop}%
\bibitem [{\citenamefont {Tajima}\ \emph {et~al.}(2018)\citenamefont {Tajima},
  \citenamefont {Shiraishi},\ and\ \citenamefont
  {Saito}}]{Tajima2018uncertainty}%
  \BibitemOpen
  \bibfield  {author} {\bibinfo {author} {\bibfnamefont {H.}~\bibnamefont
  {Tajima}}, \bibinfo {author} {\bibfnamefont {N.}~\bibnamefont {Shiraishi}}, \
  and\ \bibinfo {author} {\bibfnamefont {K.}~\bibnamefont {Saito}},\ }\bibfield
   {title} {\emph {\bibinfo {title} {Uncertainty relations in implementation of
  unitary operations},}\ }\href
  {http://dx.doi.org/10.1103/PhysRevLett.121.110403} {\bibfield  {journal}
  {\bibinfo  {journal} {Phys. Rev. Lett.}\ }\textbf {\bibinfo {volume} {121}},\
  \bibinfo {pages} {110403} (\bibinfo {year} {2018})}\BibitemShut {NoStop}%
\bibitem [{\citenamefont {Katsube}\ \emph {et~al.}(2020)\citenamefont
  {Katsube}, \citenamefont {Hotta},\ and\ \citenamefont
  {Yamaguchi}}]{Katsube2020fundamental}%
  \BibitemOpen
  \bibfield  {author} {\bibinfo {author} {\bibfnamefont {R.}~\bibnamefont
  {Katsube}}, \bibinfo {author} {\bibfnamefont {M.}~\bibnamefont {Hotta}}, \
  and\ \bibinfo {author} {\bibfnamefont {K.}~\bibnamefont {Yamaguchi}},\
  }\bibfield  {title} {\emph {\bibinfo {title} {A fundamental upper bound for
  signal to noise ratio of quantum detectors},}\ }\href
  {http://dx.doi.org/10.7566/JPSJ.89.054005} {\bibfield  {journal} {\bibinfo
  {journal} {J. Phys. Soc. Japan}\ }\textbf {\bibinfo {volume} {89}},\ \bibinfo
  {pages} {054005} (\bibinfo {year} {2020})}\BibitemShut {NoStop}%
\bibitem [{\citenamefont {Hayashi}(2016)}]{Hayashi2016quantum}%
  \BibitemOpen
  \bibfield  {author} {\bibinfo {author} {\bibfnamefont {M.}~\bibnamefont
  {Hayashi}},\ }\href@noop {} {\emph {\bibinfo {title} {Quantum Information
  Theory: Mathematical Foundation}}}\ (\bibinfo  {publisher} {Springer},\
  \bibinfo {year} {2016})\BibitemShut {NoStop}%
\bibitem [{\citenamefont {Audenaert}\ and\ \citenamefont
  {Eisert}(2005)}]{Audenaert2005continuity}%
  \BibitemOpen
  \bibfield  {author} {\bibinfo {author} {\bibfnamefont {K.~M.~R.}\
  \bibnamefont {Audenaert}}\ and\ \bibinfo {author} {\bibfnamefont
  {J.}~\bibnamefont {Eisert}},\ }\bibfield  {title} {\emph {\bibinfo {title}
  {Continuity bounds on the quantum relative entropy},}\ }\href
  {http://dx.doi.org/10.1063/1.2044667} {\bibfield  {journal} {\bibinfo
  {journal} {J. Math. Phys.}\ }\textbf {\bibinfo {volume} {46}},\ \bibinfo
  {pages} {102104} (\bibinfo {year} {2005})}\BibitemShut {NoStop}%
\bibitem [{\citenamefont {Carbone}\ and\ \citenamefont
  {Martinelli}(2015)}]{Carbone2015logarithmic}%
  \BibitemOpen
  \bibfield  {author} {\bibinfo {author} {\bibfnamefont {R.}~\bibnamefont
  {Carbone}}\ and\ \bibinfo {author} {\bibfnamefont {A.}~\bibnamefont
  {Martinelli}},\ }\bibfield  {title} {\emph {\bibinfo {title} {Logarithmic
  sobolev inequalities in non-commutative algebras},}\ }\href
  {http://dx.doi.org/10.1142/S0219025715500113} {\bibfield  {journal} {\bibinfo
   {journal} {Infin. Dimens. Anal. Quantum Probab. Relat. Top.}\ }\textbf
  {\bibinfo {volume} {18}},\ \bibinfo {pages} {1550011} (\bibinfo {year}
  {2015})}\BibitemShut {NoStop}%
\bibitem [{\citenamefont {Kastoryano}\ and\ \citenamefont
  {Brandão}(2016)}]{Kastoryano2016quantum}%
  \BibitemOpen
  \bibfield  {author} {\bibinfo {author} {\bibfnamefont {M.~J.}\ \bibnamefont
  {Kastoryano}}\ and\ \bibinfo {author} {\bibfnamefont {F.~G. S.~L.}\
  \bibnamefont {Brandão}},\ }\bibfield  {title} {\emph {\bibinfo {title}
  {Quantum {Gibbs} {Samplers}: {The} {Commuting} {Case}},}\ }\href
  {http://dx.doi.org/10.1007/s00220-016-2641-8} {\bibfield  {journal} {\bibinfo
   {journal} {Commun. Math. Phys.}\ }\textbf {\bibinfo {volume} {344}},\
  \bibinfo {pages} {915} (\bibinfo {year} {2016})}\BibitemShut {NoStop}%
\bibitem [{\citenamefont {Müller-Hermes}\ \emph
  {et~al.}(2016{\natexlab{a}})\citenamefont {Müller-Hermes}, \citenamefont
  {Stilck~França},\ and\ \citenamefont {Wolf}}]{MullerHermes2016relative}%
  \BibitemOpen
  \bibfield  {author} {\bibinfo {author} {\bibfnamefont {A.}~\bibnamefont
  {Müller-Hermes}}, \bibinfo {author} {\bibfnamefont {D.}~\bibnamefont
  {Stilck~França}}, \ and\ \bibinfo {author} {\bibfnamefont {M.~M.}\
  \bibnamefont {Wolf}},\ }\bibfield  {title} {\emph {\bibinfo {title} {Relative
  entropy convergence for depolarizing channels},}\ }\href
  {http://dx.doi.org/10.1063/1.4939560} {\bibfield  {journal} {\bibinfo
  {journal} {J. Math. Phys.}\ }\textbf {\bibinfo {volume} {57}},\ \bibinfo
  {pages} {022202} (\bibinfo {year} {2016}{\natexlab{a}})}\BibitemShut
  {NoStop}%
\bibitem [{\citenamefont {Müller-Hermes}\ \emph
  {et~al.}(2016{\natexlab{b}})\citenamefont {Müller-Hermes}, \citenamefont
  {Stilck~França},\ and\ \citenamefont {Wolf}}]{MullerHermes2016entropy}%
  \BibitemOpen
  \bibfield  {author} {\bibinfo {author} {\bibfnamefont {A.}~\bibnamefont
  {Müller-Hermes}}, \bibinfo {author} {\bibfnamefont {D.}~\bibnamefont
  {Stilck~França}}, \ and\ \bibinfo {author} {\bibfnamefont {M.~M.}\
  \bibnamefont {Wolf}},\ }\bibfield  {title} {\emph {\bibinfo {title} {Entropy
  production of doubly stochastic quantum channels},}\ }\href
  {http://dx.doi.org/10.1063/1.4941136} {\bibfield  {journal} {\bibinfo
  {journal} {J. Math. Phys.}\ }\textbf {\bibinfo {volume} {57}},\ \bibinfo
  {pages} {022203} (\bibinfo {year} {2016}{\natexlab{b}})}\BibitemShut
  {NoStop}%
\bibitem [{\citenamefont {Bardet}\ \emph {et~al.}(2021)\citenamefont {Bardet},
  \citenamefont {Capel}, \citenamefont {Lucia}, \citenamefont
  {P{\'e}rez-Garc{\'i}a},\ and\ \citenamefont
  {Rouz{\'e}}}]{Bardet2021onthemodified}%
  \BibitemOpen
  \bibfield  {author} {\bibinfo {author} {\bibfnamefont {I.}~\bibnamefont
  {Bardet}}, \bibinfo {author} {\bibfnamefont {{\'A}.}~\bibnamefont {Capel}},
  \bibinfo {author} {\bibfnamefont {A.}~\bibnamefont {Lucia}}, \bibinfo
  {author} {\bibfnamefont {D.}~\bibnamefont {P{\'e}rez-Garc{\'i}a}}, \ and\
  \bibinfo {author} {\bibfnamefont {C.}~\bibnamefont {Rouz{\'e}}},\ }\bibfield
  {title} {\emph {\bibinfo {title} {On the modified logarithmic sobolev
  inequality for the heat-bath dynamics for 1d systems},}\ }\href
  {http://dx.doi.org/10.1063/1.5142186} {\bibfield  {journal} {\bibinfo
  {journal} {J. Math. Phys.}\ }\textbf {\bibinfo {volume} {62}},\ \bibinfo
  {pages} {061901} (\bibinfo {year} {2021})}\BibitemShut {NoStop}%
\bibitem [{\citenamefont {Beigi}\ \emph {et~al.}(2020)\citenamefont {Beigi},
  \citenamefont {Datta},\ and\ \citenamefont {Rouz{\'e}}}]{Beigi2020quantum}%
  \BibitemOpen
  \bibfield  {author} {\bibinfo {author} {\bibfnamefont {S.}~\bibnamefont
  {Beigi}}, \bibinfo {author} {\bibfnamefont {N.}~\bibnamefont {Datta}}, \ and\
  \bibinfo {author} {\bibfnamefont {C.}~\bibnamefont {Rouz{\'e}}},\ }\bibfield
  {title} {\emph {\bibinfo {title} {Quantum {Reverse} {Hypercontractivity}:
  {Its} {Tensorization} and {Application} to {Strong} {Converses}},}\ }\href
  {http://dx.doi.org/10.1007/s00220-020-03750-z} {\bibfield  {journal}
  {\bibinfo  {journal} {Commun. Math. Phys.}\ }\textbf {\bibinfo {volume}
  {376}},\ \bibinfo {pages} {753} (\bibinfo {year} {2020})}\BibitemShut
  {NoStop}%
\bibitem [{\citenamefont {{Capel}}\ \emph {et~al.}(2020)\citenamefont
  {{Capel}}, \citenamefont {{Rouz{\'e}}},\ and\ \citenamefont {{Stilck
  Fran{\c{c}}a}}}]{Capel2020themodified}%
  \BibitemOpen
  \bibfield  {author} {\bibinfo {author} {\bibfnamefont {{\'A}.}~\bibnamefont
  {{Capel}}}, \bibinfo {author} {\bibfnamefont {C.}~\bibnamefont
  {{Rouz{\'e}}}}, \ and\ \bibinfo {author} {\bibfnamefont {D.}~\bibnamefont
  {{Stilck Fran{\c{c}}a}}},\ }\bibfield  {title} {\emph {\bibinfo {title} {{The
  modified logarithmic Sobolev inequality for quantum spin systems: classical
  and commuting nearest neighbour interactions}},}\ }\href@noop {} {\Eprint
  {http://arxiv.org/abs/2009.11817} {arXiv:2009.11817}  (\bibinfo {year}
  {2020})}\BibitemShut {NoStop}%
\bibitem [{\citenamefont {Hirche}\ \emph {et~al.}(2022)\citenamefont {Hirche},
  \citenamefont {Rouz{\'{e}}},\ and\ \citenamefont
  {Stilck~Fran{\c{c}}a}}]{Hirche2020oncontraction}%
  \BibitemOpen
  \bibfield  {author} {\bibinfo {author} {\bibfnamefont {C.}~\bibnamefont
  {Hirche}}, \bibinfo {author} {\bibfnamefont {C.}~\bibnamefont {Rouz{\'{e}}}},
  \ and\ \bibinfo {author} {\bibfnamefont {D.}~\bibnamefont
  {Stilck~Fran{\c{c}}a}},\ }\bibfield  {title} {\emph {\bibinfo {title} {On
  contraction coefficients, partial orders and approximation of capacities for
  quantum channels},}\ }\href {http://dx.doi.org/10.22331/q-2022-11-28-862}
  {\bibfield  {journal} {\bibinfo  {journal} {{Quantum}}\ }\textbf {\bibinfo
  {volume} {6}},\ \bibinfo {pages} {862} (\bibinfo {year} {2022})}\BibitemShut
  {NoStop}%
\bibitem [{\citenamefont {Müller-Lennert}\ \emph {et~al.}(2013)\citenamefont
  {Müller-Lennert}, \citenamefont {Dupuis}, \citenamefont {Szehr},
  \citenamefont {Fehr},\ and\ \citenamefont
  {Tomamichel}}]{MullerLennert2013onquantum}%
  \BibitemOpen
  \bibfield  {author} {\bibinfo {author} {\bibfnamefont {M.}~\bibnamefont
  {Müller-Lennert}}, \bibinfo {author} {\bibfnamefont {F.}~\bibnamefont
  {Dupuis}}, \bibinfo {author} {\bibfnamefont {O.}~\bibnamefont {Szehr}},
  \bibinfo {author} {\bibfnamefont {S.}~\bibnamefont {Fehr}}, \ and\ \bibinfo
  {author} {\bibfnamefont {M.}~\bibnamefont {Tomamichel}},\ }\bibfield  {title}
  {\emph {\bibinfo {title} {On quantum r{\'e}nyi entropies: A new
  generalization and some properties},}\ }\href
  {http://dx.doi.org/10.1063/1.4838856} {\bibfield  {journal} {\bibinfo
  {journal} {J. Math. Phys.}\ }\textbf {\bibinfo {volume} {54}},\ \bibinfo
  {pages} {122203} (\bibinfo {year} {2013})}\BibitemShut {NoStop}%
\bibitem [{\citenamefont {Gorini}\ \emph {et~al.}(1976)\citenamefont {Gorini},
  \citenamefont {Kossakowski},\ and\ \citenamefont
  {Sudarshan}}]{Gorini1976completely}%
  \BibitemOpen
  \bibfield  {author} {\bibinfo {author} {\bibfnamefont {V.}~\bibnamefont
  {Gorini}}, \bibinfo {author} {\bibfnamefont {A.}~\bibnamefont {Kossakowski}},
  \ and\ \bibinfo {author} {\bibfnamefont {E.~C.~G.}\ \bibnamefont
  {Sudarshan}},\ }\bibfield  {title} {\emph {\bibinfo {title} {Completely
  positive dynamical semigroups of n‐level systems},}\ }\href
  {http://dx.doi.org/10.1063/1.522979} {\bibfield  {journal} {\bibinfo
  {journal} {J. Math. Phys.}\ }\textbf {\bibinfo {volume} {17}},\ \bibinfo
  {pages} {821} (\bibinfo {year} {1976})}\BibitemShut {NoStop}%
\bibitem [{\citenamefont {Temme}\ \emph {et~al.}(2014)\citenamefont {Temme},
  \citenamefont {Pastawski},\ and\ \citenamefont
  {Kastoryano}}]{Temme2014hypercontractivity}%
  \BibitemOpen
  \bibfield  {author} {\bibinfo {author} {\bibfnamefont {K.}~\bibnamefont
  {Temme}}, \bibinfo {author} {\bibfnamefont {F.}~\bibnamefont {Pastawski}}, \
  and\ \bibinfo {author} {\bibfnamefont {M.~J.}\ \bibnamefont {Kastoryano}},\
  }\bibfield  {title} {\emph {\bibinfo {title} {Hypercontractivity of
  quasi-free quantum semigroups},}\ }\href
  {http://dx.doi.org/10.1088/1751-8113/47/40/405303} {\bibfield  {journal}
  {\bibinfo  {journal} {J. Phys. A: Math. Theor.}\ }\textbf {\bibinfo {volume}
  {47}},\ \bibinfo {pages} {405303} (\bibinfo {year} {2014})}\BibitemShut
  {NoStop}%
\bibitem [{\citenamefont {{Temme}}\ and\ \citenamefont
  {{Kastoryano}}(2015)}]{Temme2015howfast}%
  \BibitemOpen
  \bibfield  {author} {\bibinfo {author} {\bibfnamefont {K.}~\bibnamefont
  {{Temme}}}\ and\ \bibinfo {author} {\bibfnamefont {M.~J.}\ \bibnamefont
  {{Kastoryano}}},\ }\bibfield  {title} {\emph {\bibinfo {title} {{How fast do
  stabilizer Hamiltonians thermalize?}}}\ }\href@noop {} {\Eprint
  {http://arxiv.org/abs/1505.07811} {arXiv:1505.07811}  (\bibinfo {year}
  {2015})}\BibitemShut {NoStop}%
\bibitem [{\citenamefont {Capel}\ \emph {et~al.}(2018)\citenamefont {Capel},
  \citenamefont {Lucia},\ and\ \citenamefont
  {P{\'{e}}rez-Garc{\'{\i}}a}}]{Capel2018quantum}%
  \BibitemOpen
  \bibfield  {author} {\bibinfo {author} {\bibfnamefont {{\'{A}}.}~\bibnamefont
  {Capel}}, \bibinfo {author} {\bibfnamefont {A.}~\bibnamefont {Lucia}}, \ and\
  \bibinfo {author} {\bibfnamefont {D.}~\bibnamefont
  {P{\'{e}}rez-Garc{\'{\i}}a}},\ }\bibfield  {title} {\emph {\bibinfo {title}
  {Quantum conditional relative entropy and quasi-factorization of the relative
  entropy},}\ }\href {http://dx.doi.org/10.1088/1751-8121/aae4cf} {\bibfield
  {journal} {\bibinfo  {journal} {J. Phys. A: Math. Theor.}\ }\textbf {\bibinfo
  {volume} {51}},\ \bibinfo {pages} {484001} (\bibinfo {year}
  {2018})}\BibitemShut {NoStop}%
\bibitem [{\citenamefont {Temme}\ \emph {et~al.}(2010)\citenamefont {Temme},
  \citenamefont {Kastoryano}, \citenamefont {Ruskai}, \citenamefont {Wolf},\
  and\ \citenamefont {Verstraete}}]{Temme2010chi}%
  \BibitemOpen
  \bibfield  {author} {\bibinfo {author} {\bibfnamefont {K.}~\bibnamefont
  {Temme}}, \bibinfo {author} {\bibfnamefont {M.~J.}\ \bibnamefont
  {Kastoryano}}, \bibinfo {author} {\bibfnamefont {M.~B.}\ \bibnamefont
  {Ruskai}}, \bibinfo {author} {\bibfnamefont {M.~M.}\ \bibnamefont {Wolf}}, \
  and\ \bibinfo {author} {\bibfnamefont {F.}~\bibnamefont {Verstraete}},\
  }\bibfield  {title} {\emph {\bibinfo {title} {{The chi2-divergence and mixing
  times of quantum Markov processes}},}\ }\href
  {http://dx.doi.org/10.1063/1.3511335} {\bibfield  {journal} {\bibinfo
  {journal} {J. Math. Phys.}\ }\textbf {\bibinfo {volume} {51}},\ \bibinfo
  {pages} {122201} (\bibinfo {year} {2010})}\BibitemShut {NoStop}%
\bibitem [{qis()}]{qiskit_noise}%
  \BibitemOpen
  \href
  {https://qiskit.org/documentation/tutorials/simulators/2_device_noise_simulation.html}
  {\emph {\bibinfo {title} {Qiskit: Device backend noise model simulations},}\
  }\BibitemShut {NoStop}%
\bibitem [{\citenamefont {Tsubouchi}\ \emph {et~al.}(2023)\citenamefont
  {Tsubouchi}, \citenamefont {Sagawa},\ and\ \citenamefont
  {Yoshioka}}]{Tsubouchi2022}%
  \BibitemOpen
  \bibfield  {author} {\bibinfo {author} {\bibfnamefont {K.}~\bibnamefont
  {Tsubouchi}}, \bibinfo {author} {\bibfnamefont {T.}~\bibnamefont {Sagawa}}, \
  and\ \bibinfo {author} {\bibfnamefont {N.}~\bibnamefont {Yoshioka}},\
  }\bibfield  {title} {\emph {\bibinfo {title} {Universal cost bound of quantum
  error mitigation based on quantum estimation theory},}\ }\href
  {http://dx.doi.org/10.1103/PhysRevLett.131.210601} {\bibfield  {journal}
  {\bibinfo  {journal} {Phys. Rev. Lett.}\ }\textbf {\bibinfo {volume} {131}},\
  \bibinfo {pages} {210601} (\bibinfo {year} {2023})},\ \bibinfo {note}
  {published concurrently in the same issue}\BibitemShut {NoStop}%
\end{thebibliography}%


\clearpage
\newgeometry{hmargin=1.2in,vmargin=0.8in}

\widetext

\begin{center}
{\large \bf  Universal Sampling Lower Bounds for Quantum Error Mitigation\\ \vspace{0.2cm} Supplemental Material}
\end{center}

\appendix

\setcounter{thm}{0}
\renewcommand{\thethm}{S.\arabic{thm}}
\setcounter{figure}{0}
\renewcommand{\thefigure}{S.\arabic{figure}}

\section{Proof of Theorem~\ref{thm:sampling bound}}\label{app:sampling bound proof}

\begin{proof}
 For an ideal state $\rho\in\mbS$ and an observable $A\in\mbO$, let $p_{A,\rho}$ be the probability distribution for $\hat E_A(\rho)$.
 This means that we have 
 \bal
  \mP_A\left(\otimes_{n=1}^N \mE_n(\rho)\right)=\int da p_{A,\rho}(a)\dm{a}
  \label{eq:mitigation channel}
 \eal  
 where $\{\ket{a}\}$ is the classical states representing possible estimates represented by the random variable $\hat E_A(\rho)$. 
 
 Take arbitrary two states $\rho, \sigma\in\mbS$ satisfying $|\Tr(A\rho)-\Tr(A\sigma)|\geq 2\delta$.
 We assume $\Tr(A\rho)\geq \Tr(A\sigma)$ without loss of generality. 
 Let us divide the regions of estimates into three sections as (see Fig.~\ref{fig:overlap})
 \bal
  L\coloneqq \lset a \sbar a\leq \Tr(A\sigma)+\delta\rset,\ M\coloneqq \lset a \sbar \Tr(A\sigma)+\delta < a < \Tr(A\rho)-\delta\rset,\ R\coloneqq \lset a \sbar a\geq \Tr(A\rho)-\delta\rset.
 \eal

\begin{figure}[b]
    \centering
    \includegraphics[width=.5\columnwidth]{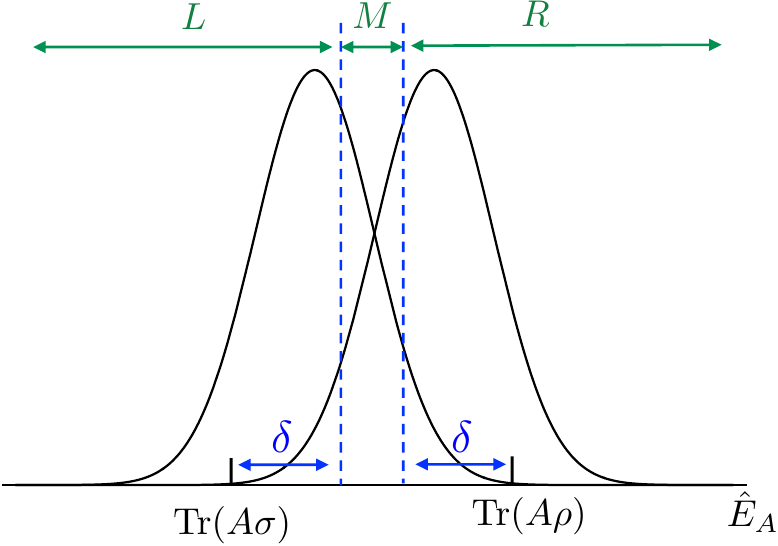}
    \caption{Probability distributions of $\hat E_A(\rho)$ and $\hat E_A(\sigma)$ for some states $\rho,\sigma\in\mbS$. The condition \eqref{eq:accuracy probability requirement} requires that $1-\varepsilon$ of the probability must be present around the true expectation values with $\delta$ deviation for both $\rho$ and $\sigma$.}
    \label{fig:overlap}
\end{figure}

The condition \eqref{eq:accuracy probability requirement} ensures that 
\bal
 \int_L da\,p_{A,\rho}(a)\leq \varepsilon,\quad\int_R da\, p_{A,\sigma}(a)\leq \varepsilon,\\
 \int_R da\,p_{A,\rho}(a)\geq 1-\varepsilon,\quad\int_L da\, p_{A,\sigma}(a)\geq 1-\varepsilon.
\label{eq:prob bound}
\eal
This allows us to bound the trace distance between two classical probability distributions $p_{A,\rho}$ and $p_{A,\sigma}$ as 
\bal
 D_{\rm tr}(p_{A,\rho},p_{A,\sigma})&=\frac{1}{2}\int_{L+M+R} da\,  |p_{A,\rho}(a)-p_{A,\sigma}(a)|\\
 &\geq \frac{1}{2}\int_L da\,  |p_{A,\rho}(a)-p_{A,\sigma}(a)|+\frac{1}{2}\int_R da\,  |p_{A,\rho}(a)-p_{A,\sigma}(a)|\\
 &\geq \frac{1}{2}\int_L da\,\left[  p_{A,\sigma}(a)-p_{A,\rho}(a)\right]+\frac{1}{2}\int_R da\,\left[  p_{A,\rho}(a)-p_{A,\sigma}(a)\right]\\
 &\geq 1-2\varepsilon
\eal
where we used \eqref{eq:prob bound} in the last inequality.
Noting that $\mP_A$ in \eqref{eq:mitigation channel} is a quantum channel, we apply the data-processing inequality to the trace distance to get 
\bal
 D_{\rm tr}(\otimes_{n=1}^N\mE_n(\rho),\otimes_{n=1}^N\mE_n(\sigma))\geq 1-2\varepsilon.
\label{eq:trace distance bound}
\eal
To extract $N$ out, recall that $D_{\rm tr}(\rho,\sigma)\leq \sqrt{1-F(\rho,\sigma)}$~\cite{Fuchs1999cryptographic} and that $F(\otimes_i \rho_i,\otimes_i\sigma_i)=\prod_i F(\rho_i,\sigma_i)$. 
Under the condition $\varepsilon\leq 1/2$, we can further bound \eqref{eq:trace distance bound} as 
\bal
 1-\prod_{n=1}^N F(\mE_n(\rho),\mE_n(\sigma))\geq (1-2\varepsilon)^2,
\eal
leading to 
\bal
 \sum_{n=1}^N\log F(\mE_n(\rho),\mE_n(\sigma))^{-1}\geq \log\left[\frac{1}{4\varepsilon(1-\varepsilon)}\right].
\eal
Taking $n^\star\coloneqq {\rm argmin}_n F(\mE_n(\rho),\mE_n(\sigma))$, i.e., $F(\mE_{n^\star}(\rho),\mE_{n^\star}(\sigma))^{-1}\geq F(\mE_{i}(\rho),\mE_{i}(\sigma))^{-1}, \forall i$, we get
\bal
 N\geq \frac{\log\left[\frac{1}{4\varepsilon(1-\varepsilon)}\right]}{\log F(\mE_{n^\star}(\rho),\mE_{n^\star}(\sigma))^{-1}}.
 \label{eq:fidelity bound}
\eal
Since this holds for every state $\rho$ and $\sigma$ such that $|\Tr(A\rho)-\Tr(A\sigma)|\geq 2\delta$, we maximize the right-hand side over such states to reach the advertised statement.

The relation with respect to the relative entropy can be obtained by upper bounding \eqref{eq:trace distance bound} with the relative entropy. 
Using the quantum Pinsker's inequality~\cite{Hiai1981sufficiency}
\bal
 D_{\tr}(\rho,\sigma)\leq \sqrt{\frac{\ln 2}{2}}\sqrt{S(\rho\|\sigma)}
\eal
and the additivity $S(\otimes_i\rho_i\|\otimes_i\sigma_i)=\sum_i S(\rho_i\|\sigma_i)$, we can bound \eqref{eq:trace distance bound} as 
\bal
 \frac{\ln 2}{2}\sum_{n=1}^N S\left(\mE_n(\rho)\|\mE_n(\sigma)\right)\geq (1-2\varepsilon)^2
\eal
when $\varepsilon\leq 1/2$.
Taking $n^\star\coloneqq {\rm argmax}_n S\left(\mE_n(\rho)\|\mE_n(\sigma)\right)$, i.e., $S(\mE_{n^\star}(\rho)\|\mE_{n^\star}(\sigma))\geq S(\mE_{i}(\rho)\|\mE_{i}(\sigma)), \forall i$, we get
\bal
 N\geq \frac{2(1-2\varepsilon)^2}{\ln 2 \cdot S(\mE_{n^\star}(\rho)\|\mE_{n^\star}(\sigma))}.
 \label{eq:relative entropy bound}
\eal
Since this holds for every state $\rho$ and $\sigma$ such that $|\Tr(A\rho)-\Tr(A\sigma)|\geq 2\delta$, we maximize the right-hand side over such states to reach the advertised statement.

\end{proof}

\section{Alternative bound with an explicit accuracy dependence}\label{app:bound with explicit accuracy dependence}

Here, we obtain an alternative bound to the one in Theorem~\ref{thm:sampling bound} that has an explicit dependence on the accuracy $\delta$ when $\mbS$ contains all pure states.
To this end, let us define a generalized contraction coefficient with respect to $\mbO$ by
\bal
 \eta^{\mbE,\mbO} \coloneqq \max_{\mE\in\mbE}\max_{\rho,\sigma\in\mbD} \frac{D_{\tr}(\mE(\rho),\mE(\sigma))}{D_\mbO(\rho,\sigma)},
\eal
where $\mbD$ is an arbitrary set that contains all pure quantum states. 
Note that $\eta^{\mbE,\mbO}$ can take any nonnegative value in general.
However, this particularly becomes the measure of contraction of trace distance when $\mbO=\{ A \,|\,0\leq A \leq \mbI \}$, in which case $0\leq\eta^{\mbE,\mbO}\leq 1$ always holds~\cite{Reeb2011Hibert,Kastoryano2013quantum,DePalma2021quantum}. 
The following result represents the lower bound in terms of accuracy, success probability, and the generalized contraction coefficient. 

\begin{pro}\label{pro:sample lower bound contraction}
Suppose $\mbS$ contains the set of all pure states. Then, the sampling cost $N$ that satisfies \eqref{eq:accuracy probability requirement} is lower bounded as 
\bal
 N\geq \frac{\log\left[\frac{1}{4\varepsilon(1-\varepsilon)}\right]}{\log\frac{1}{\left(1-2\eta^{\mbE,\mbO}\,\delta\right)^2}}.
\eal
\end{pro}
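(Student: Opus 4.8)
The plan is to derive this explicit bound from the first (fidelity-based) inequality of Theorem~\ref{thm:sampling bound}, converting the fidelity between distorted states into the observable-dependent distinguishability $D_\mbO$ between the ideal states. Two ingredients accomplish this conversion: the Fuchs--van de Graaf inequality, which relates fidelity and trace distance, and the defining property of the generalized contraction coefficient $\eta^{\mbE,\mbO}$, which relates the contracted trace distance to $D_\mbO$.

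First I would invoke the Fuchs--van de Graaf inequality. In the paper's convention, where $F$ is the squared fidelity, it reads $\sqrt{F(\tau_1,\tau_2)}\geq 1-D_{\tr}(\tau_1,\tau_2)$; since $D_{\tr}\leq 1$ always, squaring gives $F(\tau_1,\tau_2)\geq\left(1-D_{\tr}(\tau_1,\tau_2)\right)^2$. Applying this with $\tau_1=\mE(\rho)$ and $\tau_2=\mE(\sigma)$, and then using the defining property of $\eta^{\mbE,\mbO}$, namely $D_{\tr}(\mE(\rho),\mE(\sigma))\leq\eta^{\mbE,\mbO}D_\mbO(\rho,\sigma)$, which holds uniformly over all $\mE\in\mbE$ and all pure $\rho,\sigma$, yields $F(\mE(\rho),\mE(\sigma))\geq\left(1-\eta^{\mbE,\mbO}D_\mbO(\rho,\sigma)\right)^2$ for every $\mE\in\mbE$. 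Because this bound is uniform in $\mE$, the inner minimization in Theorem~\ref{thm:sampling bound} obeys
\[
\min_{\mE\in\mbE}\frac{\log[1/(4\varepsilon(1-\varepsilon))]}{\log[1/F(\mE(\rho),\mE(\sigma))]}\geq\frac{\log[1/(4\varepsilon(1-\varepsilon))]}{\log[1/(1-\eta^{\mbE,\mbO}D_\mbO(\rho,\sigma))^2]}.
\]

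Then I would perform the outer maximization over $\rho,\sigma\in\mbS$ with $D_\mbO(\rho,\sigma)\geq 2\delta$. The right-hand side above is monotonically decreasing in $D_\mbO(\rho,\sigma)$ (for $0<\eta^{\mbE,\mbO}D_\mbO<1$), so it is maximized by taking $D_\mbO$ as small as the constraint permits, i.e.\ $D_\mbO(\rho,\sigma)=2\delta$. Since $\mbS$ contains all pure states, the map $(\rho,\sigma)\mapsto D_\mbO(\rho,\sigma)$ is continuous and vanishes on the diagonal, so by the intermediate value theorem a pair of pure states with $D_\mbO(\rho,\sigma)=2\delta$ exists whenever $2\delta$ does not exceed the largest attainable value of $D_\mbO$. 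Substituting $D_\mbO=2\delta$ then gives the claimed bound $N\geq\log[1/(4\varepsilon(1-\varepsilon))]\big/\log[1/(1-2\eta^{\mbE,\mbO}\delta)^2]$.

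The bulk of the argument is routine once Theorem~\ref{thm:sampling bound} is granted; the step requiring the most care is the outer maximization. One must produce pure states that saturate $D_\mbO=2\delta$, not merely satisfy $D_\mbO\geq 2\delta$, because the bound degrades as $D_\mbO$ increases --- the continuity-and-IVT argument supplies them. One should also restrict to the meaningful regime $\eta^{\mbE,\mbO}\delta<1/2$, so that $1-2\eta^{\mbE,\mbO}\delta>0$ and the squared Fuchs--van de Graaf estimate $F\geq(1-D_{\tr})^2$ is nonvacuous; outside this regime the stated right-hand side is nonpositive and the bound holds trivially.
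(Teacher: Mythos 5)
Your proof is correct and follows essentially the same route as the paper's: apply the fidelity form of Theorem~\ref{thm:sampling bound}, lower-bound the fidelity via $F\geq(1-D_{\tr})^2$ and the defining inequality $D_{\tr}(\mE(\rho),\mE(\sigma))\leq\eta^{\mbE,\mbO}D_\mbO(\rho,\sigma)$, and then argue that the optimal admissible pair of pure states has $D_\mbO(\rho,\sigma)=2\delta$ (the paper encodes this as $\tilde\delta=2\delta$; your explicit continuity/IVT justification and your remark on the regime $\eta^{\mbE,\mbO}\delta<1/2$ are, if anything, slightly more careful than the original).
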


In the case of $\varepsilon\ll 1$ and $\delta\ll 1$, the lower bound approximately becomes $\log\left(\frac{1}{4\varepsilon}\right)/(4\eta^{\mbE,\mbO}\,\delta)$.
This bound has  the advantage of having explicit dependence on the accuracy and separating the contraction coefficient of the effective noise channel, making explicit the role of the reduction in the state distinguishability.

\begin{proof}
Let $\eta^{\mbE,\mbO,\mbS}$ be the generalized contraction coefficient (c.f., Refs.~\cite{Reeb2011Hibert,Kastoryano2013quantum,DePalma2021quantum}) for $\mE$ defined by 
\bal
 \eta^{\mbE,\mbO,\mbS} \coloneqq \max_{\mE\in\mbE}\max_{\rho,\sigma\in\mbS} \frac{D_{\tr}(\mE(\rho),\mE(\sigma))}{D_\mbO(\rho,\sigma)}.
\eal
Noting that $F(\rho,\sigma)\geq \left[1-D_{\tr}(\rho,\sigma)\right]^2$, we have for arbitrary $\rho,\sigma\in\mbS$ that
\bal
 \max_{\mE\in\mbE} \log F(\mE(\rho),\mE(\sigma))^{-1}&\leq \log\frac{1}{\left[1-\max_{\mE\in\mbE}D_{\tr}(\mE(\rho),\mE(\sigma))\right]^2}\\
 &\leq \log\frac{1}{\left[1-\eta^{\mbE,\mbO,\mbS}\max_{A\in\mbO}|\Tr[A(\rho-\sigma)]|\right]^2} 
\eal

Let $\tilde\delta\coloneqq \min_{\rho,\sigma\in\mbS}\lset \delta' \sbar \delta'=\max_{A\in\mbO}|\Tr[A(\rho-\sigma)]|\geq 2\delta \rset$. 
Maximizing the lower bound in \eqref{eq:sampling lower bound general} over $\rho,\sigma\in\mbS$ such that $\max_{A\in\mbO}|\Tr(A\rho)-\Tr(A\sigma)|\geq 2\delta$, we get 
\bal
 N\geq \frac{\log\left[\frac{1}{4\varepsilon(1-\varepsilon)}\right]}{\log\frac{1}{\left(1-\eta^{\mbE,\mbO,\mbS}\,\tilde\delta\right)^2}}.
\eal

Now, let $\mbD$ be a set that contains all pure states and consider arbitrary error mitigation with $\mbS=\mbD$. 
In such a scenario, we have $\tilde \delta = 2\delta$ and $\eta^{\mbE,\mbO,\mbS}=\eta^{\mbE,\mbO}$ where 
\bal
 \eta^{\mbE,\mbO} \coloneqq \max_{\mE\in\mbE}\max_{\rho,\sigma\in\mbD} \frac{D_{\tr}(\mE(\rho),\mE(\sigma))}{D_{\mbO}(\rho,\sigma)}.
\eal
This gives
\bal
 N\geq \frac{\log\left[\frac{1}{4\varepsilon(1-\varepsilon)}\right]}{\log\frac{1}{\left(1-2\eta^{\mbE}\,\delta\right)^2}},
\eal
concluding the proof.

\end{proof}

\section{Proof of Theorem~\ref{thm:sample bound standard deviation}}\label{app:sample bound standard deviation proof}

\begin{proof}

Let $D_F(\rho,\sigma)\coloneqq \sqrt{1-F(\rho,\sigma)}$ be the purified distance~\cite{Gilchrist2005distance}. 
For an observable $A$, let $\sigma_A(\rho)\coloneqq \sqrt{\Tr[(A-\Tr[A\rho])^2\rho]}$ be the standard deviation of the probability distribution for measuring the observable $A$ for state $\rho$.
Then, as an improvement of a relation reported in Ref.~\cite{Tajima2018uncertainty}, it was shown in Ref.~\cite{Katsube2020fundamental} that arbitrary states $\eta,\tau$ and an observable $O$ satisfy
\bal
 |\Tr[O(\eta-\tau)]|\leq D_F(\eta,\tau)\left(\sigma_O(\eta)+\sigma_O(\tau)+|\Tr[O(\eta-\tau)]|\right).
\label{eq:distance fluctuation}
\eal

Using the data-processing inequality and applying \eqref{eq:distance fluctuation} to the error-mitigated classical states, we get for arbitrary $\rho,\sigma\in\mbS$ that 
\bal
 D_F(\otimes_{n=1}^N\mE_n(\rho),\otimes_{n=1}^N\mE_n(\sigma))&\geq D_F(\mP_A\circ\otimes_{n=1}^N\mE_n(\rho),\mP_A\circ\otimes_{n=1}^N\mE_n(\sigma))\\
 &\geq \frac{\Delta}{\sigma_A^{\QEM}(\rho)+\sigma_A^{\QEM}(\sigma)+\Delta}.
\eal
Here, $\Delta\coloneqq |\Tr[A'\,\mP_A\circ\otimes_{n=1}^N\mE_n(\rho)]-\Tr[A'\,\mP_A\circ\otimes_{n=1}^N\mE_n(\sigma)]| = |\<\hat E_A(\rho)\>-\<\hat E_A(\sigma)\>|$ where $A'=\sum_a \dm{a}$ is the observable corresponding to the estimate (c.f., \eqref{eq:mitigation channel}), and
$\sigma_A^{\QEM}(\rho)\coloneqq\sigma_{A'}(\mP_A\circ\otimes_{n=1}^N\mE_n(\rho))$ is the standard deviation of $\hat E_A(\rho)$.
Reorganizing the terms gives 
\bal
 \sigma_A^{\QEM}(\rho)+\sigma_A^{\QEM}(\sigma)\geq \left(\frac{1}{D_F(\otimes_{n=1}^N\mE_n(\rho),\otimes_{n=1}^N\mE_n(\sigma))}-1\right)\Delta
\eal
Without loss of generality, let us take $\<\hat E_A(\rho)\>-\<\hat E_A(\sigma)\>\geq 0$. Then, recalling the definition of bias $b_A(\rho)\coloneqq \<\hat E_A(\rho)\>-\Tr[A\rho]$, we have 
\bal
 \Delta = b_A(\rho) - b_A(\sigma) + \Tr[A(\rho-\sigma)].
\eal
Combining these, we get
\bal
 \sigma_A^{\QEM}(\rho)+\sigma_A^{\QEM}(\sigma)\geq \left(\frac{1}{D_F(\otimes_{n=1}^N\mE_n(\rho),\otimes_{n=1}^N\mE_n(\sigma))}-1\right)(\Tr[A(\rho-\sigma)]+b_A(\rho) - b_A(\sigma)).
\eal

Defining he maximum standard deviation as $\sigma_{\max}^{\QEM}\coloneqq \max_{A\in\mbO}\max_{\rho\in\mbS}\sigma_{A,\max}^{\QEM}$ and the maximum bias $b_{\max}\coloneqq \max_{A\in\mbO}\max_{\rho\in\mbS}|b_A(\rho)|$, we get 
\bal
\sigma_{\max}^{\QEM}\geq \max_{\rho,\sigma\in\mbS}\frac{1}{2}\left(\frac{1}{D_F(\otimes_{n=1}^N\mE_n(\rho),\otimes_{n=1}^N\mE_n(\sigma))}-1\right)(D_\mbO(\rho,\sigma)-2b_{\max}).
\label{eq:standard deviation lower bound observable independent}
\eal
where $D_{\mbO}$ is the observable-dependent distinguishability defined in \eqref{eq:distinguishability observable dependent}.

The bound \eqref{eq:standard deviation lower bound observable independent} can be turned into a lower bound on $N$.
To see this, note that \eqref{eq:standard deviation lower bound observable independent} ensures that for arbitrary $\rho,\sigma\in\mbS$, we have 
\bal
\sigma_{\max}^{\QEM}\geq \frac{1-D_F(\otimes_{n=1}^N\mE_n(\rho),\otimes_{n=1}^N\mE_n(\sigma))}{2D_F(\otimes_{n=1}^N\mE_n(\rho),\otimes_{n=1}^N\mE_n(\sigma))}(D_\mbO(\rho,\sigma)-2b_{\max}),
\eal
which leads to 

\bal
D_F(\otimes_{n=1}^N\mE_n(\rho),\otimes_{n=1}^N\mE_n(\sigma))&\geq \frac{D_\mbO(\rho,\sigma)-2b_{\max}}{2\sigma_{\max}^{\QEM}+D_\mbO(\rho,\sigma)-2b_{\max}
}\\
&=\frac{1}{1+\frac{2\sigma_{\max}^{\QEM}}{D_\mbO(\rho,\sigma)-2b_{\max}}
}.
\eal

Noting that 
\bal
D_F(\otimes_{n=1}^N\mE_n(\rho),\otimes_{n=1}^N\mE_n(\sigma))&=\sqrt{1-F(\otimes_{n=1}^N\mE_n(\rho),\otimes_{n=1}^N\mE_n(\sigma))}\\
&=\sqrt{1-\prod_{n=1}^N F(\mE_n(\rho),\mE_n(\sigma))}
\eal
where we used the multiplicativity of the fidelity for tensor-product states. 
This gives 
\bal
\prod_{n=1}^N F(\mE_n(\rho),\mE_n(\sigma))\leq 1-\frac{1}{\left(1+\frac{2\sigma_{\max}^{\QEM}}{D_\mbO(\rho,\sigma)-2b_{\max}}\right)^2
}
\eal
for $\rho,\sigma\in\mbS$ such that $D_\mbO(\rho,\sigma)-2b_{\max}\geq 0$.
Taking the inverse and logarithm on both sides, we get
\bal
\sum_{n=1}^N \log F^{-1}(\mE_n(\rho),\mE_n(\sigma))\geq \log\left[1-\frac{1}{\left(1+\frac{2\sigma_{\max}^{\QEM}}{D_\mbO(\rho,\sigma)-2b_{\max}}\right)^2
}\right]^{-1}.
\eal
Noting 
\bal
N\max_{\mE\in\mbE}\log F^{-1}(\mE(\rho),\mE(\sigma))\geq \sum_{n=1}^N \log F^{-1}(\mE_n(\rho),\mE_n(\sigma)),
\eal
we reach 
\bal
 N \geq \min_{\mE\in\mbE}\frac{\log\left[1-\frac{1}{\left(1+\frac{2\sigma_{\max}^{\QEM}}{D_\mbO(\rho,\sigma)-2b_{\max}}\right)^2
}\right]^{-1}}{\log F^{-1}(\mE(\rho),\mE(\sigma))}
\label{eq:sample lower bound purified app}
\eal
for arbitrary states $\rho,\sigma\in\mbS$ such that $D_\mbO(\rho,\sigma)-2b_{\max}\geq 0$. 

\end{proof}


\section{Comparison of the bound in Theorem~\ref{thm:sampling bound} with existing protocols} \label{app:numerics accuracy probability bound}

To make an explicit comparison between the bound in Theorem~\ref{thm:sampling bound} and the sample cost for specific protocols, we consider mitigating the local depolarizing noise $\mD_p^{\otimes M}$, where $\mD_p(\rho)\coloneqq (1-p)\rho + p\mbI/2$, with accuracy $\delta$ and probability $1-\varepsilon$.
For the sake of the analysis, we consider conservative strategies with the target observable $\mbO=\{\prod_{i=1}^M X_i\}$ and target ideal states $\mbS=\{\GHZ(\pm\theta_1),\GHZ(\pm\theta_2)\}$ where
\bal
 \ket{\GHZ(\theta)}\coloneqq \frac{1}{\sqrt{2}}\left(\ket{0}^{\otimes M} + e^{i(\theta+\pi/2)}\ket{1}^{\otimes M}\right)
\eal
and we set $\theta_1 \coloneqq \arcsin(\delta)$, $\theta_2 \coloneqq \arcsin(2\delta)$.
This refers to the setting in which one would like to obtain good estimates at least for these states and observable. 
We also set the effective noise channels being the local depolarizing noise, i.e., $\mE_i=\mD_p^{\otimes M},\ \forall i$, which is satisfied for all the specific protocols we study below.

Since $\Tr[\prod_{i=1}^M X_i \GHZ(\theta)]= -\sin\theta$, we particularly have $D_\mbO(\GHZ(\theta_1),\GHZ(-\theta_1))=2\delta$. This allows us to bound the lower bound in Theorem~\ref{thm:sampling bound} to get 
\bal
 N\geq \frac{\log\frac{1}{4\varepsilon(1-\varepsilon)}}{\log[1/F(\mD_p^{\otimes M}(\GHZ(\theta_1)),D_p^{\otimes M}(\GHZ(-\theta_1)))]}.
 \label{eq:lower bound local depolarizing}
\eal

To study the sample numbers for specific error-mitigation protocols, we simulate each mitigation strategy and evaluate the actual sample numbers used to achieve the desired performance. 
Namely, we evaluate the probability of getting estimates within the accuracy $\delta$ for all ideal states in $\mbS$ using a certain sample budget. 
If this success probability is above the target probability $1-\varepsilon$, we need less number of samples. On the other hand, if the success probability is below $1-\varepsilon$, we need more samples.
By binary search, we evaluate the sample number that realizes the success probability $1-\varepsilon$ with a small error tolerance, which we set 0.01.
We also note that since the shift of the expectation value due to depolarizing noise is larger for $\GHZ(\pm\theta_2)$ than $\GHZ(\pm\theta_1)$ and the noise and error mitigation apply to $\GHZ(\theta_2)$ and $\GHZ(-\theta_2)$ in a symmetric manner, it suffices to ensure that the expectation value for $\GHZ(\theta_2)$ can be estimated with the accuracy $\delta$ with probability $1-\varepsilon$.  

The error-mitigation protocols we study are virtual distillation, symmetry verification, and probabilistic error cancellation.
For virtual distillation, we consider 2-copy virtual distillation and estimate $\Tr(\prod_{i=1}^M X_i[\mD_p^{\otimes M}(\GHZ(\theta_2))]^2)/\Tr([\mD_p^{\otimes M}(\GHZ(\theta_2))]^2)$ as in Refs.~\cite{Koczor2021exponential,Huggins2021virtual}.
For symmetry verification~\cite{Bonet-Monroig2018low-cost}, we observe that all the ideal states in $\mbS$ are +1 eigenstates of the stabilizers generated by $\{Z_iZ_{i+1}\}_{i=1}^{M-1}$.
We thus estimate $\Tr\left(\prod_{i=1}^{M} X_i \tilde \rho\right)/\Tr(\tilde \rho)$ where $\tilde\rho\coloneqq \left(\prod_{i=1}^{M-1}\frac{\mbI + Z_iZ_{i+1}}{2}\right) \mD_p^{\otimes M}(\GHZ(\theta_2)) \left(\prod_{i=1}^{M-1}\frac{\mbI + Z_iZ_{i+1}}{2}\right)$ is the noisy state projected onto the stabilizer subspace. 
To run probabilistic cancellation~\cite{Temme2017error}, we utilize the decomposition of the inverse map for depolarizing noise $\mD_p^{-1}= \left[1+\frac{3p}{4(1-p)}\right]\id -\frac{p}{4(1-p)}(\mX+\mY+\mZ)$ where $\mX,\mY,\mZ$ are unitary Pauli channels.
After every depolarizing channel, we simulate the action of this inverse channel by following the procedure described in Appendix~\ref{app:optimality PEC}.

Following the analysis described above, we plot in Fig.~\ref{fig:accuracy_prob} our lower bound and the actual sample costs to achieve accuracy $\delta=0.1$ and failure probability $\varepsilon=0.2$ for 7-qubit local depolarizing noise with respect to the noise strength $p$. 
We observe that Theorem~\ref{thm:sampling bound} provides nontrivial lower bounds that are comparable to the actual sample cost, the difference being the factor of 3 to 6 in the studied range.  

Note that Fig.~\ref{fig:accuracy_prob} also includes the sample number required for the case when we do not apply any error mitigation but just measure $\prod_{i=1}^M X_i$ on $\mD_p^{\otimes M}(\GHZ(\theta_2))$.
When the noise strength is small, the target accuracy and success probability can be achieved without error mitigation, making the no-mitigation strategy sample-efficient. 
However, as the noise gets stronger, it gets harder to achieve the target performance and eventually becomes impossible to realize the target accuracy no matter what sample number is allowed, as can be seen in the divergent behavior in Fig.~\ref{fig:accuracy_prob}.

\begin{figure}
    \centering
    \includegraphics[width=0.7\columnwidth]{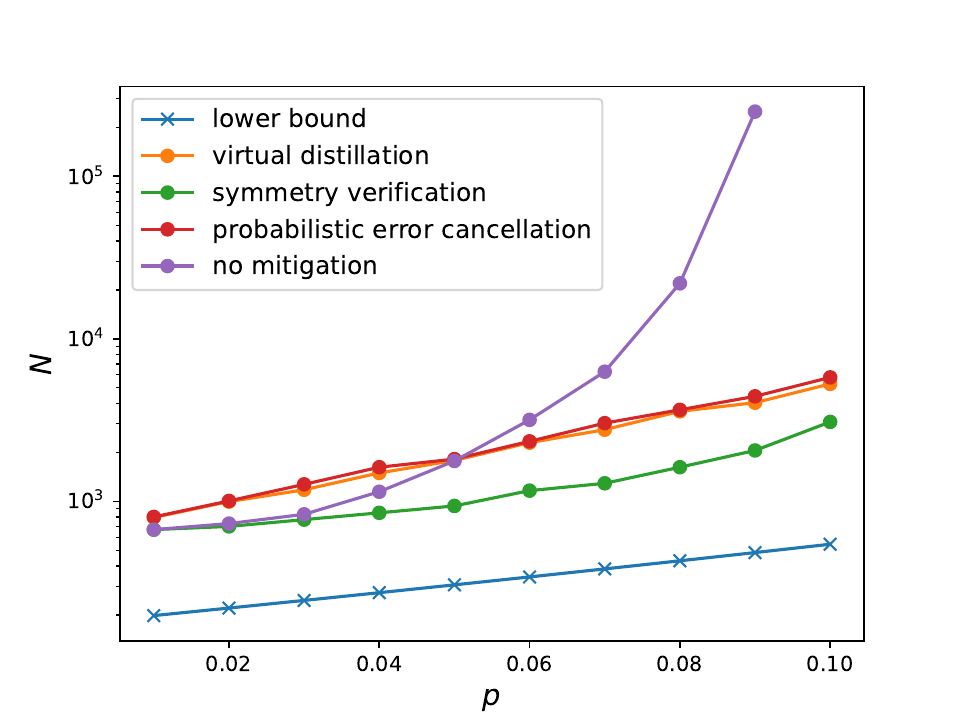}
    \caption{The lower bound in Theorem~\ref{thm:sampling bound} and the actual samples $N$ used for several specific error-mitigation protocols to mitigate 7-qubit local depolarizing noise with noise strength $p$. We fixed accuracy $\delta =0.1$, failure probability $\varepsilon=0.2$ and considered $\mbO=\{\otimes_{i=1}^m X_i\}$ and $\mbS=\{\GHZ(\pm\theta_1), \GHZ(\pm\theta_2)\}$. }
    \label{fig:accuracy_prob}
\end{figure}


\section{Scaling of the bound in Theorem~\ref{thm:sample bound standard deviation} with noise strength}\label{app:optimality PEC}

Let us consider $M$-qubit state suffering from the $M$-qubit local dephasing noise $\mZ_p^{\otimes M}$ where $\mZ(\rho)\coloneqq (1-p)\rho + pZ\rho Z$. 
Consider the class of error mitigation with the same effective noise channels $\mE_i = \mZ_p^{\otimes M},\ \forall i$.
We also have a mild condition that the set $\mbO$ of target observables consists of normalized observables, i.e., those with the unit operator norm, and contains the Pauli operator $\prod_{i=1}^M X_i$, and the set $\mbS$ of target states contains $M$-qubit GHZ states $\GHZ_\pm\coloneqq\dm{\GHZ_\pm}$ with $\ket{\GHZ_\pm}\coloneqq\frac{1}{\sqrt{2}}(\ket{0}^{\otimes M}\pm\ket{1}^{\otimes M})$.  
We also assume that $b_{\max}\leq 1$, which excludes the trivial strategy that always outputs a fixed estimate. 

Noting that $D_\mbO(\GHZ_+,\GHZ_-)=\Tr[\prod_{i=1}X_i (\GHZ_+-\GHZ_-)]=2$, the choice of $\rho=\GHZ_+$ and $\sigma=\GHZ_-$ satisfies $D_\mbO(\rho,\sigma)-2b_{\max}\geq 0$. 
This puts a further bound for the lower bound in \eqref{eq:bias standard deviation bound} and results in 
\bal
 N\geq \frac{\log\left[1-\frac{1}{1+\frac{2\sigma_{\max}^{\rm QEM}}{D_\mbO(\GHZ_+,\GHZ_-)-2b_{\max}}}\right]}{\log[1/F(\mZ_p^{\otimes M}(\GHZ_+),\mZ_p^{\otimes M}(\GHZ_-))]}.
 \label{eq:general bias std local dephasing}
\eal

Note that 
\bal
 \mZ_p^{\otimes M}(\GHZ_\pm) &= \sum_{k=0}^{\lfloor M/2 \rfloor} \binom{M}{2k}(1-p)^{M-2k}p^{2k} \GHZ_\pm\\
 &\quad+\sum_{k=1}^{\lfloor (M+1)/2 \rfloor} \binom{M}{2k-1}(1-p)^{M-2k+1}p^{2k-1} \GHZ_\mp\\
 &=\frac{1+(1-2p)^M}{2}\GHZ_\pm + \frac{1-(1-2p)^M}{2}\GHZ_\mp,
 \label{eq:GHZ local dephasing}
\eal 
where we used 
\bal
\sum_{k=0}^{\lfloor M/2 \rfloor} \binom{M}{2k}(1-p)^{M-2k}p^{2k}&=\sum_{k=0}^{M} \binom{M}{k}(1-p)^{M-k}\frac{p^k+(-p)^{k}}{2}\\
&=\frac{1+(1-2p)^M}{2}.
\eal
\bal
\sum_{k=1}^{\lfloor (M+1)/2 \rfloor} \binom{M}{2k-1}(1-p)^{M-2k+1}p^{2k-1}&=\sum_{k=0}^{M} \binom{M}{k}(1-p)^{M-k}\frac{p^k-(-p)^{k}}{2}\\
&=\frac{1-(1-2p)^M}{2}.
\eal
This gives 
\bal
 F(\mZ_p^{\otimes M}(\GHZ_+),\mZ_p^{\otimes M}(\GHZ_-)) = 1-(1-2p)^{2M}.
\eal

Focusing on the case when $p=\frac{1}{2}-\delta$ with $\delta\ll 1$, the scaling of the lower bound in \eqref{eq:general bias std local dephasing} with respect to the noise strength is given by 
\bal
\left[\log\frac{1}{1-(1-2p)^{2M}}\right]^{-1}\sim 1/(2\delta)^{2M}.
\label{eq:local dephasing error scaling}
\eal

We now show that this scaling can be achieved by probabilistic error cancellation.
The probabilistic error cancellation method is the strategy to mitigate the effect of noise channel $\mE$ by simulating the action of the inverse noise channel $\mE^{-1}$. 
The simulation can be accomplished by considering a linear decomposition of the inverse channel $\mE^{-1}=\sum_i c_i \mB_i$ where $c_i$ is a (possibly negative) real number and $\mB_i$ is a physical operation that can be implemented on the given device.
One then applies the operation $\mB_i$ with probability $|c_i|/\gamma$ with $\gamma\coloneqq \sum_i |c_i|$, makes a measurement, and multiplies $\gamma {\rm sgn}(c_i)$ to the outcome, where ${\rm sgn}(c_i)$ is the sign of $c_i$ that takes +1 if $c_i\geq 0$ and $-1$ if $c_i<0$.
This constructs a distribution whose expected value coincides with the ideal expectation value with the standard deviation scaled by the factor of $\gamma$. 
Therefore, by taking the average over $N$ samples from this distribution, one can construct an estimate of the ideal expectation value, which obeys the distribution with the standard deviation $\sim \gamma/\sqrt{N}$.  

In general, one can also consider optimizing $\gamma$ over the choices of implementable operations $\{\mB_i\}_i$.
Such optimal cost for mitigating the local dephasing noise is characterized as~\cite{Takagi2021fundamental}
\bal
\gamma_{\rm opt} = \frac{1}{(1-2p)^M}.
\eal
This implies that the sample number $N$ that achieves some fixed standard deviation becomes $N\propto \frac{1}{(1-2p)^{2M}}$.
When $p=\frac{1}{2}-\delta$ with $\delta \ll 1$, the sample scales as $N\sim 1/(2\delta)^{2M}$, realizing the same scaling in \eqref{eq:local dephasing error scaling}.


\section{Proof of Theorem~\ref{thm:layer}} \label{app:layer proof}

\begin{proof}

Let $\mD_{p}(\rho)\coloneqq (1-p)\rho + p \mbI/2$ be the single-qubit depolarizing noise. 
Let also $\mU_1,\dots, \mU_L$ be the unitary channels followed by a local depolarizing channel with the noise strength at least $\gamma$ and suppose the $l^{\mathrm{th}}$ layer $\mU_l$ in the $n^{\mathrm{th}}$ noisy circuit is followed by the local depolarizing channel $\otimes_{m=1}^M\mD_{\gamma_{n,l,m}}$. By assumption, we have $\gamma_{n,l,m}\geq\gamma\ \forall n,l,m$.
Also, let $\Lambda_{n,l}$ and $\Xi_{n,l}$ be the unital channels applied before and after the $l^{\rm th}$ layer $U_l$ in the $n^{\rm th}$ noisy layered circuit. 

We first note that every $\mD_{\gamma_{n,l,m}}$ can be written as $\mD_{\gamma_{n,l,m}-\gamma}\circ\mD_{\gamma}$. 
Since the second local depolarizing channel $\otimes_{m=1}^M\mD_{\gamma_{n,l,m}-\gamma}$ is unital, it can be absorbed in the following unital channel $\Xi_{n,l}$, allowing us to focus on $\mD_{\gamma}^{\otimes M}$ as the noise after each layer.

Then, the effective noise channel for the $n^{\rm th}$ layered circuit can be written as 
\bal
 \mE_n &= \mD_{\gamma}^{\otimes M}\circ\Xi_{n,L}\circ\mU_{L}\circ\Lambda_{n,L}\circ\cdots\circ\mD_{\gamma}^{\otimes M} \circ\Xi_{n,1}\circ\mU_{1}\circ\Lambda_{n,1}\circ\mU_1^\dagger\circ\dots\circ\mU_L^\dagger
 \label{eq:effective noise channel layered}
\eal
With a little abuse of notation, we write this as $\mE_n=\prod_{l=1}^{L}\left[\mD_{\gamma}^{\otimes M}\circ\Xi_{n,l}\circ\mU_{l}\circ\Lambda_{n,l}\right]\circ\prod_{l=1}^L \mU_{L-l+1}^\dagger$ where we let the product sign refer to the concatenation of quantum channels.

Let $\rho_{\rm in}$ and $\sigma_{\rm in}$ be some input states and  $\rho=\mU_L\circ\dots\circ\mU_1(\rho_{\rm in})$ and $\sigma=\mU_L\circ\dots\circ\mU_1(\sigma_{\rm in})$. 
Then, the unitary invariance and the triangle inequality of the trace distance imply that 
\bal
D_{\tr}\left(\otimes_{n=1}^N \mE_n(\rho), \otimes_{n=1}^N\mE_n(\sigma)\right)&=D_{\tr}\left(\otimes_{n=1}^{N}\mF_n(\rho_{\rm in}),\otimes_{n=1}^{N}\mF_n(\sigma_{\rm in})\right)\\ 
&\leq  D_{\rm tr}\left(\otimes_{n=1}^{N}\mF_n(\rho_{\rm in}),\frac{\mbI}{2^{N M}}\right) + D_{\rm tr}\left(\otimes_{n=1}^{N}\mF_n(\sigma_{\rm in}),\frac{\mbI}{2^{N M}}\right)
\label{eq:layer trace distance bound}
\eal
where we defined $\mF_n\coloneqq \prod_{l=1}^{L}\left[\mD_{\gamma}^{\otimes M}\circ\Xi_{n,l}\circ\mU_{l}\circ\Lambda_{n,l}\right]$.
Using the quantum Pinsker's inequality 
\bal
 D_{\tr}(\rho,\sigma)\leq \sqrt{\frac{\ln 2}{2}}\,\sqrt{S(\rho\|\sigma)}
\eal
that holds all states $\rho$, $\sigma$, where $S(\rho\|\sigma)\coloneqq \Tr(\rho\log \rho)-\Tr(\rho\log\sigma)$ is the relative entropy, we get 

\begin{equation}\begin{aligned}
 D_{\rm tr}\left(\otimes_{n=1}^{N}\mF_n(\rho_{\rm in}),\frac{\mbI}{2^{N M}}\right)&\leq \sqrt{\frac{\ln 2}{2}} \sqrt{ S\left(\otimes_{n=1}^{N}\mF_n(\rho_{\rm in})\,\Big\|\,\frac{\mbI}{2^{N M}}\right)}\\
 &=\sqrt{\frac{\ln 2}{2}} \sqrt{ \sum_{n=1}^NS\left(\mF_n(\rho_{\rm in})\,\Big\|\,\frac{\mbI}{2^{M}}\right)}
 \label{eq:bound after Pinsker}
\end{aligned}\end{equation}
where in the second line we used the additivity of the relative entropy $S(\rho_1\otimes\rho_2\|\sigma_1\otimes\sigma_2)=S(\rho_1\|\sigma_1)+S(\rho_2\|\sigma_2)$.

We now recall the result in Ref.~\cite{Kastoryano2013quantum}, showing that  
\bal
 S\left(\mD_\gamma^{\otimes M}(\tau)\,\Big\|\,\mbI/2^M\right)\leq (1-\gamma)^2 S(\tau\,\|\,\mbI/2^M) 
 \label{eq:relative entropy Sobolev stochastic Pauli}
\eal
for arbitrary $M$-qubit state $\tau$.
This implies that for arbitrary state $M$-qubit state $\tau$, noise strength $\gamma$, unitary $\mU$, and unital channels $\Xi$, $\Lambda$, 
\bal
 S(\mD_\gamma^{\otimes M}\circ\Xi\circ \mU \circ \Lambda(\tau)\,\|\,\mbI/2^M)&\leq (1-\gamma)^2 S(\Xi\circ \mU \circ \Lambda(\tau)\,\|\,\mbI/2^M)\\
 &= (1-\gamma)^2 S(\Xi\circ \mU \circ \Lambda(\tau)\,\|\,\Xi\circ \mU \circ \Lambda(\mbI/2^M))\\
 &\leq (1-\gamma)^2 S(\tau\,\|\,\mbI/2^M)
\label{eq:relative entropy decrease one layer}
\eal
where we used \eqref{eq:relative entropy Sobolev stochastic Pauli} in the first line, the fact that $\Xi\circ\mU\circ\Lambda(\mbI/2^M)=\mbI/2^M$ because $\Xi$ and $\Lambda$ are unital in the second line, and the data-processing inequality for the relative entropy in the third line. 
The sequential application of \eqref{eq:relative entropy decrease one layer} results in 
\bal
S\left(\mF_n(\rho_{\rm in})\,\Big\|\,\frac{\mbI}{2^{M}}\right)
&\leq (1-\gamma)^{2L}\,S\left(\rho_{\rm in}\,\Big\|\,\frac{\mbI}{2^{M}}\right)\\
 &\leq (1-\gamma)^{2L} M,
 \label{eq:bound on relative entropy}
\eal
where the second line follows from the upper bound of the relative entropy. 
Combining \eqref{eq:layer trace distance bound}, \eqref{eq:bound on relative entropy} and \eqref{eq:bound after Pinsker}, we get 
\bal
D_{\rm tr}\left(\otimes_{n=1}^{N}\mF_n(\rho_{\rm in}),\otimes_{n=1}^{N}\mF_n(\sigma_{\rm in})\right)\leq 2\sqrt{\frac{\ln 2}{2}}\sqrt{(1-\gamma)^{2L}M N}. 
\label{eq:trace distance exponential upper bound}
\eal
Combining this with \eqref{eq:trace distance bound}, we get 
\bal
 N\geq \frac{(1-2\varepsilon)^2}{2\ln(2)M(1-\gamma)^{2L}}
\eal
for any $\varepsilon\leq 1/2$. 
\end{proof}


\section{Bound for layered circuits with a fixed target bias and standard deviation}\label{app:layer standard deviation proof}

Theorem~\ref{thm:layer} in the main text shows the exponential sampling overhead for noisy layered circuits with a fixed target accuracy and success probability. 
Here, we present a similar exponential growth of the required sample overhead for a fixed target bias and standard deviation.

\begin{thm} \label{thm:layer standard deviation}
Suppose that an error-mitigation strategy described above is applied to an $M$-qubit circuit to mitigate local depolarizing channels with strength at least $\gamma$ that follow $L$ layers of unitaries. Then, if the estimator of the error mitigation has the maximum standard deviation $\sigma_{\max}^{\QEM}$ and maximum bias $b_{\max}$, the required number $N$ of samples is lower bounded as
\bal
 N\geq \frac{1}{4 M(1-\gamma)^{2L}}\left(\frac{1}{\frac{2\sigma_{\max}^{\QEM}}{D_\mbO(\rho,\sigma)-2b_{\max}}+1}\right)^2
\eal
for arbitrary states $\rho,\sigma\in\mbS$ such that $D_\mbO(\rho,\sigma)-2b_{\max}\geq 0$.
\end{thm}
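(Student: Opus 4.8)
The plan is to deduce the bound from the general standard-deviation estimate of Theorem~\ref{thm:sample bound standard deviation}, so that the only model-specific work is to control the fidelities $F(\mE_n(\rho),\mE_n(\sigma))$ in its denominator uniformly over the effective noise channels $\mE_n\in\mbE$. Fixing the pair $\rho,\sigma$ (which lower-bounds the maximization there) and writing $x\coloneqq 2\sigma_{\max}^{\QEM}/(D_\mbO(\rho,\sigma)-2b_{\max})$, I would first simplify the numerator: since $1-(1+x)^{-2}=x(x+2)/(1+x)^2$, the elementary inequality $\log(1/t)\geq 1-t$ gives $\log[1-(1+x)^{-2}]^{-1}\geq (1+x)^{-2}$ (the logarithm base is immaterial, as it cancels in the ratio). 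Because the minimization over $\mbE$ selects the channel with the \emph{largest} $\log[1/F(\mE_n(\rho),\mE_n(\sigma))]$, the theorem reduces to proving the single uniform estimate $\log[1/F(\mE_n(\rho),\mE_n(\sigma))]\leq 4M(1-\gamma)^{2L}$ for every $\mE_n$ in the family.

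To establish this fidelity estimate I would compare both distorted states to the maximally mixed state $\tau\coloneqq\mbI/2^M$, the common fixed point of every layer. Since the sine distance $\sqrt{1-F(\cdot,\cdot)}$ (equivalently the Bures angle $\arccos\sqrt{F}$) is a metric, the triangle inequality through $\tau$ bounds the $\mE_n(\rho)$--$\mE_n(\sigma)$ distance by the two legs $\sqrt{1-F(\mE_n(\rho),\tau)}$ and $\sqrt{1-F(\mE_n(\sigma),\tau)}$; squaring produces the factor $4$ in the target. Each leg is then controlled by the relative entropy to $\tau$ via the divergence ordering $-\log F(\omega,\tau)\leq S(\omega\|\tau)$, so the task collapses to bounding $S(\mE_n(\omega)\|\tau)$ for $\omega\in\{\rho,\sigma\}$.

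The crux --- and the step I expect to be the main obstacle --- is the contraction estimate $S(\mE_n(\omega)\|\tau)\leq (1-\gamma)^{2L}\,M\ln 2$, which is exactly the relative-entropy bound that also drives Theorem~\ref{thm:layer}. I would assemble it from three ingredients: (i) $S(\,\cdot\,\|\tau)\leq \ln 2^M=M\ln 2$ for any input; (ii) the unitaries, the inserted unital maps $\Lambda_l,\Xi_l$, and the noise all fix $\tau$, so by data processing none of them increases the relative entropy to $\tau$; and (iii) each of the $L$ local-depolarizing layers contracts the relative entropy to $\tau$ by the factor $(1-\gamma)^2$. Part (iii) is the delicate point: it requires that the contraction coefficient of $\mD_\gamma^{\otimes M}$ toward its fixed point be $(1-\gamma)^2$ and, crucially, be stable under the tensor product over the $M$ qubits. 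This is where the linear-in-$M$ (rather than exponential $2^M$) scaling is won --- a second-order Hilbert--Schmidt estimate of $S(\,\cdot\,\|\tau)$ would carry a dimensional factor $2^M$, so the argument genuinely depends on measuring closeness to the fixed point by relative entropy. Chaining (i)--(iii) gives $S(\mE_n(\omega)\|\tau)\leq (1-\gamma)^{2L}M\ln 2$, whence $\log[1/F(\mE_n(\rho),\mE_n(\sigma))]\leq 4M(1-\gamma)^{2L}$ whenever $M(1-\gamma)^{2L}$ is small; substituting into Theorem~\ref{thm:sample bound standard deviation} yields $N\geq (1+x)^{-2}/[4M(1-\gamma)^{2L}]$. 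Finally, when $4M(1-\gamma)^{2L}(1+x)^2\geq 1$ the asserted lower bound is at most $1$ and so holds trivially, which conveniently covers the small-$L$ range in which the triangle-inequality step would otherwise be loose.
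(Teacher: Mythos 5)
Your overall strategy matches the paper's in its essential ingredients: the crux is indeed the tensor-stable relative-entropy contraction $S(\mD_\gamma^{\otimes M}(\tau)\|\mbI/2^M)\leq(1-\gamma)^2S(\tau\|\mbI/2^M)$ of Ref.~\cite{Kastoryano2013quantum}, combined with the fact that the unitaries and the inserted unital maps fix $\mbI/2^M$ and hence cannot increase $S(\cdot\|\mbI/2^M)$, the trivial bound $S(\cdot\|\mbI/2^M)\leq M$, a triangle inequality through the maximally mixed state, and the purified-distance/standard-deviation relation. Your diagnosis of why relative entropy (rather than a Hilbert--Schmidt-type estimate) is needed to get linear rather than exponential scaling in $M$ is also exactly right.

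However, one step fails quantitatively: from $1-F(\mE_n(\rho),\mE_n(\sigma))\leq 4M(1-\gamma)^{2L}\eqqcolon c$ you cannot conclude $\log[1/F(\mE_n(\rho),\mE_n(\sigma))]\leq c$, because $\ln(1/F)\geq 1-F$ goes in the wrong direction; the best you get is $\ln(1/F)\leq c/(1-c)$, which yields $N\geq(1-c)/[c(1+x)^2]$ rather than the stated $1/[c(1+x)^2]$. Your fallback (the bound is trivially true when $c(1+x)^2\geq 1$) does not close this: in the regime where the claimed bound is, say, $2.5$, your derivation only certifies $N\geq 2.5(1-c)$, which can be strictly smaller. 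The paper avoids this entirely by never passing through the $\log[1/F]$ form of Theorem~\ref{thm:sample bound standard deviation}: it applies the triangle inequality and $D_F\leq\sqrt{S}$ directly to the \emph{$N$-fold tensor product}, where additivity of the relative entropy gives $D_F(\otimes_{n=1}^N\mE_n(\rho),\otimes_{n=1}^N\mE_n(\sigma))\leq 2\sqrt{MN(1-\gamma)^{2L}}$, and compares this with the intermediate inequality \eqref{eq:standard deviation lower bound observable independent}, which rearranges to $D_F(\otimes_{n=1}^N\mE_n(\rho),\otimes_{n=1}^N\mE_n(\sigma))\geq 1/(1+x)$. Solving $1/(1+x)\leq 2\sqrt{MN(1-\gamma)^{2L}}$ for $N$ gives the stated constant with no lossy $1-F\leftrightarrow\ln(1/F)$ conversions. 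If you replace your final assembly with this comparison (keeping everything else you wrote), your argument becomes a complete proof.
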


\begin{proof}
We first note that $D_F(\rho,\sigma)\leq \sqrt{S(\rho\|\sigma)}$ for arbitrary states $\rho,\sigma$~\cite{Hayashi2016quantum}.
Also, the purified distance $D_F$ satisfies the triangle inequality~\cite{Gilchrist2005distance}. 
Therefore, we can repeat the argument in \eqref{eq:layer trace distance bound} and \eqref{eq:bound after Pinsker} with the purified distance to get 
\bal
 D_F(\otimes_{n=1}^N\mE_n(\rho),\otimes_{n=1}^N\mE_n(\sigma))\leq \left(\sqrt{\sum_{n=1}^N S\left(\mF_n(\rho_{\rm in})\,\Big\|\,\frac{\mbI}{2^{M}}\right)}+\sqrt{\sum_{n=1}^N S\left(\mF_n(\sigma_{\rm in})\,\Big\|\,\frac{\mbI}{2^{M}}\right)}\right).
\eal
Using \eqref{eq:bound on relative entropy}, we further have 
\bal
 D_F(\otimes_{n=1}^N\mE_n(\rho),\otimes_{n=1}^N\mE_n(\sigma))\leq 2\sqrt{M N(1-\gamma)^{2L}}
 \label{eq:purified distance upper bound exponential}
\eal
This allows us to put another bound to \eqref{eq:standard deviation lower bound observable independent} to get 
\bal
\sigma_{\max}^{\QEM}\geq\max_{\substack{\rho,\sigma\in\mbS\\D_\mbO(\rho,\sigma)-2b_{\max}\geq 0}} \frac{1}{2}\left(\frac{1}{2\sqrt{M N(1-\gamma)^{2L}}}-1\right)(D_\mbO(\rho,\sigma)-2b_{\max}).
\eal

Reorganizing this, we obtain
\bal
 N\geq \frac{1}{4 M(1-\gamma)^{2L}}\left(\frac{1}{\frac{2\sigma_{\max}^{\QEM}}{D_\mbO(\rho,\sigma)-2b_{\max}}+1}\right)^2.
\eal
for arbitrary state $\rho,\sigma$ such that $D_\mbO(\rho,\sigma)-2b_{\max}\geq 0$.
\end{proof}


\section{Alternative bounds for noisy layered circuits}\label{app:layered alterative}

Let us investigate alternative bounds that also diverge with the vanishing failure probability $\varepsilon\to 0$. 
We first note the inequality~\cite{Hayashi2016quantum} 
\bal
 S(\tau\|\eta)\geq \log F(\tau,\eta)^{-1}
 \label{eq:upper bound for log fidelity with relative entropy}
\eal
that hold for arbitrary states $\tau$ and $\eta$. 
Together with Theorem~\ref{thm:sampling bound}, our goal is to upper bound $S(\mE_n(\rho)\|\mE_n(\sigma))$ for the effective noise channel defined in \eqref{eq:effective noise channel layered}. 

To do this, we employ the continuity bounds of the relative entropy~\cite{Audenaert2005continuity}. Namely, for arbitrary states $\tau$ and a full-rank state $\eta$,
\bal
 S(\tau\|\eta)&\leq \frac{4D_{\tr}(\tau,\eta)^2}{\lambda_{\min}(\eta)}
 \label{eq:upper bound relative entropy trace squared}
\eal
where $\lambda_{\min}(\eta)$ is the minimum eigenvalue of a state $\eta$, and for $D_{\tr}(\tau,\eta)\leq 1/2$, 
\bal
 S(\tau\|\eta)&\leq \frac{2}{\ln 2}D_{\tr}(\tau,\eta)\left[\log d + \log\left(\frac{2}{D_{\rm tr}(\tau,\eta)}\right) + \frac{1}{2}\log(1/\lambda_{\min}(\eta))\right].
 \label{eq:upper bound relative entropy lambda min log}
\eal
where $d$ is the dimension of the Hilbert space that $\tau$ and $\eta$ act on. 
Note that \eqref{eq:upper bound relative entropy trace squared} has the square dependence on the trace distance, while \eqref{eq:upper bound relative entropy lambda min log} has the log dependence on the minimum eigenvalue.

Note that $\mD_{\gamma}^{\otimes M}=(1-(4\gamma)^M)\Psi + (4\gamma)^M\mD^{\otimes M}$ where $\Psi$ is some quantum channel and $\mD$ is the completely depolarizing channel. 
This ensures that 
\bal
\lambda_{\min}(\mD_{\gamma}^{\otimes M}(\sigma))\geq \left(\frac{\gamma}{2}\right)^M
\label{eq:lower bound for minimum eigenvalue one layer}
\eal
for arbitrary state $\sigma$.
Noting the structure in \eqref{eq:effective noise channel layered}, where all components preserve the maximally mixed state, this results in 
\bal
\lambda_{\min}(\mE_{n}(\sigma))\geq \left(\frac{\gamma}{2}\right)^M.
\label{eq:lower bound for minimum eigenvalue}
\eal
We combine \eqref{eq:upper bound relative entropy trace squared} with \eqref{eq:trace distance exponential upper bound} and \eqref{eq:lower bound for minimum eigenvalue} to get 
\bal
 S(\mE_n(\rho)\|\mE_n(\sigma)))\leq 8\ln (2)\,M(1-\gamma)^{2L}\left(\frac{2}{\gamma}\right)^M.
 \label{eq:relative entropy upper bound exponential with qubit}
\eal

Combining this with \eqref{eq:upper bound for log fidelity with relative entropy} and Theorem~\ref{thm:sampling bound}, we obtain 
\bal
 N &\geq \log\left[\frac{1}{4\varepsilon(1-\varepsilon)}\right]\frac{\left(\frac{\gamma}{2}\right)^M}{8\ln (2)\,M(1-\gamma)^{2L}}.
 \label{eq:sample lower bound fix probability exponential with qubit number}
\eal
This bound diverges with $\varepsilon\to 0$ and grows exponentially with the circuit depth with exponent $2L$. However, this is also exponentially loose with the qubit number $M$.

The second bound \eqref{eq:upper bound relative entropy lambda min log} can remedy this drawback at the cost of a looser circuit depth dependence. 
Let us first note that the function $x\log(1/x)$ is monotonically increasing for $0<x\leq 1/2$.
Therefore, for $M$, $\gamma$, and $L$ such that $D_{\tr}(\mE_n(\rho),\mE_n(\sigma))\leq \sqrt{2\ln 2}\sqrt{M}(1-\gamma)^L\leq 1/2$, we can combine \eqref{eq:upper bound relative entropy lambda min log}, \eqref{eq:trace distance exponential upper bound}, and \eqref{eq:lower bound for minimum eigenvalue} to get 
\bal
 S(\mE_n(\rho)\|\mE_n(\sigma))\leq 2\sqrt{\frac{2}{\ln 2}}\sqrt{M}(1-\gamma)^L\left[M+\log\left(\frac{2}{\sqrt{2\ln 2}\sqrt{M}(1-\gamma)^L}\right)+\frac{M}{2}\log(2/\gamma)\right].
\label{eq:relative entropy upper bound poly qubit}
\eal

This, together with Theorem~\ref{thm:sampling bound}, we get 
\bal
 N\geq \log\left[\frac{1}{4\varepsilon(1-\varepsilon)}\right]\frac{\sqrt{\ln 2}}{\sqrt{8 M}(1-\gamma)^L}\left[L\log\left(\frac{2}{1-\gamma}\right)+M-\frac{1}{2}\log\left(\frac{M\ln 2}{2}\right)+\frac{M}{2}\log(2/\gamma)\right]^{-1}.
\eal
This has a polynomial dependence with $M$ and still grows exponentially with $L$ (up to a polynomial correction), while the $L$ dependence is not as good as \eqref{eq:sample lower bound fix probability exponential with qubit number}. 

We can also apply a similar argument to show bounds for fixed standard deviation and bias, which diverge with $\sigma_{\max}^{\QEM}\to 0$.
Namely, we combine \eqref{eq:relative entropy upper bound exponential with qubit} with \eqref{eq:upper bound for log fidelity with relative entropy} and Theorem~\ref{thm:sample bound standard deviation} to obtain 
\bal
  N \geq \log\left[1-\frac{1}{\left(1+\frac{2\sigma_{\max}^{\QEM}}{D_\mbO(\rho,\sigma)-2b_{\max}}\right)^2
}\right]^{-1}\frac{\left(\frac{\gamma}{2}\right)^M}{8\ln (2)\,M(1-\gamma)^{2L}}.
\label{eq:sample lower bound fix standard deviationexponential with qubit number}
\eal
This bound diverges with $\sigma_{\max}^{\QEM}\to 0$ and grows exponentially with the circuit depth with exponent $2L$.  
Similarly, using \eqref{eq:relative entropy upper bound poly qubit} and Theorem~\ref{thm:sample bound standard deviation}, we get 
\bal
 N\geq \log\left[1-\frac{1}{\left(1+\frac{2\sigma_{\max}^{\QEM}}{D_\mbO(\rho,\sigma)-2b_{\max}}\right)^2
}\right]^{-1}\frac{\sqrt{\ln 2}}{\sqrt{8 M}(1-\gamma)^L}\left[L\log\left(\frac{2}{1-\gamma}\right)+M-\frac{1}{2}\log\left(\frac{M\ln 2}{2}\right)+\frac{M}{2}\log\left(\frac{2}{\gamma}\right)\right]^{-1}.
\eal


\section{Noisy layered circuits under other noise models}\label{app:layered general noise}

Let us consider a layered circuit affected by a noise channel $\mN_{n,l}$ after the $l^{\mathrm{th}}$ layer $U_l$ in the $n^{\mathrm{th}}$ copy of the noisy circuits. 
Suppose that for every $n$, all $\{\mN_{n,l}\}_{l=1}^L$ and $\{U_l\}_{l=1}^L$ share the same fixed point $\sigma_n$.
We aim to mitigate errors by applying additional channels $\Lambda_{n,l}$ and $\Xi_{n,l}$ that also preserve $\sigma_n$ before and after $U_l$ in the $n^{\mathrm{th}}$ noisy circuit and applying a global trailing quantum process over $n$ copies of distorted state. 

For an arbitrary quantum channel $ \mN$ with a fixed point $\sigma$, let $\xi_\mN$ be a contraction constant for $\mN$ that satisfies
\bal
 S(\mN(\rho)\|\sigma) \leq \xi_\mN S(\rho\|\sigma),\ \forall \rho.
 \label{eq:contraction relative entropy}
\eal
The contraction constant $\xi_\mN$ was studied for various situations, particularly in relation to logarithmic Sobolev inequalities for dynamical semigroups that represent continuous Markovian evolution~\cite{Kastoryano2013quantum,Carbone2015logarithmic,Kastoryano2016quantum,MullerHermes2016relative,MullerHermes2016entropy,Bardet2021onthemodified,Beigi2020quantum,Capel2020themodified,Hirche2020oncontraction}. 

Using the contraction constant, we can state the sample lower bound in the general form, which reproduces  Theorems~\ref{thm:layer}~and~\ref{thm:layer standard deviation} as special cases.

\begin{thm}\label{thm:layer with contraction factor}
Suppose that an error-mitigation strategy achieves \eqref{eq:accuracy probability requirement} with some $\delta\geq 0$ and $0\leq \varepsilon\leq 1/2$ to mitigate noise $\{\mN_{n,l}\}_{n,l}$ that occurs after each $U_l$ in the $n^{\mathrm{th}}$ layered circuit in an error mitigation strategy described above. 
Suppose also that every noise channel has the contraction constant smaller than $\xi$, i.e., $\xi_{\mN_{n,l}}\leq \xi, \forall n,l$. 

Then, if there exist at least two states $\rho,\sigma\in\mbS$ such that $D_\mbO(\rho,\sigma)\geq 2\delta$, the required sample number $N$ is lower bounded as
\bal
 N\geq \frac{(1-2\varepsilon)^2}{2\ln (2)\,M \xi^{L}}.
\eal

Similarly, to achieve the maximum standard deviation $\sigma_{\max}^{\QEM}$ and maximum bias $b_{\max}$, the required number $N$ of samples is lower bounded as
\bal
 N\geq \frac{1}{4 M \xi^{L}}\left(\frac{1}{\frac{2\sigma_{\max}^{\QEM}}{D_\mbO(\rho,\sigma)-2b_{\max}}+1}\right)^2
\eal
for arbitrary states $\rho,\sigma\in\mbS$ such that $D_\mbO(\rho,\sigma)-2b_{\max}\geq 0$.
\end{thm}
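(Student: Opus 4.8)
The plan is to mirror the proof of Theorem~\ref{thm:layer} essentially verbatim, replacing the specific depolarizing contraction factor $(1-\gamma)^2$ by the abstract contraction constant $\xi$, and replacing the maximally mixed reference $\mbI/2^M$ by the common fixed point $\sigma_n$ of the $n$-th circuit. First I would write the effective noise channel in the layered form $\mE_n=\prod_{l=1}^L[\mN_{n,l}\circ\Xi_{n,l}\circ\mU_l\circ\Lambda_{n,l}]\circ\prod_{l=1}^L\mU_{L-l+1}^\dagger$, and set $\rho=\mU_L\circ\dots\circ\mU_1(\rho_{\rm in})$, $\sigma=\mU_L\circ\dots\circ\mU_1(\sigma_{\rm in})$ so that, by unitary invariance of the trace distance, $D_{\tr}(\otimes_n\mE_n(\rho),\otimes_n\mE_n(\sigma))=D_{\tr}(\otimes_n\mF_n(\rho_{\rm in}),\otimes_n\mF_n(\sigma_{\rm in}))$ with $\mF_n\coloneqq\prod_{l=1}^L[\mN_{n,l}\circ\Xi_{n,l}\circ\mU_l\circ\Lambda_{n,l}]$, exactly as in \eqref{eq:layer trace distance bound}.

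The central estimate is a per-layer contraction of the relative entropy toward $\sigma_n$. Following the three-line chain in \eqref{eq:relative entropy decrease one layer}, for a single layer I would write $S(\mN_{n,l}\circ\Xi_{n,l}\circ\mU_l\circ\Lambda_{n,l}(\tau)\|\sigma_n)\leq \xi_{\mN_{n,l}}\,S(\Xi_{n,l}\circ\mU_l\circ\Lambda_{n,l}(\tau)\|\sigma_n)$ by the definition of the contraction constant, and then peel off $\Xi_{n,l}$, $\mU_l$, $\Lambda_{n,l}$ one at a time using the data-processing inequality together with the fact that these maps fix $\sigma_n$ (so that $S(\,\cdot\,\|\sigma_n)=S(\,\cdot\,\|\mathcal{M}(\sigma_n))$ before applying DPI). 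Iterating over all $L$ layers and then invoking the dimension bound $S(\rho_{\rm in}\|\sigma_n)\leq M$ gives $S(\mF_n(\rho_{\rm in})\|\sigma_n)\leq \xi^L M$, the analogue of \eqref{eq:bound on relative entropy}, upon using the hypothesis at each step to replace the per-layer constant by the uniform bound $\xi$. (Here I must use the hypothesis in the direction that each single-layer noise contracts the relative entropy at least as strongly as $\xi$, which I would state explicitly.)

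With this in hand the two bounds follow from the two conversions already carried out in the paper. For the accuracy/probability bound I would combine the triangle inequality through the product reference $\otimes_n\sigma_n$ with the quantum Pinsker inequality and additivity of the relative entropy, exactly as in \eqref{eq:bound after Pinsker}, to obtain $D_{\tr}(\otimes_n\mF_n(\rho_{\rm in}),\otimes_n\mF_n(\sigma_{\rm in}))\leq 2\sqrt{\tfrac{\ln2}{2}}\sqrt{\xi^L MN}$, the analogue of \eqref{eq:trace distance exponential upper bound}; feeding this into $D_{\tr}(\otimes_n\mE_n(\rho),\otimes_n\mE_n(\sigma))\geq 1-2\varepsilon$ from \eqref{eq:trace distance bound} and solving for $N$ yields the first advertised inequality. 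For the standard-deviation/bias bound I would instead run the same chain with the purified distance, using $D_F(\tau,\eta)\leq\sqrt{S(\tau\|\eta)}$ and the triangle inequality for $D_F$ to get $D_F(\otimes_n\mE_n(\rho),\otimes_n\mE_n(\sigma))\leq 2\sqrt{MN\xi^L}$ (the analogue of \eqref{eq:purified distance upper bound exponential}), and then substitute into \eqref{eq:standard deviation lower bound observable independent} and rearrange, reproducing the proof of Theorem~\ref{thm:layer standard deviation}. Setting $\xi=(1-\gamma)^2$ and $\sigma_n=\mbI/2^M$ recovers Theorems~\ref{thm:layer} and~\ref{thm:layer standard deviation}.

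The main obstacle is the per-layer contraction step, and specifically ensuring the interleaved unitaries do not spoil the telescoping. The cancellation $\mathcal{M}(\sigma_n)=\sigma_n$ used to invoke DPI holds for $\Lambda_{n,l}$ and $\Xi_{n,l}$ by assumption, but for the unitary $\mU_l$ it requires $\mU_l(\sigma_n)=\sigma_n$; this is automatic in the unital case $\sigma_n=\mbI/2^M$ (all unitaries fix the maximally mixed state) and in the thermal case where the free evolution is generated by the same Hamiltonian as the Gibbs fixed point, but it is the invariance one must track in the general statement. I would therefore make the fixed-point invariance of the full single-layer map explicit, and check that the dimension bound $S(\,\cdot\,\|\sigma_n)\leq M$ survives; it holds as written for $\sigma_n=\mbI/2^M$, and otherwise is replaced by $\log(1/\lambda_{\min}(\sigma_n))$, which is the only place the value of the fixed point enters the final constant.
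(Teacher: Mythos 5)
Your proposal follows essentially the same route as the paper's own proof, which likewise just reruns the arguments of Theorems~\ref{thm:layer} and~\ref{thm:layer standard deviation} with $\mF_n\coloneqq\prod_{l=1}^L[\mN_{n,l}\circ\Xi_{n,l}\circ\mU_l\circ\Lambda_{n,l}]$ and the contraction constant $\xi$ in place of $(1-\gamma)^2$. The two caveats you flag are real and are glossed over in the paper as well: the telescoping needs the whole single-layer map (including $\mU_l$) to fix $\sigma_n$, and for a non-maximally-mixed fixed point the dimension bound $M$ in the final constants should be replaced by $\log(1/\lambda_{\min}(\sigma_n))$.
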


\begin{proof}

Both statements can be obtained essentially by following the same arguments in the proofs of Theorems~\ref{thm:layer}~and~\ref{thm:layer standard deviation}.
Namely, defining $\mF_n\coloneqq \prod_{l=1}^{L}\left[\mN_{n,l}\circ\Xi_{n,l}\circ\mU_{l}\circ\Lambda_{n,l}\right]$, the argument leading to \eqref{eq:trace distance exponential upper bound} gives us
\bal
 D_{\rm tr}\left(\otimes_{n=1}^N \mF_n(\rho_{\rm in}), \otimes_{n=1}^N \mF_n(\sigma_{\rm in})\right)\leq \sqrt{2\ln2}\sqrt{\xi^L M N}.
\eal
This, together with \eqref{eq:trace distance bound}, gives 
\bal
 N\geq \frac{(1-2\varepsilon)^2}{2\ln(2)M\xi^L}
\eal
whenever there exist two states $\rho,\sigma\in\mbS$ such that $D_\mbO(\rho,\sigma)\geq 2\delta$. 

Similarly, the argument to get \eqref{eq:purified distance upper bound exponential} results in 
\bal
 D_F\left(\otimes_{n=1}^N \mF_n(\rho_{\rm in})\|\otimes_{n=1}^N \mF_n(\sigma_{\rm in})\right)\leq 2\sqrt{M N\xi^L}.
\eal

Combining this to \eqref{eq:standard deviation lower bound observable independent} leads to
\bal
 N\geq \frac{1}{4 M\xi^{L}}\left(\frac{1}{\frac{2\sigma_{\max}^{\QEM}}{D_\mbO(\rho,\sigma)-2b_{\max}}+1}\right)^2.
\eal
for arbitrary state $\rho,\sigma$ such that $D_\mbO(\rho,\sigma)-2b_{\max}\geq 0$.

\end{proof}

Theorems~\ref{thm:layer}~and~\ref{thm:layer standard deviation} are recovered by noting that the contraction constant for the local depolarizing noise $\mD_\gamma^{\otimes M}$ is known as $\xi_{\mD_\gamma^{\otimes M}}=(1-\gamma)^2$~\cite{Kastoryano2013quantum}. 

We remark that analogous results to Theorem~\ref{thm:layer with contraction factor} can also be obtained in terms of the contractivity constant for other distance measures, such as trace distance, purified distance, and R\'enyi-$\alpha$ divergences with $\alpha>1$. 
Indeed, we can get corresponding bounds for stochastic Pauli noise by employing the contractivity for R\'enyi-$2$ sandwiched relative entropy\cite{Hirche2020oncontraction}
\bal
 S_2(\Lambda^{\otimes n}(\rho)\|\mbI/d^n)\leq \exp\left[2(1-1/n)\frac{\ln r}{\ln n}\right] S_2(\rho\|\mbI/d^n)
 \label{eq:contraction 2-Renyi}
\eal
for every $n$ and $d$-dimensional unital channel $\Lambda$ that satisfies $r^{-1}\|\Lambda-(1-r)\mD\|_{2\to 2}\leq 1$.
Here, $S_2(\tau\|\eta)\coloneqq \log \Tr\left(\eta^{-1/2}\,\tau\,\eta^{-1/2}\,\tau\right)$ is the R\'enyi-2 sandwiched relative entropy~\cite{MullerLennert2013onquantum}, $\mD$ is the completely depolarizing channel, and $\|\Phi\|_{2\to2}\coloneqq \sup_{X\neq 0}\frac{\|\Phi(X)\|_2}{\|X\|_2}$ is the $2\to 2$ norm for an arbitrary linear map $\Phi$.
For an error probability ${\bf q}=(q_x,q_y,q_z)$ for each Pauli error, define a qubit stochastic Pauli noise as $\mT_{\bf q}(\tau)\coloneqq (1-q_x-q_y-q_z)\tau + q_x X\tau X + q_y Y\tau Y + q_z Z\tau Z$. 
Then, by taking $r=q\coloneqq|1-2\min\{q_x+q_y,q_y+q_z,q_x+q_z\}|$ in Eq.~\eqref{eq:contraction 2-Renyi}, we obtain 
\bal
 S_2(\mT_{\bf q}^{\otimes n}(\rho)\|\mbI/2^n)\leq q^{1/\ln2}S_2(\rho\|\mbI/2^n).
\eal
We remark that this type of bound was also employed in Ref.~\cite{Wang2021noise-induced}.
Since $S(\tau\|\eta)\leq S_2(\tau\|\eta),\ \forall \tau,\eta$~\cite{MullerLennert2013onquantum}, we can upper bound the form in \eqref{eq:bound after Pinsker} by \eqref{eq:contraction 2-Renyi}.
From there, we can follow the same argument as the ones in the proofs of Theorems~\ref{thm:layer}~and~\ref{thm:layer standard deviation}, noting the data-processing inequality and the additivity for tensor-product states known for R\'eny-2 sandwiched relative entropy~\cite{MullerLennert2013onquantum}. 
This results in the sampling lower bounds $\Omega\left(q^{L/\ln2}\right)$ for stochastic Pauli noise. 

More generally, the exponential sample growth can be established for a large class of tensor-product channels. 
Let $\{\Phi_t\}_{t\geq 0}$ be a set of unital qubit channels constructing a dynamical semigroup, i.e., $\Phi_0 = \id$ and $\Phi_{s+r} = \Phi_s\circ\Phi_r, \forall s,t\geq 0$, generated by a Liouvillian $\mL$, i.e., $\Phi_t=e^{t\mL}$~\cite{Gorini1976completely}. 
Suppose further that $\mbI/2$ is the unique solution of $\mL(\tau)=0$, i.e., the unique fixed point for $\{\Phi_t\}_t$.
Then, for a noisy channel $\mN=\Phi_t^{\otimes M}$ defined for some $t>0$ comes with $\xi_\mN<1$~\cite{MullerHermes2016entropy}, leading to the exponential sample cost together with Theorem~\ref{thm:layer with contraction factor}. 
An example includes a channel generated by the Liouvillian constructed by a stochastic Pauli channel $\mL(\tau)=\sum_{i=0}^3 q_iP_i\tau P_i  - \tau$ where $q_i$ is the probability that $i^{\mathrm{th}}$ Pauli $P_i$ is applied.

We can also apply this to noise channels globally applied to $M$ qubits. 
Consider the $M$-qubit (not necessarily unital) depolarizing channel with a full-rank fixed point $\sigma$ defined as $\mD_{\gamma,\sigma}(\tau)\coloneqq (1-\gamma)\tau + \gamma \sigma$. 
For instance, the fixed state $\sigma$ can be taken as the thermal Gibbs state $\sigma_{\beta,H}=e^{-\beta H}/\Tr(e^{-\beta H})$ for some inverse temperature $\beta$ and Hamiltonian $H$. 
Then, it was shown in Ref.~\cite{MullerHermes2016relative} that 
\bal
 S(\mD_{\gamma,\sigma}(\rho)\|\sigma)\leq (1-\gamma)^{2\alpha_1} S(\rho\|\sigma)
\eal
where $\alpha_1$ is computed by 
\bal
 \alpha_1 = \min_{x\in[0,1]}\frac{1}{2}(1+q_{\lambda_{\min}(\sigma)}(x)).
\eal
Here, $\lambda_{\min}(\sigma)$ is the minimum eigenvalue of $\sigma$, and $q_x(y)$ is defined for $x,y\in(0,1)$ as 
\bal
 q_y(x)\coloneqq\begin{cases}
 \frac{D_2(y\|x)}{D_2(x\|y)} & x\neq y\\
 1 & x=y
 \end{cases}
\eal
where $D_2(x\|y)\coloneqq x\log(x/y)+(1-x)\log[(1-x)/(1-y)]$ is the binary relative entropy. 
As shown in Ref.~\cite{MullerHermes2016relative}, $\alpha_1$ monotonically increases with $\lambda_{\min}(\sigma)$ and behaves as $\alpha_1\to 1/2$ in the limit $\lambda_{\min}(\sigma)\to 0$ and $\alpha_1\to 1$ in the limit $\lambda_{\min}(\sigma)\to 1/2$. 
Taking $\xi_{\mD_{\gamma,\sigma}}=(1-\gamma)^{2\alpha_1}$ in Theorem~\ref{thm:layer with contraction factor} then allows us to obtain sample lower bounds for the generalized global depolarizing noise with an arbitrary full-rank fixed state. 

Similarly to the tensor-product channels, the exponential sample cost can be extended to a wide class of quantum channels. 
Let $\mN$ be a unital channel and $\mN^\dagger$ be its dual, i.e., $\Tr(X\mN(Y))=\Tr(\mN^\dagger(Y)X), \forall X,Y$. 
Then, if $\mN^\dagger\circ\mN$ is primitive, i.e., $\lim_{n\to\infty} (\mN^\dagger\circ\mN)^n(\tau) = \mbI/2^M,\ \forall \tau$, we have $\lambda_\mN<1$~\cite{MullerHermes2016entropy}.

Our results also admit a thermodynamic interpretation.
Consider the situation where the ideal circuit is the identity map and our system suffers from the continuous thermal map $\{\Phi_t\}_t$ generated by a Liouvillian $\mL$ with a thermal Gibbs state $\sigma_{\beta,H}=e^{-\beta H}/\Tr(e^{-\beta H})$ as a fixed state, i.e., $\mL(\sigma_{\beta,H})=0$~\cite{Kastoryano2013quantum}.
This represents the state preservation scenario, in which we would like to extract the classical information about the initial state after some period of time. 

Let $\rho$ be the initial state and $\rho_t=\Phi_t(\rho)$ be a state at time $t\geq 0$. 
Note that for an arbitrary state $\tau$, we have $S(\tau\|\sigma_{\beta,H})=\beta(F(\tau)-F_{\rm eq})$ where $F(\tau)\coloneqq \Tr(\tau H)-S(\tau)/\beta$ is the nonequilibrium free energy of $\tau$ and $F_{\rm eq}\coloneqq F(\sigma_{\beta,H})$ is the equilibrium free energy.
Then, Theorem~\ref{thm:sampling bound} implies that, to extract the expectation value of the initial state $\rho$, we need to use 
\bal
N =\Omega\left(\frac{1}{F(\rho_t)-F_{\rm eq}}\right)
\eal
samples.
In other words, the necessary sampling cost grows as the free energy gets lost due to thermalization.
The property of the dynamical semigroup further gives 
\bal
 F(\rho_t)-F_{\rm eq} \leq e^{-\alpha_{\rm ent} t}\left[F(\rho)-F_{\rm eq}\right]
\eal 
with
\bal
\alpha_{\rm ent} = \min_\tau \frac{\dot{\Sigma}(\tau)}{\beta\left[F(\tau)-F_{\rm eq}\right]}
\eal
where $\dot{\Sigma}(\tau)\coloneqq \left[\frac{d}{dt}S(\tau_t)-\beta \frac{d}{dt}\Tr(\tau_t H)\right]|_{t=0}=-\Tr[\mL(\tau)\ln(\tau)]-\Tr[\mL(\tau)H]$ is the entropy production rate.  
In particular, when  $\alpha_{\rm ent}>0$ holds, i.e., $\dot{\Sigma}(\tau)>0,\ \forall \tau\neq \sigma_{\beta,H}$, this implies
\bal
N = \Omega\left(e^{\alpha_{\rm ent} t}\right),
\eal
the exponential sample cost with time $t$.
This cannot be avoided even if we apply intermediate operations that preserve the Gibbs state, as such operations cannot increase the free energy. 
We also remark that there has been an intense study on the evaluation of the exponent $\alpha_1$ for various types of thermal maps, lattices, and Hamiltonian (e.g., Refs.~\cite{Kastoryano2013quantum,Temme2014hypercontractivity,Temme2015howfast,Capel2018quantum,Capel2020themodified,Beigi2020quantum}). 
These results can directly be applied to give lower bounds for the sampling overhead required for quantum error mitigation.

Thermal channels construct an important class of noise channels of physical significance, and the exponential contraction of the relative entropy, i.e., \eqref{eq:contraction relative entropy} with $\xi_\mN<1$, was established for the thermal map that constructs a Markov semi-group~\cite{Kastoryano2013quantum}. 
It was conjectured in Ref.~\cite{Kastoryano2013quantum} that an arbitrary Markov semi-group with a unique fixed point would also show the exponential decay of the relative entropy.
If this is true, then Theorem~\ref{thm:layer with contraction factor} applies to an even wider class of noise channels, particularly when $U_l=\mbI\,\forall l$ corresponding to the state preservation scenario.

Although this conjecture appears to be plausible, it is still generally elusive to rigorously show the exponential contraction of the relative entropy for non-unital channels. 
Nevertheless, our framework admits an alternative bound that also grows exponentially with circuit depth, as we show in the following.

The idea is to employ the exponential contraction of the trace distance instead of the relative entropy. 
For a channel $\mN$ that has the unique fixed point $\tau$, it holds that~\cite{Temme2010chi}
\bal
 D_{\rm tr}\left(\mN^k(\rho), \tau\right) \leq 2\left[s_2(\mN)\right]^k\left[\lambda_{\min}(\tau)\right]^{-1/2}\quad \forall \rho
 \label{eq:exponential contraction trace distance}
\eal
for an arbitrary positive integer $k$, where $\mN^k$ refers to the $k$ sequential applications of $\mN$, $\lambda_{\min}(\tau)$ is the smallest eigenvalue of $\tau$, and $s_{2}(\mN)$ is the second largest singular value of the map $\Gamma_\tau^{-1/2}\circ \mN \circ \Gamma_\tau^{1/2}$ with $\Gamma_\tau$ defined by $\Gamma_{\tau}(\rho) \coloneqq \tau^{1/2}\rho \tau^{1/2}$.

Consider the setting of the layered circuit with $U_l=\mbI\,\forall l$, where we are to use $N$ $M$-qubit circuits as an input to error mitigation.  
Let $\mN$ be the $M$-qubit noise channel that applies to each layer of a noisy circuit. 
This is the case where the effective noise channel of the depth-$L$ noisy circuit is represented by $\mN^L$.
Then, the bound in Theorem~\ref{thm:sampling bound} in terms of the relative entropy can be bounded as 
\bal
 N\geq \max_{\substack{\rho,\sigma\in\mbS\\D_\mbO(\rho,\sigma)\geq 2\delta}}\frac{2(1-2\varepsilon)^2}{\ln 2 \cdot S(\mN^L(\rho)\|\mN^L(\sigma))} &\geq \max_{\substack{\rho,\sigma\in\mbS\\D_\mbO(\rho,\sigma)\geq 2\delta}}\frac{(1-2\varepsilon)^2\lambda_{\min}(\mN^L(\sigma))}{2\ln 2 \, D_{\rm tr}(\mN^L(\rho),\mN^L(\sigma))^2} \\
 &\geq \frac{(1-2\varepsilon)^2\lambda_{\min}(\mN)\lambda_{\min}(\tau)}{8\ln 2 \left[s_2(\mN)\right]^{2L} }
 \label{eq:sample lower bound non-unital}
\eal
where the second inequality is due to the continuity bound of the relative entropy \eqref{eq:upper bound relative entropy trace squared}, and in the last inequality we used \eqref{eq:exponential contraction trace distance} and defined $\lambda_{\min}(\mN)\coloneqq \min_\rho \lambda_{\min}(\mN(\rho))$, which refers to the minimum eigenvalue of the output state of the channel $\mN$.
This bound shows the exponential sampling cost for an arbitrary channel $\mN$ that has a unique fixed state and $s_2(\mN)<1$.

We remark that $\lambda_{\min}(\mN)$ and $\lambda_{\min}(\tau)$ can be exponentially small with the qubit number $M$, which prevents one from using this bound to argue that the circuit with $L=\Omega(M)$ must use an exponential sample cost, unlike the one in Theorem~\ref{thm:layer with contraction factor}.  
Nevertheless, it still ensures exponential growth with respect to the layer number $L$ if we see it as a separate parameter from the qubit number $M$. 
We also stress that the tighter bound in Theorem~\ref{thm:layer with contraction factor} would become valid once the exponential decay of the relative entropy for non-unital channels is established, going along with future developments in the understanding of non-unital Markovian semi-groups. 

As an example of an important non-unital channel, let us discuss the generalized amplitude damping noise~\cite{Nielsen2011quantum}, which describes the thermal relaxation at finite temperature in, e.g., superconducting systems~\cite{qiskit_noise}. 
The generalized amplitude damping $\mA_{\gamma,\kappa}$ characterized by two parameters $\gamma,\kappa$ with $0< \gamma,\kappa < 1$ has the Kraus operators 
\bal
 K_1 &= \sqrt{1-\kappa}\left(\dm{0}+\sqrt{1-\gamma}\dm{1}\right)\quad
 K_2 = \sqrt{\gamma(1-\kappa)}\ketbra{0}{1}\\
 K_3 &= \sqrt{\kappa}\left(\sqrt{1-\gamma}\dm{0}+\dm{1}\right)\quad
 K_4 = \sqrt{\gamma \kappa}\ketbra{1}{0}.
\eal

The generalized amplitude damping channel maps the Pauli operators as 
\bal
 \mA_{\gamma,\kappa}(X) &= \sqrt{1-\gamma} X\quad \mA_{\gamma,\kappa}(Y) = \sqrt{1-\gamma} Y\\
 \mA_{\gamma,\kappa}(Z) &= (1-\gamma) Z\quad 
 \mA_{\gamma,\kappa}(\mbI) = \mbI + \gamma(1-2\kappa)Z.
\eal
This ensures that at the limit of the infinite number of applications of $\mA_{\gamma,\kappa}$, the three non-trivial Pauli operators $X$, $Y$, and $Z$ vanish, while the identity operator approaches $\mbI + \gamma(1-2\kappa)\left(\sum_{k=0}^\infty (1-\gamma)^k\right) Z = \mbI + (1-2\kappa)Z$.
This shows that the fixed state of $\mA_{\gamma,\kappa}$ is $\frac{\mbI + (1-2\kappa)Z}{2}$, and more generally, the fixed point of the tensor-product channel $\mA_{\gamma,\kappa}^{\otimes k}$ for an arbitrary positive integer $k$ is $\left(\frac{\mbI + (1-2\kappa)Z}{2}\right)^{\otimes k}$, because whatever the $k$-qubit initial state we start from, only the $\mbI/2^k$ term in the Pauli decomposition of the initial state survives after the infinite number of applications of $\mA_{\gamma,\kappa}^{\otimes k}$, which converges to $\left(\frac{\mbI + (1-2\kappa)Z}{2}\right)^{\otimes k}$.
This means that $\mA_{\gamma,\kappa}^{\otimes k}$ has a unique fixed state, to which the bound \eqref{eq:sample lower bound non-unital} applies.

In particular, the sampling lower bound for mitigating the local generalized amplitude damping noise applied to $M$-qubit circuit is obtained by setting $\mN=\mA_{\gamma,\kappa}^{\otimes M}$ in \eqref{eq:sample lower bound non-unital} with $\tau = \left(\frac{\mbI + (1-2\kappa)Z}{2}\right)^{\otimes M}$. 
This immediately gives $\lambda_{\min}(\tau) = \tilde \kappa^M$ where $\tilde \kappa \coloneqq \min\{\kappa, 1-\kappa\}$.

To get $\lambda_{\min}(\mA_{\gamma,\kappa})$, we note that for a general channel $\mN$, if $\mN$ satisfies  $\mN(\rho) \geq  p \mbI$ for every state $\rho$, we have $\lambda_{\min}(\mN)\geq p$. 
This happens when its Choi state $J_\mN=\id\otimes\mN (\tilde \Phi)$ for $\mN$, where $\tilde \Phi=\ketbra{ii}{jj}$ is the unnormalized maximally entangled state,
has the minimum eigenvalue $\lambda_{\min}(J_\mN)\geq p$.
This allows us to lower bound $\lambda_{\min}(\mN)$ by looking at the eigenvalue of the Choi state. 
A straightforward calculation reveals that the Choi state of $\mA_{\gamma,\kappa}$ is given by 
\bal
 J_{\mA_{\gamma,\kappa}}=\begin{pmatrix}
 1-\gamma\kappa & 0 & 0 & \sqrt{1-\gamma}\\
 0 & \gamma \kappa & 0 & 0 \\
 0 & 0 & \gamma(1-\kappa) & 0 \\
 \sqrt{1-\gamma} & 0 & 0 & 1-\gamma(1-\kappa)
 \end{pmatrix},
\eal
which gives $\lambda_{\min}(J_{\mA_{\gamma,\kappa}^{\otimes M}})=\left[\lambda_{\min}(J_{\mA_{\gamma,\kappa}})\right]^M = \left(1-\gamma/2 - \sqrt{1-\gamma + \gamma^2(\kappa - 1/2)^2}\right)^M$.

Finally, $s_2(\mA_{\gamma,\kappa}^{\otimes M})$ can be obtained by computing the singular values of $\Gamma_\tau^{-1/2}\circ\mA_{\gamma, \kappa}^{\otimes M}\circ \Gamma_\tau^{1/2}$. 
Letting $\tau_1 \coloneqq \frac{\mbI+(1-2\kappa)Z}{2}$ so that $\tau=\tau_1^{\otimes M}$, we get $\Gamma_\tau^{-1/2}\circ\mA_{\gamma, \kappa}^{\otimes M}\circ \Gamma_\tau^{1/2}=\left(\Gamma_{\tau_1}^{-1/2}\circ\mA_{\gamma, \kappa}\circ \Gamma_{\tau_1}^{1/2}\right)^{\otimes M}$.
This implies that the singular values of the map $\Gamma_\tau^{-1/2}\circ\mA_{\gamma, \kappa}^{\otimes M}\circ \Gamma_\tau^{1/2}$ on $M$-qubit systems are obtained by multiplying the singular values of the single-qubit map $\Gamma_{\tau_1}^{-1/2}\circ\mA_{\gamma, \kappa}\circ \Gamma_{\tau_1}^{1/2}$.
In addition, since the largest singular value of $\Gamma_{\tau_1}^{-1/2}\circ\mA_{\gamma, \kappa}\circ \Gamma_{\tau_1}^{1/2}$ is 1, we get $s_2(\mA_{\gamma,\kappa}^{\otimes M})=s_2(\mA_{\gamma,\kappa})$ allowing us to only compute the second largest singular value of the single-qubit map $\Gamma_{\tau_1}^{-1/2}\circ\mA_{\gamma, \kappa}\circ \Gamma_{\tau_1}^{1/2}$.
This can be obtained as $s_2(\mA_{\gamma,\kappa}^{\otimes M})=s_2(\mA_{\gamma,\kappa})=\sqrt{1-\gamma}$ by a simple computation. 

Plugging these quantities into \eqref{eq:sample lower bound non-unital} results in
\bal
 N\geq \frac{(1-2\varepsilon)^2\left[\tilde\kappa\left(1-\gamma/2 - \sqrt{1-\gamma + \gamma^2(\kappa - 1/2)^2}\right)\right]^M}{8\ln 2 \left(1-\gamma\right)^{L} },\quad \tilde\kappa\coloneqq\min\{\kappa,1-\kappa\},
 \label{eq:sample lower bound generalized amplitude damping}
\eal
which particularly grows exponentially with the layer number $L$, noting that $0<\gamma<1$.

\section{Relation to previous works on noisy layered circuits}\label{app:previous works}

Theorems~\ref{thm:layer}~and~\ref{thm:layer standard deviation} imply that if an $n$-qubit quantum circuit with super-logarithmic depth suffers from local depolarizing noise (and other types of noise as discussed in Appendix~\ref{app:layered general noise}), super-polynomial number of samples are required, making the computation infeasible even with the help of quantum error mitigation.
The observation that computation quickly becomes useless under the noise effects dates back to the seminal work by Aharonov et al.~\cite{Aharonov1996limitations} (see, e.g., Refs.~\cite{Unruh1995maintaining,Palma1996quantum} for the works that put forward and investigated related ideas). 
Here, we clarify differences between our results and the previous works, in particular Ref.~\cite{Aharonov1996limitations}.

The setting discussed in Ref.~\cite{Aharonov1996limitations} involves a quantum circuit with $d$ layers of unitary gates, where each layer is followed by the action of local depolarizing noise. 
After the $d$\,th layer, each qubit is measured with the computational basis. 
The authors showed that when the depth is super-logarithmic with the qubit number, the probability of obtaining 0 (and 1) cannot be away from 1/2 by a constant amount.
This implies that to realize the ``useful'' computation --- which should not be simulable by a random guess --- the noisy quantum circuit should be restricted to the one with logarithmic depth. 
The authors showed this result by employing the fact --- which was also shown in the same manuscript --- that the relative entropy $S(\rho\|\mbI/2^n)=n-S(\rho)$ decreases exponentially with circuit depth. 
They combined this with the fact that the computational-basis measurement (followed by partial trace) does not increase the relative entropy, showing that the entropy of the probability distribution of the measurement outcome of the first qubit can only be deviated from the maximal value by the amount scaling as $n2^{-d}$.
Therefore, to ensure the probability of obtaining 0 or 1 to be away from 1/2 by a constant amount, one needs to have $d=\mO(\log n)$ to allow the entropy to be away from the maximal value with a large $n$ limit.

Although the above consideration contains important insights, and our results --- as well as many other follow-up works after this such as Refs.~\cite{Kastoryano2013quantum,DePalma2023limitations,Franca2021limitations,Wang2021noise-induced,Wang2021can} --- benefit from their inspirations, the argument there is not directly applicable to the setting involving quantum error mitigation. 
As explained in the main text, the idea of quantum error mitigation is to use quantum and classical resources in hybrid, in which postprocessing computation after the noisy circuit is essential. 
In our framework, this part is included in the trailing process ($\mP_A$ in Fig.~\ref{fig:framework} in the main text) --- indeed, our model allows an arbitrary quantum operation as the trailing process, which also includes classical postprocessing computation.
Importantly, \emph{the trailing process is not unital in general}, and therefore, entropy can decrease in the postprocessing step, preventing the results in Ref.~\cite{Aharonov1996limitations} from being directly carried over.

We solve this problem by first reducing the performance of error mitigation to the distinguishability of two noisy quantum states, which eventually leads to Theorems~\ref{thm:sampling bound}~and~\ref{thm:sample bound standard deviation} providing the necessary sampling cost to achieve the target error mitigation performance with respect to entropic measures.
It is a priori not obvious how entropic quantity is quantitatively related to the operational performance quantifiers such as accuracy--probability and bias--standard deviation of quantum error mitigation. 
Theorems~\ref{thm:sampling bound}~and~\ref{thm:sample bound standard deviation} establish the connections between these quantities, which then allow us to focus on analyzing the entropic measures, where we can apply the previous findings, such as the exponential decay of relative entropy from the maximally mixed state under local depolarizing noise, to show the exponential sampling overhead for the general error mitigation. 

In addition, our quantitative bounds --- which give nearly tight estimates for certain cases --- provide concrete sampling lower bounds beyond just the exponential scaling behavior for layered circuits, which could also be used as a benchmark for error mitigation strategies.   

Let us also remark on how our results are related to the prior works that investigated the capability of general error mitigation, which involves postprocessing operations.
Refs.~\cite{Takagi2021fundamental,Wang2021can} studied the maximum estimator spread, i.e., the range of outcomes of the estimator, imposed on general error mitigation protocols. 
These works showed that, for error mitigation with linear postprocessing, the maximum estimator spread must grow exponentially with the circuit depth for noisy layered circuits under local depolarizing noise. 
Although these results hinted that the exponential samples would be required in general error mitigation, they were not conclusive because the maximum estimator spread provides only a \emph{sufficient} number of samples to ensure a certain accuracy.
Our results close this gap by showing that exponential samples are \emph{necessary}.
Ref.~\cite{DePalma2023limitations} studied the general framework introduced in Ref.~\cite{Takagi2021fundamental} and obtained a concentration bound, which places an upper bound for the probability of getting an estimate away from the value for the maximally mixed state by a certain amount. Their bound implies that, when the circuit depth is $\Omega(\log N)$, the probability of achieving a certain constant accuracy will become exponentially small with respect to the accuracy multiplied by the Lipschitz constant of the estimator. 
However, as the authors pointed out, the Lipschitz constant of the estimator becomes exponentially large for many error mitigation protocols, which severely restricts the applicability of their bound. Also, their bound does not directly provide a sampling lower bound to achieve a certain accuracy. 
Our results, on the other hand, encompass the standard estimators with exponentially large Lipschitz constants and provide the first explicit sampling lower bounds that apply to general error mitigation protocols.

\end{document}